\let\csname equation*\endcsname\relax
\let\csname endequation*\endcsname\relax
\newtheorem{thm}{Theorem}[section]
\newtheorem{lem}{Lemma}[section]
\DeclareMathOperator*{\argmax}{arg\,max}
\def\Bmp#1{ \begin{minipage}{#1} }
\def\Bmpc#1{ \begin{minipage}[c]{#1} }
\def\Bmpt#1{ \begin{minipage}[t]{#1} }
\def\Bmpb#1{ \begin{minipage}[b]{#1} }
\def\Emp{ \end{minipage} }
\def\I{{\mathcal{I}}}
\def\E{{\mathcal{E}}}
\def\O{\mbox{\textit{O}}}
\def\R{{\mathcal{R}}}
\def\K{{\mathcal{K}}}
\def\tf0{\tilde{\varphi}_{0}}
\def\RR{{\mathbb{R}}}
\def\NN{{\mathbb{N}}}
\def\SS{{\mathbb{S}}}
\def\x{{\bf x}}
\def\c{{\bf c}}
\def\u{{\bf u}}
\def\0{{\bf 0}}
\def\bnabla{\boldsymbol{\nabla}}
\newcommand{\tsigma}{\widetilde{\sigma}}
\newcommand{\tgamma}{\widetilde{\gamma}}
\newcommand{\tuEbar}{\widetilde{u}_{\bar{\E}}}
\newcommand{\tuEabar}{\widetilde{u}_{\bar{\E}_{\alpha}}}
\newcommand{\tuvecEbar}{\widetilde{\mathbf{u}}_{\bar{\E}}}
\newcommand{\SE}{\mathcal {S}_{\E}}
\newcommand{\Ea}{\E_{\alpha}}
\newcommand{\PPS}{\mathbb{P}\mathcal {S}}
\newcommand{\overlim}[1]{{\buildrel{#1}\over\longrightarrow\;}}
\newcommand{\Rmnum}[1]{\expandafter\@slowromancap\romannumeral #1@}
\title{Maximum Rate of Growth of Enstrophy in Solutions of the Fractional Burgers Equation}
\author{Dongfang Yun and Bartosz Protas\footnote{Corresponding author. Email address: bprotas@mcmaster.ca}}
\affil{Department of Mathematics and Statistics, McMaster University \\ Hamilton, Ontario, L8S 4K1, Canada}
\date{}
\begin{document}
\maketitle

\begin{abstract}
  This investigation is a part of a research program aiming to
  characterize the extreme behavior possible in hydrodynamic models by
  analyzing the maximum growth of certain fundamental quantities. We
  consider here the rate of growth of the classical and fractional
  enstrophy in the fractional Burgers equation in the subcritical and
  supercritical regimes. Since solutions to this equation exhibit,
  respectively, globally well-posed behavior and finite-time blow-up
  in these two regimes, this makes it a useful model to study the
  maximum instantaneous growth of enstrophy possible in these two
  distinct situations.  First, we obtain estimates on the rates of
  growth and then show that these estimates are sharp up to numerical
  prefactors. This is done by numerically solving suitably defined
  constrained maximization problems and then demonstrating that for
  different values of the fractional dissipation exponent the obtained
  maximizers saturate the upper bounds in the estimates as the
  enstrophy increases. We conclude that the power-law dependence of
  the enstrophy rate of growth on the fractional dissipation exponent
  has the same global form in the subcritical, critical and parts of
  the supercritical regime. This indicates that the maximum enstrophy
  rate of growth changes smoothly as global well-posedness is lost
  when the fractional dissipation exponent attains supercritical
  values.  In addition, nontrivial behavior is revealed for the
  maximum rate of growth of the fractional enstrophy obtained for
  small values of the fractional dissipation exponents.  We also
  characterize the structure of the maximizers in different cases.
\end{abstract}

{\bf{Keywords:}} 
Fractional Burgers equation; extreme behavior; enstrophy growth; numerical optimization; gradient methods

{\bf{AMS subject classifications:}}
35B45, 35Q35, 65K10

\pagestyle{myheadings}
\thispagestyle{plain}


\section{Introduction}
\label{sec:intro}

One of the key questions studied in the mathematical analysis of
evolutionary partial differential equations (PDEs) is the existence of
solutions, both locally and globally in time. The motivation is that,
in order to justify the application of different PDEs as models of
natural phenomena, these equations must be guaranteed to possess
meaningful solutions for physically relevant data. In addition,
characterization of the extreme behavior which can be exhibited by the
solutions of different PDEs is also relevant for our understanding of
the worst-case scenarios which can be realized in the actual physical
systems these PDEs describe. These two types of questions can be
investigated by studying the time evolution of suitable Sobolev norms
of the solutions. In particular, should a given Sobolev norm of the
solution become unbounded at a certain time due to a spontaneous
formation of a singularity, this will signal that the solution is no
longer defined in that Sobolev space; this loss of regularity is
referred to as ``blow-up''.
\medskip

An example of an evolutionary PDE model with widespread applications
whose global-in-time existence remains an open problem is the
three-dimensional (3D) Navier-Stokes system describing the motion of
viscous incompressible fluids.  Questions of existence of solutions to
this system are usually studied for problems defined on unbounded or
{periodic domains} $\Omega$, i.e., $\Omega = \RR^d$ or $\Omega = \SS^d$,
where $d=2,3$. Unlike the two-dimensional (2D) problem where smooth
solutions are known to exist globally in time \cite{kl04}, in 3D
existence of such solutions has been established for short times only
\cite{d09}. Establishing global existence of smooth solutions in 3D is
one of the key open questions in mathematical fluid mechanics and, in
fact, its importance for mathematics in general has been recognized by
the Clay Mathematics Institute as one of its ``millennium problems''
\cite{f00}. Suitable weak solutions were shown to exist in 3D for
arbitrarily long times \cite{l34}, however, such solutions may not be
regular in addition to being nonunique. Similar questions also remain
open for the 3D Euler equation \cite{gbk08}.  While many angles of
attack on this problem have been pursued in mathematical analysis, one
research direction which has received a lot of attention focuses on
the evolution of the {\em enstrophy} $\E(\u)$ which for an
incompressible velocity field $\u(t,\cdot) \; : \; \Omega \rightarrow
\RR^d$ at a given time $t$ is defined as $\E(\u(t)) := (1/2)
\int_{\Omega} | \bnabla\times\u(t,\x)|^2 \, d\Omega = (1/2) \|
\bnabla\u(t,\cdot)\|_{L^2(\Omega)}^2$, where ``$:=$'' means ``equals
to by definition'', i.e., it is proportional to the square of the
$L^2$ norm of the vorticity $\bnabla\times\u$. The reason why this
quantity is interesting in the context of the 3D Navier-Stokes
equation is due to a conditional regularity result proved by Foias and
Temam \cite{ft89} who showed that the solution remains smooth (i.e.,
stays in a suitable Gevrey regularity class) as long as the enstrophy
remains bounded, i.e., for all $t$ such that $\E(\u(t)) < \infty$. In
other words, a loss of regularity must be manifested by the enstrophy
becoming infinite.  While there exist many different conditional
regularity results, this one is particularly useful from the
computational point of view as it involves an easy to evaluate
  quadratic quantity. Analogous conditional regularity results,
although involving other norms of vorticity, were also derived for the
3D Euler equation (e.g., the celebrated Beale-Kato-Majda (BKM)
criterion \cite{bkm84}).
\medskip

In order to assess whether or not the enstrophy can blow up in finite
time one needs to study its instantaneous rate of growth $d\E/dt$
which can be estimated as \cite{d09}
\begin{equation}
\frac{d\E}{dt} < C \, \E^3,
\label{eq:dEdt3D}
\end{equation}
for some $C>0$ (hereafter $C$ will denote a generic positive constant
whose actual value may vary between different estimates). It was shown
in \cite{l06,ld08}, see also \cite{ap16}, that this estimate is in
fact sharp, in the sense that, for each {given enstrophy
  $\bar{\E}>0$} there exists an incompressible velocity field
{$\tuvecEbar$} with {$\E(\tuvecEbar) =\bar{\E}$}, such that
{$d\E(\tuvecEbar) / dt \sim \bar{\E}^3$ as $\bar{\E}\rightarrow
  +\infty$}. The fields {$\tuvecEbar$} were found by numerically
solving a family of variational maximization problems for different
values of {$\bar{\E}$} (details of this approach will be
discussed further below). However, the corresponding {\em finite-time}
estimate obtained by integrating \eqref{eq:dEdt3D} in time takes the
form
\begin{equation}
{\E(\u(t)) \leq \frac{\bar{\E}}{\sqrt{1 - \frac{C}{2}\,\bar{\E}^2\, t}}}, \quad t \ge 0,
\label{eq:Et3D}
\end{equation}
and it is clear that based on this estimate alone it is not possible
to ensure a prior boundedness of enstrophy in time. Thus, the question
of finite-time blow-up may be recast in terms of whether or not it is
possible to find initial data $\u_0$ such that the corresponding flow
evolution saturates the right-hand side of \eqref{eq:Et3D}. We emphasize
that for this to happen the rate of growth of enstrophy given in
\eqref{eq:dEdt3D} would need to be sustained over a {\em finite} window
of time, rather than just instantaneously, a behavior which has not
been observed so far \cite{ap16} (in fact, a singularity may arise in
finite time even with enstrophy growth occurring at a slower sustained
rate of $d\E / dt \sim \E^{\gamma}$ where $\gamma > 2$ \cite{ap16}).
\medskip

The question of the maximum enstrophy growth has been tackled in
computational studies, usually using initial data chosen in an ad-hoc
manner, producing no evidence of a finite-time blow-up in the 3D
Navier-Stokes system \cite{bp94,p01,oc08,opc12,dggkpv13}. However, for
the 3D Euler system the situation is different and the latest
computations reported in \cite{lh14a,lh14b} indicate the possibility
of a finite-time blow-up. A new direction in the computational studies
of extreme behavior in fluid flow models relies on the use of
variational optimization approaches to systematically search for the
most singular initial data. This research program, initiated in
\cite{l06,ld08}, aims to probe the sharpness, or realizability, of
certain fundamental estimates analogous to \eqref{eq:dEdt3D} and
\eqref{eq:Et3D} and defined for various hydrodynamic models such as
the one-dimensional (1D) viscous Burgers equation and the 2D/3D
Navier-Stokes system. Since the 1D viscous Burgers equation and the 2D
Navier-Stokes system are both known to be globally well-posed in the
classical sense \cite{kl04}, there is no question about the
finite-time blow-up in these problems. However, the relevant estimates
for the growth of certain Sobolev norms, both instantaneously and in
finite time, are obtained using very similar functional-analysis tools
as estimates \eqref{eq:dEdt3D} and \eqref{eq:Et3D}, hence the question
of their sharpness is quite pertinent as it may offer valuable
insights about estimates \eqref{eq:dEdt3D}--\eqref{eq:Et3D}.  An
estimate such as \eqref{eq:dEdt3D} (or \eqref{eq:Et3D}) is declared
``sharp'', if for increasing values of $\E$ (or {$\bar{\E}$})
the quantity on the left-hand side (LHS) exhibits the same power-law
dependence on $\E$ (or {$\bar{\E}$}) as the upper bound on the
right-hand side (RHS). What makes the fractional Burgers equation
interesting in this context is that it is a simple model which
exhibits either a globally well-posed behavior or finite-time blow-up
depending on the value of the fractional dissipation exponent.
Therefore, it offers a convenient testbed for studying properties of
estimates applicable in these distinct regimes.

\medskip

Assuming the domain $\I := (0,1)$ to be periodic, we write the 1D
fractional Burgers equation as
\begin{subequations}
\label{eq:fburgers}
\begin{alignat}{2}
& u_t+ uu_x + \nu\,{(-\Delta)}^\alpha u = 0, && t>0,\ x\in \I,   \label{eq:fburgers_a}\\
& \text{Periodic Boundary Conditions}, \quad && t>0, \label{eq:fburgers_b}\\
& u(0, x) = u_0(x), && x\in \I \label{eq:fburgers_c}
\end{alignat}
\end{subequations}
for some viscosity coefficient $\nu > 0$ and with ${(-\Delta)}^\alpha$
denoting the fractional Laplacian which for sufficiently regular
functions $v$ defined on a periodic domain and $\alpha \ge 0$ is
defined via the relation
\begin{equation}
\mathcal{F}\left[{(-\Delta)}^\alpha v\right](\xi) 
= \left({2\pi|\xi|}\right)^{2\alpha} \mathcal{F}[v](\xi),
\label{eq:F}
\end{equation}
where $\mathcal{F}[\cdot](\xi)$ represents the Fourier transform. We
{remark} that in the special cases of $\alpha = 0 $ and $\alpha =
1$ the fractional Laplacian reduces to, respectively, the identity
operator and the (negative) classical Laplacian. {Is is
  interesting to note that in addition to $\int_0^1 u\, dx$ the
  quantity $\int_0^1 {(-\Delta)}^{(1-\alpha)}u\,dx$ is also conserved
  during evolution governed by system \eqref{eq:fburgers}.
  Furthermore, in the periodic setting, $\int_0^1 u\, dx = 0$ also
  implies that $\int_0^1 {(-\Delta)}^{(1-\alpha)}u\,dx = 0$.}  We now
define the associated
\begin{subequations}
\label{eq:EEa}
\begin{alignat}{2}
&\text{(classical) enstrophy:}& \qquad \E(u) &:= \frac{1}{2}\int_0^1 {|u_x|}^2\,dx, \quad\qquad \text{and} 
\label{eq:E} \\
&\text{fractional enstrophy:}& \qquad \E_\alpha(u) &:= \frac{1}{2}\int_0^1 {\left| {(-\Delta)}^\frac{\alpha}{2}u \right|}^2\,dx.
\label{eq:Ea}
\end{alignat}
\end{subequations}
It should be noted that $\E(u)$ and $\E_{\alpha}(u)$ become equivalent
when $\alpha=1$, which is a consequence of the relation
\begin{equation}
\int_0^1 {(-\Delta)}^\frac{1}{2}u \cdot {(-\Delta)}^\frac{1}{2}u\,dx= \int_0^1 u \cdot {(-\Delta)}^1u\,dx = \int_0^1 u_x \cdot u_x\,dx
\label{eq:E1}
\end{equation}
following from the properties of the fractional Laplacian \eqref{eq:F}.
\medskip

Evidently, when $\alpha = 1$, system \eqref{eq:fburgers} reduces
  to the classical Burgers equation for which a number of relevant
  results have already been obtained in the seminal studies
  \cite{l06,ld08}. It was shown in these investigations that the rate
  of growth of the classical enstrophy $\E(u)$ is subject to the
following bound
\begin{equation}
\frac{d \E}{dt} \leq \frac{3}{2}{\left( \frac{1}{\pi^2\nu} \right)}^\frac{1}{3} \E^\frac{5}{3}.
\label{eq:dEdt1D}
\end{equation}
By considering a family of variational optimization problems
\begin{equation}
\begin{aligned}
& \max\limits_{{ u\in H^2(\I) }} \frac{d\E(u)}{dt} \\
& \text{subject to}\ \E(u)={\bar{\E}}
\end{aligned},
\label{eq:maxdEdt}
\end{equation}
parameterized by {$\bar{\E}>0$}, in which {$H^s(\I)$, $s\in\RR$,} is the
Sobolev space of functions defined on the periodic interval $\I$ {and
  possessing square-integrable derivatives of up to (fractional) order
  $s$} \cite{af05}, it was then demonstrated that estimate
\eqref{eq:dEdt1D} is in fact sharp. Remarkably, the authors in
\cite{l06,ld08} were able to solve problem \eqref{eq:maxdEdt}
analytically in closed form (although the structure of the maximizers
is quite complicated and involves special functions). When using
the instantaneously optimal initial states {$\tuEbar$} obtained for
different values of {$\bar{\E}$} as the initial data $u_0$ for Burgers
system \eqref{eq:fburgers} with $\alpha = 1$, the maximum
enstrophy growth $\max_{t\ge 0} \E(u(t)) -{\bar{\E}}$ achieved during the
resulting flow evolution was proportional to ${\bar{\E}}$. The question
about the {\em maximum} enstrophy growth achievable in finite time was
investigated in \cite{ap11a} where the following estimate was obtained
\begin{equation}
{\max_{{t>0}} \E(u(t))\leq {\left[ \bar{\E}^\frac{1}{3}+ 
\frac{1}{16} {\left( \frac{1}{\pi^2\nu} \right)}^\frac{4}{3}\bar{\E} \right]}^3 
\ \overlim{\bar{\E} \rightarrow \infty} \ \frac{1}{4096} {\left( \frac{1}{\pi^2\nu} \right)}^4 \bar{\E}^3.}
\label{eq:Et1D}
\end{equation}
To probe its sharpness, a family of variational optimization problems
\begin{equation}
\begin{aligned}
& \max\limits_{\phi\in H^1(\I)} \left[{\E(u(T))} - {\bar{\E}}\right] \\
& \text{subject to}\ \E(\phi)={\bar{\E}}
\end{aligned},
\label{eq:maxdEt}
\end{equation}
where $\phi$ is the initial data for the Burgers system, i.e., $u_0 =
\phi$ in \eqref{eq:fburgers_c}, was solved numerically for a broad
range of initial enstrophy values {$\bar{\E}$} and lengths $T$ of
the time window.  It was found that the maximum finite-time enstrophy
growth {$\max_{T>0} \max_{\phi} \E(u(T)) - \bar{\E}$} scales as
{$\bar{\E}^{3/2}$ and these observations were later} rigorously
justified by Pelinovsky in \cite{p12} using the Laplace method
combined with the Cole-Hopf transformation. In a related study
\cite{p12b}, a dynamical-systems approach was used to reveal a
self-similar structure of the maximizing solution in the limit of
large enstrophy. This asymptotic solution was shown to have the form
of a viscous shock wave superimposed on a linear rarefaction wave.  In
that study similar maximizing solutions were also constructed on the
entire real line. {The observed dependence of the maximum
  finite-time growth of enstrophy $\max_{T>0} \max_{\phi} \E(u(T)) -
  \bar{\E}$ on $\bar{\E}$ is thus} significantly weaker than the
maximum growth stipulated by estimate \eqref{eq:Et1D} in the limit
${\bar{\E}} \rightarrow \infty$, demonstrating that this estimate
is {\em not} sharp and may be improved (which remains an open
problem).  The question how the maximum finite-time growth of
enstrophy in the Burgers system may be affected by additive stochastic
noise was addressed in \cite{pp15}.  Using an approach based on
Monte-Carlo sampling, it was shown that the stochastic excitation does
not decrease the extreme enstrophy growth, defined in a suitable
probabilistic setting, as compared to what is observed in the
deterministic case.  \medskip

The question of the extreme behavior in 2D Navier-Stokes flows was
addressed in \cite{ap13a,ap13b}. Since on 2D unbounded and
doubly-periodic domains the enstrophy may not increase, the relevant
quantity in this setting is the {\em palinstrophy} defined as the
$L^2$ norm of the vorticity gradient. In \cite{ap13a} it was shown,
again by numerically solving suitably defined variational maximization
problems, that the available bounds on the rate of growth of
palinstrophy are sharp and that, somewhat surprisingly, the
corresponding maximizing vorticity fields give rise to flow evolutions
which also saturate the estimates for the palinstrophy growth
in finite time. Thus, paradoxically, as far as the sharpness of the
finite-time estimates is concerned, the situation in 2D is more
satisfactory than in 1D.
\medskip

The goal of the present investigation is to advance the
research program outlined above by considering the extreme behavior
possible in the fractional Burgers system \eqref{eq:fburgers}
  when $\alpha \in [0,1]$. The reason why this problem is
interesting from the point of view of this research program is
that, as discussed in \cite{adv07, kns08, ddl09}, the fractional
  Burgers system is globally well-posed when $\alpha \ge 1/2$ and
exhibits a finite-time blow-up in the supercritical regime when
$\alpha < 1/2$ (it was initially demonstrated in \cite{kns08} that the
blow-up occurs in the Sobolev space $H^s(\I)$, $s>3/2-2\alpha$, and
this results was later refined in \cite{ddl09} where it was shown that
under certain conditions on the initial data the blow-up occurs in
$W^{1,\infty}(\I)$ for all $\alpha < 1/2$; eventual
regularization of solutions after blow-up was discussed in
\cite{ccs10}).  Thus, the behavior changes fundamentally when the
fractional dissipation exponent $\alpha$ is reduced below $1/2$ (this
aspect was also illustrated in \cite{a15}).  Furthermore, there is
also a certain similarity with the 3D Navier-Stokes system which is
known to be globally well-posed in the classical sense in the
presence of fractional dissipation with $\alpha \ge 5/4$ \cite{kp12}.
Our specific objectives are therefore twofold:
\begin{itemize}
\item first, we will obtain upper bounds on the rate of growth of
  enstrophy generalizing estimate \eqref{eq:dEdt1D} to the fractional
  dissipation case with $\alpha \in [0,1]$ in \eqref{eq:fburgers}; this
  will be done separately for both the classical and fractional
    enstrophy, cf.~\eqref{eq:E} and \eqref{eq:Ea}, and

  \item secondly, we will probe the sharpness, in the sense
      defined above, of these new estimates by numerically solving
    corresponding variational maximization problems; in this latter
    step we will also provide insights about the structure of the
    optimal states saturating different bounds.
\end{itemize}
Based on this, we will conclude how the maximum instantaneous growth
of enstrophy changes between the regimes with globally well-posed
behavior and with finite-time blow-up.

The structure of the paper is as follows: in the next section we
derive upper bounds on the rate of growth of the classical and
fractional enstrophy as functions of the fractional dissipation
exponent $\alpha$, then in Section \ref{sec:method} we provide details
of the computational approach designed to probe the sharpness of these
bounds, whereas in Section \ref{sec:results} we present numerical
results obtained for the two cases; discussion and final conclusions
are deferred to Section \ref{sec:final}.
\medskip

\section{Upper Bounds on the Rate of Growth of the Classical and Fractional Enstrophy}
\label{sec:bounds}

In this section we first use system \eqref{eq:fburgers} to obtain
expressions for the rate of growth of the classical and fractional
enstrophy \eqref{eq:E} and \eqref{eq:Ea} in terms of the state variable
$u$. Next, we derive estimates on these rates of growth in terms of
the instantaneous enstrophy values $\E$ and $\Ea$. These
results are stated in the form of theorems in two subsections below.
\medskip

In order to obtain an expression for the growth rate $d\E/dt$ of the
classical enstrophy \eqref{eq:E}, we multiply the fractional Burgers
equation \eqref{eq:fburgers_a} by $(-u_{xx})$, integrate the resulting
relation over the periodic interval $\I$ and then perform integration
by parts to obtain
\begin{equation}
\begin{aligned}
\frac{d\mathcal{E}}{dt}& =\frac{1}{2}\frac{d}{dt}\int_0^1 {|u_x|}^2\,dx \\
& = \int_0^1 u_{xx} u u_x\,dx + \nu\int_0^1 u_{xx} {(-\Delta)}^\alpha u \,dx \\
&= -\frac{1}{2}\int_0^1 u_x^3\,dx- \nu\int_0^1 {\left[ {(-\Delta)}^{\frac{\alpha}{2}} u_x\right]}^2 \,dx \\
& =: \mathcal {R}_{\mathcal {E}}(u)\,.
\end{aligned}
\label{eq:R}
\end{equation}

Analogously, in order to obtain an expression for the growth rate
$d\E_{\alpha} / dt$ of the fractional enstrophy \eqref{eq:Ea}, we
multiply the fractional Burgers equation \eqref{eq:fburgers_a} by
${(-\Delta)}^\alpha u$ and after performing similar steps as above we
obtain
\begin{equation}
\begin{aligned}
\frac{d\mathcal{E}_\alpha}{dt}& = \frac{1}{2}\frac{d}{dt}\int_0^1 {\left| {(-\Delta)}^\frac{\alpha}{2}u \right|}^2\,dx \\
& = -\int_0^1 {(-\Delta)}^\alpha u\cdot u u_x \,dx - \nu\int_0^1 {\left[ {(-\Delta)}^\alpha u \right]}^2 \,dx \\
& =: \mathcal {R}_{\mathcal {E}_\alpha}(u)\,.
\end{aligned}
\label{eq:Ra}
\end{equation}

\subsection{Estimate of $d\E/dt$}
\label{sec:estimR}

We begin by estimating the cubic integral in \eqref{eq:R} which is
addressed by
\begin{lem}\label{R3}
  For $\alpha\in (\frac{1}{6}, 1]$ and a sufficiently smooth function
  $u$ defined on the periodic interval $\I$, we have
  \begin{equation}
    {\| u \|}_{L^3(\mathcal{I})}^3 \leq C_1 \, {\| u \|}_{L^2(\mathcal{I})}^{3-\frac{1}{2\alpha}}\ {\| {(-\Delta)}^\frac{\alpha}{2} u \|}_{L^2(\mathcal{I})}^\frac{1}{2\alpha}
    \label{eq:R3}
  \end{equation}
  with some constant $C_1$ depending on $\alpha$.
\end{lem}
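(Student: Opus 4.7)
The plan is to recognize \eqref{eq:R3} as an instance of the Gagliardo--Nirenberg interpolation inequality on the one-dimensional torus $\I$, adapted to the homogeneous Sobolev scale generated by ${(-\Delta)}^{\alpha/2}$. A scaling check is the natural first step: for target norm $L^3$ and source norms $L^2$ and $\dot{H}^\alpha$ in dimension one, the interpolation parameter $\theta$ must satisfy $1/3 = (1-\theta)/2 + \theta(1/2-\alpha)$, forcing $\theta = 1/(6\alpha)$; the admissibility condition $\theta \in [0,1]$ then becomes precisely $\alpha \ge 1/6$, which matches the stated range. Cubing produces the exponents $3(1-\theta) = 3 - 1/(2\alpha)$ on $\|u\|_{L^2}$ and $3\theta = 1/(2\alpha)$ on $\|{(-\Delta)}^{\alpha/2}u\|_{L^2}$, reproducing the right-hand side of \eqref{eq:R3}.

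To turn the scaling heuristic into a proof, I would proceed in two steps. First, I would invoke the endpoint one-dimensional Sobolev embedding $H^{1/6}(\I) \hookrightarrow L^3(\I)$, which holds since $1/6 = 1/2 - 1/3$; on the torus it is proved by expanding $u$ in a Fourier series and combining Hausdorff--Young with H\"older in the Fourier indices. This delivers the bound $\|u\|_{L^3(\I)} \leq C\,\|{(-\Delta)}^{1/12}u\|_{L^2(\I)}$ for mean-zero $u$. Second, I would apply the standard logarithmic-convexity (H\"older-in-Fourier) interpolation among the homogeneous Sobolev norms,
\[
\|{(-\Delta)}^{1/12}u\|_{L^2(\I)} \leq \|u\|_{L^2(\I)}^{\,1 - 1/(6\alpha)} \,\|{(-\Delta)}^{\alpha/2}u\|_{L^2(\I)}^{\,1/(6\alpha)},
\]
which is valid whenever $\alpha \ge 1/6$ and which follows from H\"older's inequality applied to $\sum_k |k|^{1/3}|\hat u(k)|^2$ with conjugate exponents $6\alpha/(6\alpha-1)$ and $6\alpha$. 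Chaining the two estimates and raising the result to the third power then yields \eqref{eq:R3}, with $C_1$ equal to the cube of the product of the two constants above.

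The main technical subtlety is that ${(-\Delta)}^{\alpha/2}$ annihilates the constant Fourier mode, so that $\|{(-\Delta)}^{\alpha/2}u\|_{L^2(\I)}$ is a genuine norm only on the mean-zero subspace of $L^2(\I)$; without this restriction \eqref{eq:R3} would fail trivially for $u \equiv c \ne 0$. The zero-mean hypothesis is natural here because $\int_0^1 u\,dx$ is conserved by \eqref{eq:fburgers}, as highlighted in the discussion following \eqref{eq:F}, and a nonzero mean can in any case be removed by a Galilean shift. Once this bookkeeping is disposed of, the only remaining care point is the quantitative dependence of the Sobolev embedding constant on $\alpha$, which is ultimately what feeds into the prefactor $C_1$ in the statement.
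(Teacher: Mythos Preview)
Your approach is correct in outline and takes a genuinely different route from the paper. The paper does not pass through the critical embedding $\dot H^{1/6}(\I)\hookrightarrow L^3(\I)$; instead it splits by cases on $\alpha$. For $\alpha>1/2$ it first proves the $L^\infty$ bound $\|u\|_{L^\infty}\le C\,\|u\|_{L^2}^{1-1/(2\alpha)}\,\|(-\Delta)^{\alpha/2}u\|_{L^2}^{1/(2\alpha)}$ via a Fourier-splitting argument (the cut-off is chosen so that the high-frequency tail $\sum_{|k|>\kappa}|k|^{-2\alpha}$ converges, which is precisely where $\alpha>1/2$ is needed), and then uses $\|u\|_{L^3}^3\le\|u\|_{L^\infty}\|u\|_{L^2}^2$; the case $\alpha=1/2$ is handled by an ad hoc pointwise estimate; and for $1/6<\alpha<1/2$ the paper simply quotes the fractional Gagliardo--Nirenberg inequality from \cite{hyz12}. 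Your two-step argument is more uniform in $\alpha$ and conceptually cleaner; the paper's case-by-case treatment, on the other hand, delivers explicit prefactors $C_1(\alpha)$ in each regime, which matter later when the numerically computed $\tsigma_1$ are compared with the analytic $\sigma_1$.

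One technical slip: the endpoint embedding $\dot H^{1/6}(\I)\hookrightarrow L^3(\I)$ cannot be obtained by ``Hausdorff--Young combined with H\"older in the Fourier indices.'' Attempting this gives $\|u\|_{L^3}\le\|\hat u\|_{\ell^{3/2}}$ and then forces the divergent factor $\big(\sum_{k\ne 0}|k|^{-1}\big)^{1/4}$; Hausdorff--Young is never sharp at the critical Sobolev scaling. The embedding itself is of course true, but its proof requires the Hardy--Littlewood--Sobolev inequality (equivalently, the $L^2\to L^3$ mapping property of the Riesz potential $(-\Delta)^{-1/12}$) or a Littlewood--Paley/Bernstein argument. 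With that step repaired, your H\"older interpolation between $\dot H^0$ and $\dot H^\alpha$ and the chaining are correct.
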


\begin{proof}
In \cite{l06,ld08} the following estimate was established
\begin{equation}
{\| u \|}_{L^3(\mathcal{I})}^3 \leq 
\frac{2}{\sqrt{\pi}} {\| u \|}_{L^2(\mathcal{I})}^\frac{5}{2}\,{\| u_x \|}_{L^2(\mathcal{I})}^\frac{1}{2},
\end{equation}
from which it follows, upon noting that ${\| u_x \|}_{L^2(\mathcal{I})}={\|
  {(-\Delta)}^\frac{1}{2} u \|}_{L^2(\mathcal{I})}$, that inequality
\eqref{eq:R3} holds when $\alpha = 1$.
\medskip

Since $u$ is defined on the periodic interval, it has a discrete
Fourier series representation
\begin{equation}
u(x)=\sum_k \widehat{u}_k\,e^{2\pi i kx}\,,
\label{eq:uhat}
\end{equation}
where $k \in \NN$ is the wavenumber and $\widehat{u}_k$ the
corresponding Fourier coefficient. {In the case when
  $\alpha>1/2$, }we split the sum \eqref{eq:uhat} at $k=\kappa$, so
that
\begin{equation}
\begin{aligned}
|u(x)|&=\left|\sum_k \widehat{u}_k e^{2\pi i k x}\right|\\
&\leq \sum_{|k|\leq\kappa} |\widehat{u}_k| + \sum_{|k|>\kappa} |\widehat{u}_k|\\
&=\sum_{|k|\leq\kappa} 1\cdot|\widehat{u}_k|+ \sum_{|k|>\kappa} {(2\pi |k|)}^{-\alpha} {(2\pi |k|)}^{\alpha} |\widehat{u}_k|\\
&\leq \sqrt{\sum_{|k|\leq\kappa}1^2 \sum_{|k|\leq\kappa}{|\widehat{u}_k|}^2} + \sqrt{\sum_{|k|>\kappa} {(2\pi |k|)}^{-2\alpha} \sum_{|k|>\kappa} {(2\pi |k|)}^{2\alpha}{|\widehat{u}_k|}^2}\\
& \leq {(2\kappa)}^\frac{1}{2} {\| u \|}_{L^2} + {\left(\frac{2}{2\alpha-1}\right)}^\frac{1}{2} {(2\pi)}^{-\alpha} {\kappa}^\frac{-2\alpha+1}{2} {\| {(-\Delta)}^\frac{\alpha}{2}u \|}_{L^2}\,,
\end{aligned}
\label{eq:usplit}
\end{equation}
where $\kappa$ is a parameter to be determined, and the
Plancherel theorem, Cauchy-Schwarz inequality as well as the
inequality
\begin{equation*}
\sum_{|k|>\kappa} {(2\pi|k|)}^{-2\alpha}\leq 2\,{(2\pi)}^{-2\alpha}\int_\kappa^\infty {x}^{-2\alpha}\,dx 
= \frac{2\,{(2\pi)}^{-2\alpha}}{2\alpha-1} {\kappa}^{-2\alpha+1},\quad \alpha > \frac{1}{2} 
\end{equation*}
were used to obtain the last inequality in
  \eqref{eq:usplit}. The upper bound in \eqref{eq:usplit} is minimized
by choosing $\kappa={ {(2\alpha-1)}^\frac{1}{2\alpha} {\|
    {(-\Delta)}^\frac{\alpha}{2}u \|}_{L^2}^\frac{1}{\alpha} } /
\left( 2\pi{\| u \|}_{L^2}^\frac{1}{\alpha} \right)$ which produces
\begin{equation*}
{\| u \|}_{L^\infty(\mathcal{I})}\leq 
C\,{\| u \|}_{L^2(\mathcal{I})}^{1-\frac{1}{2\alpha}}\,{\| {(-\Delta)}^\frac{\alpha}{2}u \|}_{L^2(\mathcal{I})}^\frac{1}{2\alpha},\quad \text{where} \quad C=\frac{2\alpha}{{(2\alpha-1)}^{1-\frac{1}{4\alpha}}\sqrt{\pi}}. 
\end{equation*}
Then, we finally obtain
\begin{equation*}
{\| u \|}_{L^3(\mathcal{I})}^3 \leq {\| u \|}_{L^\infty(\mathcal{I})}\,\int_0^1 u^2\,dx \leq 
C \, {\| u \|}_{L^2(\mathcal{I})}^{3-\frac{1}{2\alpha}}\ {\| {(-\Delta)}^\frac{\alpha}{2}u \|}_{L^2(\mathcal{I})}^\frac{1}{2\alpha},
\end{equation*}
which proves that inequality \eqref{eq:R3} holds for
$1/2<\alpha<1$.  
\medskip

When $\alpha=1/2$, we have for an arbitrary $x_0 \in \I$ 
\begin{equation}
\begin{aligned}
u^2(x_0) & \leq 2\int_0^{x_0}u(x)u'(x) dx \\ 
& \leq 2 \int_0^1 \sum_{k,\,j} 2\pi|k|\,\widehat{u}_k\,\widehat{u}_j\,  e^{2\pi i(k-j) x}\, dx \\
& \leq 4\pi\sum_k |k||\widehat{u}_k|^2 \\
& = 2{\| {(-\Delta)}^\frac{1}{4}{u} \|}_{L^2(\mathcal{I})}^2\,,
\end{aligned}
\label{eq:u2}
\end{equation}
where the last equality is obtained by the Plancherel theorem.  It then follows that the upper bound on $u$ is given by
\begin{equation*}
{\| u \|}_{L^\infty(\mathcal{I})} \leq \sqrt{2} \, {\| {(-\Delta)}^\frac{1}{4}u \|}_{L^2(\mathcal{I})}
\end{equation*}
and we have
\begin{equation*}
{\| u \|}_{L^3(\mathcal{I})}^3 \leq {\| u \|}_{L^\infty(\mathcal{I})}\,\int_0^1 u^2\,dx \leq 
\sqrt{2}\,{\| u \|}_{L^2(\mathcal{I})}^2\,{\| {(-\Delta)}^\frac{\alpha}{2}u \|}_{L^2(\mathcal{I})}\, .
\end{equation*}
\medskip

When $1/6 < \alpha < 1/2$, the fractional Gagliardo-Nirenberg
inequality established in \cite{hyz12} yields
\begin{equation*}
{\| u \|}_{L^3(\mathcal{I})}^3 \leq C\,{\| u\|}_{L^2(\mathcal{I})}^{3-\frac{1}{2\alpha}}\,{\| {(-\Delta)}^\frac{\alpha}{2}u \|}_{L^2(\mathcal{I})}^\frac{1}{2\alpha},
\quad \text{where} \quad C=\frac{1}{\sqrt{2\pi}}{\left[ \frac{\Gamma(\frac{1-\alpha}{2})}{\Gamma(\frac{1+\alpha}{2})} \right]}^\frac{1}{2\alpha}\,. 
\end{equation*}
We note that this fractional Gagliardo-Nirenberg inequality was
originally obtained in \cite{hyz12} for an unbounded domain $\RR^d$
with some $0< d \in \NN$ and here we restrict it to the periodic
interval $\I$. As a result, the constant $C$ may not be optimal.

The lemma is thus proved.
\end{proof}

We remark that when $\I=\mathbb{R}$ the generalized
Gagliardo-Nirenberg inequality from \cite{mrr13, hmow11} can be
applied to deduce the same inequality as in Lemma \ref{R3}, but
without an explicit expression for the prefactor. We are now in the
position to state
\begin{thm}\label{dEdt}
  For $\alpha\in (\frac{1}{4}, 1]$ the rate of growth of enstrophy is
  subject to the bound
\begin{equation}
\begin{aligned}
\frac{d\mathcal {E}}{dt} & \leq \sigma_1 \,\mathcal {E}^{\gamma_1}, 
\quad \text{where} \quad \gamma_1 = \frac{6\alpha-1}{4\alpha-1} \quad \text{and}  \\
\sigma_1 &= \left\{
\begin{alignedat}{2}
&\frac{4\alpha-1}{(2\alpha-1)2^\frac{2\alpha+1}{4\alpha-1}\nu^\frac{1}{4\alpha-1}\pi^\frac{2\alpha}{4\alpha-1}}, && \mathrm{for}\ \frac{1}{2}<\alpha\leq 1\,, \\
&\frac{1}{2\nu}, && \mathrm{for}\ \alpha=\frac{1}{2}\,,\\
&\frac{ 4\alpha-1 }{ 2^\frac{8\alpha+1}{4\alpha-1}\pi^\frac{2\alpha}{4\alpha-1}\alpha^\frac{4\alpha}{4\alpha-1}\nu^\frac{1}{4\alpha-1} } {\left[ \frac{\Gamma(\frac{1-\alpha}{2})}{\Gamma(\frac{1+\alpha}{2})} \right]}^\frac{2}{4\alpha-1}, & \qquad & \mathrm{for}\ \frac{1}{4}<\alpha<\frac{1}{2}\,.
\end{alignedat}
\right.
\end{aligned}
\label{eq:dEdt}
\end{equation}
\end{thm}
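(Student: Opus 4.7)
The plan is to start from identity~\eqref{eq:R} for $d\E/dt$ and treat its two pieces separately: the (sign-indefinite) cubic term will be controlled via Lemma~\ref{R3}, while the fractional dissipation term will be used as an absorber through Young's inequality. Since $u$ is periodic on $\I$, the function $u_x$ has zero mean, so Lemma~\ref{R3} applies with $u$ replaced by $u_x$. Using $-\frac{1}{2}\int_0^1 u_x^3\,dx \le \frac{1}{2}\int_0^1 |u_x|^3\,dx$ and $\|u_x\|_{L^2}^2 = 2\E$, I would obtain
\begin{equation*}
\frac{d\E}{dt} \;\le\; \frac{C_1}{2}\,(2\E)^{\frac{3}{2}-\frac{1}{4\alpha}}\, Y^{\frac{1}{2\alpha}} \;-\; \nu\, Y^2, \qquad Y := \| (-\Delta)^{\alpha/2} u_x \|_{L^2(\I)},
\end{equation*}
where $C_1$ is the constant from Lemma~\ref{R3} in the relevant subregime of $\alpha$.

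Next I would apply Young's inequality $xy \le \tfrac{x^p}{p}+\tfrac{y^q}{q}$ to the first term, choosing $q$ so the $Y$-factor becomes $Y^2$. This forces $q = 4\alpha$ and the conjugate exponent $p = \tfrac{4\alpha}{4\alpha-1}$, which requires precisely the hypothesis $\alpha > 1/4$. With a scaling parameter $\lambda$ absorbed into the two terms, I would pick $\lambda$ so that the $Y^2$-coefficient produced by Young's inequality equals exactly $\nu$, thereby cancelling the dissipation term. The residual $\E$-dependence comes with the exponent
\begin{equation*}
\left(\tfrac{3}{2}-\tfrac{1}{4\alpha}\right)\cdot \tfrac{4\alpha}{4\alpha-1} \;=\; \frac{6\alpha-1}{4\alpha-1} \;=\; \gamma_1,
\end{equation*}
which matches the theorem.

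Finally, to recover the three piecewise formulas for $\sigma_1$ I would substitute the value of $C_1$ from the corresponding branch in the proof of Lemma~\ref{R3}: for $1/2<\alpha \le 1$ use $C_1 = 2\alpha\,[(2\alpha-1)^{1-1/(4\alpha)}\sqrt{\pi}]^{-1}$, for $1/4<\alpha<1/2$ use the fractional Gagliardo--Nirenberg constant involving the $\Gamma$-quotient, and carry the arithmetic through. The case $\alpha = 1/2$ cannot be read off by limits because $(2\alpha-1)$ vanishes; I would instead handle it directly by noting that the relevant inequality becomes $\tfrac{1}{2}\|u_x\|_{L^3}^3 \le \tfrac{\sqrt{2}}{2}(2\E)\,Y$, whose maximization in $Y$ against $\nu Y^2$ is an elementary quadratic optimization giving $\sigma_1 = 1/(2\nu)$.

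The main obstacle, beyond the careful bookkeeping of numerical prefactors, will be organizing the scaling constant in Young's inequality so that the full dissipation $\nu Y^2$ is absorbed with no residual $Y$-dependence and so that the resulting prefactor simplifies to the stated form; the three expressions for $\sigma_1$ differ only through the different values of $C_1$, so once the generic calculation is written abstractly in $C_1$, the three cases reduce to algebraic substitutions.
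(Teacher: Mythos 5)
Your proposal follows essentially the same route as the paper's proof: apply Lemma~\ref{R3} to $u_x$ to bound the cubic term in \eqref{eq:R}, then use Young's inequality (with conjugate exponents $4\alpha$ and $\tfrac{4\alpha}{4\alpha-1}$, hence the restriction $\alpha>1/4$) tuned so that the dissipation term $\nu\|(-\Delta)^{\alpha/2}u_x\|_{L^2}^2$ is exactly absorbed, and finally substitute the branch-dependent constant $C_1$ (with the quadratic optimization handling $\alpha=1/2$). The exponent and prefactor bookkeeping you outline reproduces \eqref{eq:dEdt}, so the argument is correct and matches the paper's.
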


\begin{proof}
  Applying inequality \eqref{eq:R3} to estimate the cubic integral
  \eqref{eq:R}, we have
\begin{equation}
\frac{d\mathcal{E}}{dt} \leq \frac{C_1}{2} {\| u_x \|}_{L^2(\mathcal{I})}^{3-\frac{1}{2\alpha}}\ {\| {(-\Delta)}^\frac{\alpha}{2} u_x \|}_{L^2(\mathcal{I})}^\frac{1}{2\alpha} - \nu\,{\| {(-\Delta)}^\frac{\alpha}{2} u_x \|}_{L^2(\mathcal{I})}^2\,,
\label{eq:dEdt2}
\end{equation}
where $C_1$ is defined in \eqref{eq:R3}. Then, Young's inequality is
used to estimate the first term on the RHS of \eqref{eq:dEdt2} such
that the second term is eliminated (we refer the reader to
\cite{l06,ld08} for details of this step which is analogous to the
case with $\alpha = 1$). We note that this last step is valid
  only when $\alpha > 1/4$ and we finally obtain
\begin{equation*}
\frac{d\mathcal{E}}{dt} \leq \frac{(4\alpha-1) }{ {(8\alpha)}^\frac{4\alpha}{4\alpha-1} \nu^\frac{1}{4\alpha-1}}\,C_1^\frac{4\alpha}{4\alpha-1} \, {\| u_x \|}_{L^2(\mathcal{I})}^{\frac{2(6\alpha-1)}{4\alpha-1}},\quad \alpha>1/4
\end{equation*}
which is equivalent to \eqref{eq:dEdt}. The theorem is thus proved.
\end{proof}
{\noindent As regards the range of applicability of estimate
  \eqref{eq:dEdt}, we note that $\lim_{\alpha \rightarrow (1/4)^+}
  \gamma_1(\alpha) = \infty$, so $1/4$ represents a natural lower
  bound on $\alpha$, cf.~Figure \ref{fig:estim}(b).}

\subsection{Estimate of $d\E_{\alpha}/dt$}
\label{sec:estimRa}

As above, we begin by estimating the cubic integral in \eqref{eq:Ra}
which is addressed by
\begin{lem}\label{R3a}
  For $\alpha\in (\frac{3}{4}, 1]$ and a sufficiently smooth function
  $u$ defined on the periodic interval $\I$ we have
\begin{equation}
\left| \int_0^1 {(-\Delta)}^\alpha u \cdot u u_x \, dx \right| \leq 
C_{\alpha} \,{\| {(-\Delta)}^\frac{\alpha}{2} u \|}_{L^2(\mathcal{I})}^{\frac{8\alpha-3}{2\alpha}}\ {\| {(-\Delta)}^\alpha u \|}_{L^2(\mathcal{I})}^\frac{3-2\alpha}{2\alpha} 
\label{eq:R3a}
\end{equation}
with some constant $C_{\alpha}$ depending on $\alpha$.
\end{lem}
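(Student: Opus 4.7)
The plan is to follow the same two-step strategy as in the proof of Lemma \ref{R3}: first reduce the problem to an $L^2$ estimate of $(-\Delta)^{\alpha/2}(uu_x)$ through self-adjointness and the Cauchy--Schwarz inequality, and then control that quantity by a fractional Leibniz rule combined with Sobolev interpolation between $\|(-\Delta)^{\alpha/2}u\|_{L^2(\I)}$ and $\|(-\Delta)^\alpha u\|_{L^2(\I)}$.

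Concretely, using the self-adjointness of the fractional Laplacian, I would rewrite
\[
\int_0^1(-\Delta)^\alpha u\cdot uu_x\,dx=\int_0^1(-\Delta)^{\alpha/2}u\cdot(-\Delta)^{\alpha/2}(uu_x)\,dx
\]
and estimate via Cauchy--Schwarz so that it suffices to bound $\|(-\Delta)^{\alpha/2}(uu_x)\|_{L^2(\I)}$. Writing $uu_x=\tfrac12(u^2)_x$ and using the Fourier identity $\|(-\Delta)^{\alpha/2}\partial_xf\|_{L^2}=\|(-\Delta)^{(\alpha+1)/2}f\|_{L^2}$, the problem then reduces to estimating $\|(-\Delta)^{(\alpha+1)/2}(u^2)\|_{L^2(\I)}$, and I would invoke the Kato--Ponce fractional Leibniz inequality to bound this by $C\,\|u\|_{L^\infty(\I)}\,\|(-\Delta)^{(\alpha+1)/2}u\|_{L^2(\I)}$.

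To close the estimate I would bound the two remaining factors by the two admissible seminorms. For $\|u\|_{L^\infty(\I)}$ I would combine Agmon's inequality $\|u\|_{L^\infty}^2\leq 2\|u\|_{L^2}\|u_x\|_{L^2}$ (using that $u$ has zero mean on the periodic interval), the Poincar\'e inequality $\|u\|_{L^2}\leq(2\pi)^{-\alpha}\|(-\Delta)^{\alpha/2}u\|_{L^2}$, and the standard interpolation of $\|u_x\|_{L^2}$ between $H^\alpha$ and $H^{2\alpha}$, which is valid since $\alpha\leq 1\leq 2\alpha$ on the stated range. For $\|(-\Delta)^{(\alpha+1)/2}u\|_{L^2(\I)}$ I would use the analogous interpolation inequality, which is admissible because $\alpha\leq(\alpha+1)/2\leq 2\alpha$ whenever $\alpha\geq 1/3$. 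Assembling these bounds yields an intermediate estimate in which the exponent of $A=\|(-\Delta)^{\alpha/2}u\|_{L^2}$ is slightly too large and the exponent of $B=\|(-\Delta)^\alpha u\|_{L^2}$ slightly too small compared with the target; a single final application of Poincar\'e in the form $A\leq(2\pi)^{-\alpha}B$, used to trade a factor of $A^{1/(2\alpha)}$ for $B^{1/(2\alpha)}$, then produces exactly the exponents $(8\alpha-3)/(2\alpha)$ and $(3-2\alpha)/(2\alpha)$ appearing in the stated bound.

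The main technical obstacle will be invoking the Kato--Ponce estimate cleanly at the $L^\infty$--$L^2$ endpoint on a periodic interval with an explicit, $\alpha$-dependent constant $C_\alpha$, and keeping track of the various interpolation ranges to confirm that they are jointly consistent on the stated range of $\alpha$. The lower threshold $\alpha>3/4$ itself does not seem essential to the inequality proved in the lemma --- rather, it is the range on which the resulting exponent $(3-2\alpha)/(2\alpha)$ of $\|(-\Delta)^\alpha u\|_{L^2}$ remains small enough (strictly less than $2$ with a favorable prefactor) to be absorbed by Young's inequality in the subsequent estimate of $d\E_\alpha/dt$, in analogy with the role played by $\alpha>1/4$ in Theorem \ref{dEdt}.
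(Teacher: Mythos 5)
There is a genuine gap, and it sits at the heart of your reduction. After self-adjointness, Cauchy--Schwarz and the Kato--Ponce inequality you arrive at a bound of the form
\begin{equation*}
\left| \int_0^1 {(-\Delta)}^\alpha u \cdot u u_x \, dx \right| \;\leq\; C\,\| {(-\Delta)}^{\frac{\alpha}{2}} u \|_{L^2(\I)}\,\| u \|_{L^\infty(\I)}\,\| {(-\Delta)}^{\frac{\alpha+1}{2}} u \|_{L^2(\I)},
\end{equation*}
and you then propose to interpolate $\| {(-\Delta)}^{\frac{\alpha+1}{2}} u \|_{L^2}$ between $\| {(-\Delta)}^{\frac{\alpha}{2}} u \|_{L^2}$ and $\| {(-\Delta)}^{\alpha} u \|_{L^2}$. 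This is where the convention slips: the available seminorms have Sobolev orders $\alpha$ and $2\alpha$, while $\| {(-\Delta)}^{\frac{\alpha+1}{2}} u \|_{L^2}$ has order $\alpha+1$, so interpolation requires $\alpha+1\leq 2\alpha$, i.e.\ $\alpha\geq 1$ (your condition ``$\alpha\leq(\alpha+1)/2\leq 2\alpha$'' compares the Laplacian exponent against Sobolev orders). On the entire range $\alpha\in(3/4,1)$ the order $\alpha+1$ strictly exceeds $2\alpha$, and a single high mode $u=a\cos(2\pi Nx)$ shows that no product of powers of the two admissible seminorms can control it. Worse, the loss is not repairable by bookkeeping: for $u=\cos(2\pi x)+N^{-2\alpha}\cos(2\pi Nx)$ both seminorms on the right of \eqref{eq:R3a} stay $O(1)$ while your intermediate quantity $\| {(-\Delta)}^{\frac{\alpha}{2}} u \|_{L^2}\,\|u\|_{L^\infty}\,\| {(-\Delta)}^{\frac{\alpha+1}{2}} u \|_{L^2}$ grows like $N^{1-\alpha}$. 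The reason is that the plain Cauchy--Schwarz/Leibniz reduction discards the commutator-type cancellation in $\int {(-\Delta)}^\alpha u\cdot uu_x\,dx$; the paper instead uses the estimate $\left|\int_0^1 {(-\Delta)}^\alpha u\cdot uu_x\,dx\right|\leq \|u_x\|_{L^\infty(\I)}\|{(-\Delta)}^{\frac{\alpha}{2}}u\|_{L^2(\I)}^2$, which keeps only $\|u_x\|_{L^\infty}$ to be interpolated, and bounds that factor by splitting its Fourier series at a wavenumber $\kappa$ (low modes via $\|{(-\Delta)}^{\frac{\alpha}{2}}u\|_{L^2}$, high modes via $\|{(-\Delta)}^{\alpha}u\|_{L^2}$) and optimizing over $\kappa$, cf.\ \eqref{eq:ux}--\eqref{eq:uLinf}.

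Your closing remark about the threshold is also off: $\alpha>3/4$ is not imposed for the sake of the Young-inequality absorption in Theorem \ref{dEadt} (that step only needs $(3-2\alpha)/(2\alpha)<2$, i.e.\ $\alpha>1/2$); it enters already in the proof of Lemma \ref{R3a}, through the convergence of the high-wavenumber tail $\sum_{|k|>\kappa}(2\pi|k|)^{2-4\alpha}$ in \eqref{eq:ineq1b} --- equivalently, through the requirement that the Gagliardo--Nirenberg exponent $(3-2\alpha)/(2\alpha)$ attached to $\|{(-\Delta)}^{\alpha}u\|_{L^2}$ in the bound for $\|u_x\|_{L^\infty}$ not exceed $1$. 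To salvage your strategy you would need to replace Cauchy--Schwarz plus Kato--Ponce by a genuine commutator estimate (e.g.\ bounding $\|[{(-\Delta)}^{\frac{\alpha}{2}},u]u_x\|_{L^2}$ by $\|u_x\|_{L^\infty}\|{(-\Delta)}^{\frac{\alpha}{2}}u\|_{L^2}$ after the integration-by-parts identity for $\int {(-\Delta)}^{\frac{\alpha}{2}}u\cdot u\,\partial_x{(-\Delta)}^{\frac{\alpha}{2}}u\,dx$), at which point you are back to estimating $\|u_x\|_{L^\infty}$ exactly as the paper does.
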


\begin{proof}
  Based on the discrete Fourier series representation \eqref{eq:F} of $u$,
  we have
\begin{equation}
\begin{aligned}
|u_x(x)|&=\left|\sum_k (2\pi i k)\widehat{u}_k e^{2\pi ikx}\right|\\
&\leq \sum_{|k|\leq\kappa} 2\pi |k| |\widehat{u}_k| + \sum_{|k|>\kappa} 2\pi |k| |\widehat{u}_k|\\
&=\sum_{|k|\leq\kappa} {(2\pi|k|)}^{1-\alpha} {(2\pi |k|)}^\alpha |\widehat{u}_k| + \sum_{|k|>\kappa} {(2\pi |k|)}^{1-2\alpha}  {(2\pi |k|)}^{2\alpha} |\widehat{u}_k|\\
&\leq \sqrt{\sum_{|k|\leq\kappa} {(2\pi|k|)}^{2-2\alpha}  \sum_{|k|\leq\kappa}  {(2\pi|k|)}^{2\alpha}{|\widehat{u}_k|}^2} + \\
&\ \ \ \ \sqrt{\sum_{|k|>\kappa} {(2\pi |k|)}^{2-4\alpha}  \sum_{|k|>\kappa} {(2\pi |k|)}^{4\alpha}{|\widehat{u}_k|}^2}\\
& \leq \sqrt{2}{(2\pi)}^{1-\alpha} \kappa^{\frac{3}{2}-\alpha} {\| {(-\Delta)}^\frac{\alpha}{2} u \|}_{L^2(\mathcal{I})} + \sqrt{\frac{2}{4\alpha-3}} {(2\pi)}^{1-2\alpha} \kappa^{{\frac{3}{2}-2\alpha}} {\| {(-\Delta)}^\alpha u \|}_{L^2(\mathcal{I})}\,,
\end{aligned}
\label{eq:ux}
\end{equation}
where $\kappa$ is a splitting parameter to be determined, and
the Plancherel theorem, Cauchy-Schwarz inequality and as well as
inequalities
\begin{subequations}
\label{eq:ineq1}
\begin{align}
& \sum_{|k|\leq\kappa} {(2\pi|k|)}^{2-2\alpha} \leq 2 \kappa {(2\pi \kappa)}^{2-2\alpha} = 2{(2\pi)}^{2-2\alpha}\kappa^{3-2\alpha}\,,  \label{eq:ineq1a} \\
& \sum_{|k|>\kappa} {(2\pi |k|)}^{2-4\alpha} \leq 2{(2\pi)}^{2-4\alpha}\int_\kappa^\infty x^{2-4\alpha} dx = \frac{2{(2\pi)}^{2-4\alpha}}{4\alpha-3} \kappa^{3-4\alpha}, \quad \alpha>\frac{3}{4} \label{eq:ineq1b}
\end{align}
\end{subequations}
were applied to obtain the last inequality in \eqref{eq:ux}. The
upper bound in \eqref{eq:ux} is minimized by choosing $\kappa={
  {(4\alpha-3)}^\frac{1}{2\alpha} {\| {(-\Delta)}^\alpha u
    \|}_{L^2(\mathcal{I})}^\frac{1}{\alpha} } / \allowbreak \left(
  2\pi{(3-2\alpha)}^\frac{1}{\alpha}{\| {(-\Delta)}^\frac{\alpha}{2} u
    \|}_{L^2(\mathcal{I})}^\frac{1}{\alpha} \right)$ which yields
\begin{equation}
\begin{aligned}
{ {\| u_x \|}_{L^\infty(\mathcal{I})} }& \leq 
C_{\alpha} \, {\| {(-\Delta)}^\frac{\alpha}{2} u \|}_{L^2(\mathcal{I})}^{\frac{4\alpha-3}{2\alpha}}\ {\| {(-\Delta)}^\alpha u \|}_{L^2(\mathcal{I})}^\frac{3-2\alpha}{2\alpha}, \quad \text{where} \\
C_{\alpha} & = \frac{ 2\alpha }{ \sqrt{\pi}{(4\alpha-3)}^\frac{6\alpha-3}{4\alpha}{(3-2\alpha)}^\frac{3-2\alpha}{2\alpha} }.
\end{aligned}
\label{eq:uLinf}
\end{equation}
We then finally obtain
\begin{align*}
\left| \int_0^1 {(-\Delta)}^\alpha u \cdot u u_x \, dx \right| & \leq {\| u_x \|}_{L^\infty(\mathcal{I})}\ \left| \int_0^1 {\left|{(-\Delta)}^\frac{\alpha}{2} u\right|}^2 dx \right| \\
&\leq C_{\alpha} {\| {(-\Delta)}^\frac{\alpha}{2} u \|}_{L^2(\mathcal{I})}^{\frac{8\alpha-3}{2\alpha}}\ {\| {(-\Delta)}^\alpha u \|}_{L^2(\mathcal{I})}^\frac{3-2\alpha}{2\alpha},
\end{align*}
which proves the lemma.
\end{proof}

We are now in the position to state 
\begin{thm}\label{dEadt}
For $\alpha\in (\frac{3}{4}, 1]$ the rate of growth of fractional enstrophy is subject to the bound
\begin{equation}
\begin{aligned}
\frac{d\E_{\alpha}}{dt} & \leq \sigma_{\alpha} \,\E_{\alpha}^{\gamma_{\alpha}}, 
\quad \text{where} \quad \gamma_{\alpha} = \frac{8\alpha-3}{6\alpha-3} \quad \text{and}  \\
\sigma_{\alpha} &= \frac{ 2^\frac{4\alpha-3}{6\alpha-3}(6\alpha-3) }{ \pi^\frac{2\alpha}{6\alpha-3}\nu^\frac{3-2\alpha}{6\alpha-3}{(3-2\alpha)}^\frac{3-2\alpha}{6\alpha-3}(4\alpha-3) }.
\end{aligned}
\label{eq:dEadt}
\end{equation}
\end{thm}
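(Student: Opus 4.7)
The plan is to mirror the structure of the proof of Theorem \ref{dEdt}, substituting Lemma \ref{R3a} for Lemma \ref{R3}. Starting from the identity \eqref{eq:Ra} and bounding the advective term by its absolute value, I would write
\begin{equation*}
\frac{d\E_\alpha}{dt} \;\leq\; \left|\int_0^1 {(-\Delta)}^\alpha u \cdot u u_x \, dx\right| - \nu\,{\|{(-\Delta)}^\alpha u\|}_{L^2(\I)}^2.
\end{equation*}
Applying Lemma \ref{R3a} to the first term and abbreviating $A := {\|{(-\Delta)}^{\alpha/2}u\|}_{L^2(\I)}$ and $B := {\|{(-\Delta)}^\alpha u\|}_{L^2(\I)}$, the right-hand side becomes $C_\alpha A^{(8\alpha-3)/(2\alpha)} B^{(3-2\alpha)/(2\alpha)} - \nu B^2$, with $C_\alpha$ the explicit constant from \eqref{eq:uLinf}.

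Next I would use Young's inequality $xy \leq \epsilon^p x^p/p + \epsilon^{-q} y^q/q$ on the mixed product, choosing the conjugate pair $p = 4\alpha/(6\alpha-3)$ and $q = 4\alpha/(3-2\alpha)$. These are dictated by the requirement $q\,(3-2\alpha)/(2\alpha) = 2$, so that the $B$-factor emerges exactly as $B^2$ and can be absorbed by the dissipative term. Both $p$ and $q$ are finite and strictly larger than $1$ throughout the admissible range $\alpha \in (3/4, 1]$, and $\epsilon$ is then fixed by the condition $\epsilon^{-q} B^2 / q = \nu B^2$, i.e. $\epsilon = (\nu q)^{1/q}$, which exactly cancels the negative term.

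What remains is an upper bound of the form $\sigma_\alpha A^{p(8\alpha-3)/(2\alpha)}$. Since $p(8\alpha-3)/(2\alpha) = 2(8\alpha-3)/(6\alpha-3) = 2\gamma_\alpha$ and $A^2 = 2\E_\alpha$, this yields the claimed power $\E_\alpha^{\gamma_\alpha}$, and the theorem follows once the prefactor is simplified. The main obstacle will be the bookkeeping: the coefficient collects contributions from $C_\alpha^p$, from $(\nu q)^{-p/q}$, from $1/p$, and from the factor $2^{\gamma_\alpha}$ produced by $A^2 = 2\E_\alpha$. Verifying that the $\alpha$-dependent pieces cancel and that the powers of $2$, $\pi$, $\nu$, $(3-2\alpha)$, $(4\alpha-3)$, and $(6\alpha-3)$ collapse into the exact expression \eqref{eq:dEadt} is purely algebraic but somewhat tedious; I would organize the exponents of each base separately and check, as a sanity step, that the $\alpha \to 1$ limit reproduces a bound consistent with Theorem \ref{dEdt} (since $\E = \E_1$).
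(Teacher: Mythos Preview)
Your proposal is correct and follows essentially the same route as the paper: apply Lemma \ref{R3a} to the cubic term in \eqref{eq:Ra}, then use Young's inequality with the conjugate pair $p = 4\alpha/(6\alpha-3)$, $q = 4\alpha/(3-2\alpha)$ so that the resulting $B^2$ term exactly cancels the dissipation, leaving a pure power of $A^2 = 2\E_\alpha$. The paper's proof is in fact terser than your outline, merely stating that ``Young's inequality is used to estimate the first term on the RHS of \eqref{eq:dEadt2} such that the second term is eliminated''; your explicit identification of $p$, $q$, and $\epsilon$ fills in precisely the details the paper omits.
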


\begin{proof}
  Applying inequality \eqref{eq:R3a} to estimate the cubic integral in
  \eqref{eq:Ra}, we have
\begin{equation}
\frac{d\mathcal{E}_\alpha}{dt} \leq C_{\alpha} \, {\| {(-\Delta)}^\frac{\alpha}{2} u \|}_{L^2(\mathcal{I})}^{\frac{8\alpha-3}{2\alpha}}\ {\| {(-\Delta)}^\alpha u \|}_{L^2(\mathcal{I})}^\frac{3-2\alpha}{2\alpha} - \nu {\| {(-\Delta)}^\alpha u \|}_{L^2(\mathcal{I})}^2\,,
\label{eq:dEadt2}
\end{equation}
where $C_{\alpha}$ is defined in \eqref{eq:uLinf}. Then, Young's
inequality is used to estimate the first term on the RHS of
\eqref{eq:dEadt2} such that the second term is eliminated and we
finally obtain \eqref{eq:dEadt}. The theorem is thus proved.
\end{proof}
{\noindent As regards the somewhat narrow range of applicability
  of estimate \eqref{eq:dEadt}, we remark that it is a consequence of
  the limitations of the ``spectral splitting'' approach used to bound
  $|u_x(x)|$ in \eqref{eq:ux}, cf.~inequality \eqref{eq:ineq1b}.}

\subsection{Relation Between New Estimates \eqref{eq:dEdt} and
  \eqref{eq:dEadt} and Classical Estimate \eqref{eq:dEdt1D}}

In estimates \eqref{eq:dEdt} and \eqref{eq:dEadt}, assuming the
viscosity coefficient $\nu$ is fixed, the exponents and prefactors are
functions of the fractional dissipation order $\alpha$, i.e.,
$\gamma_1 = \gamma_1(\alpha)$, $\sigma_1 = \sigma_1(\alpha)$,
$\gamma_{\alpha} = \gamma_{\alpha}(\alpha)$ and $\sigma_{\alpha} =
\sigma_{\alpha}(\alpha)$. Their dependence on $\alpha \in (1/4,1]$ and
$\alpha \in (3/4,1]$, which are the respective ranges of the
fractional dissipation orders for which the estimates \eqref{eq:dEdt}
and \eqref{eq:dEadt} are valid, is shown in Figure \ref{fig:estim}. It
is clear that $\gamma_1(\alpha), \gamma_{\alpha}(\alpha) \rightarrow
5/3$ as $\alpha \rightarrow 1$, indicating that our upper bounds
\eqref{eq:dEdt} and \eqref{eq:dEadt} are consistent with the original
estimate \eqref{eq:dEdt1D} obtained in \cite{l06,ld08}.
Analogous property holds for the prefactor $\sigma_1(\alpha)$,
  but not for $\sigma_{\alpha}(\alpha)$.  Sharpness of these
estimates will be assessed in Sections \ref{sec:resultsR} and
\ref{sec:resultsRa}.
\begin{figure}
\centering
\subfigure[]{
\label{fig:subfig:a} 
\includegraphics[width=3.35in]{./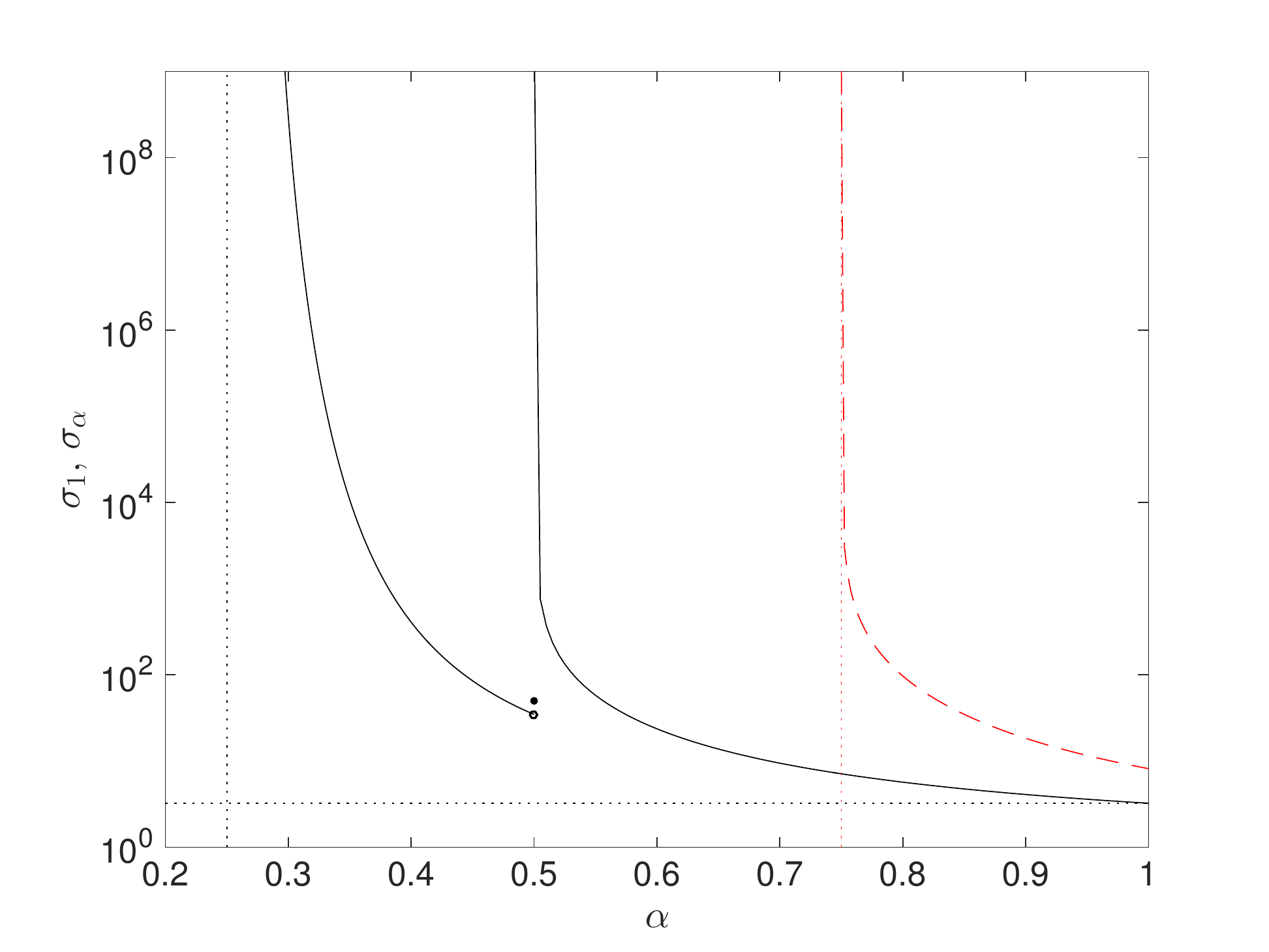}}
\hspace{-0.2in}
\subfigure[]{
\label{fig:subfig:b} 
\includegraphics[width=3.35in]{./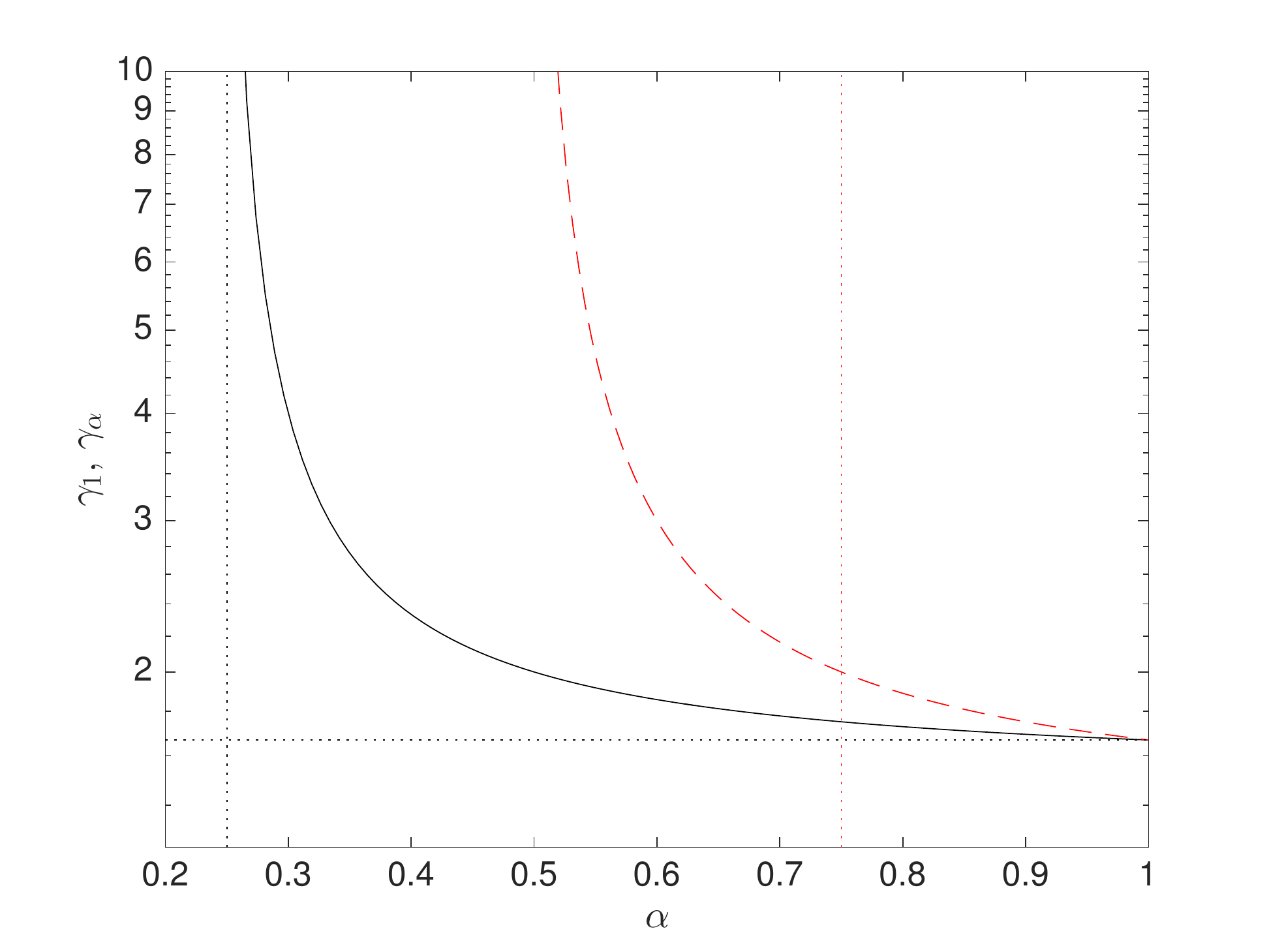}}
\caption{Constant prefactors $\sigma_1$, $\sigma_{\alpha}$ (a) and
  exponents $\gamma_1$, $\gamma_{\alpha}$ (b) of upper bounds
  \eqref{eq:dEdt} (black solid lines) and \eqref{eq:dEadt} (red dashed
  lines) as functions of the fractional dissipation exponent $\alpha$.
  The vertical dashed lines correspond to $\alpha=1/4$ and
    $3/4$, whereas the horizontal dashed lines represent the
  prefactors and exponents from estimate \eqref{eq:dEdt1D}
  corresponding to $\alpha=1$ \cite{l06,ld08}. The viscosity
  coefficient is $\nu=0.01$.}
\label{fig:estim} 
\end{figure}

\section{Methodology for Probing Sharpness of Estimates \eqref{eq:dEdt} and \eqref{eq:dEadt}}
\label{sec:method}

In this section we discuss the approach which will allow us to verify
whether or not the upper bounds \eqref{eq:dEdt} and \eqref{eq:dEadt}
on the rate of growth of the classical and fractional enstrophy are
sharp. An estimate of the type \eqref{eq:dEdt} or \eqref{eq:dEadt} is
considered ``sharp'' when the upper bound on its {right-hand
  side} can be realized by certain fields $u$ with prescribed
(classical or fractional) enstrophy. In other words, in regard to
estimate \eqref{eq:dEdt}, if we can find a family of fields
{$\tilde{u}_{\bar{\E}}$} such that
{$\E(\tilde{u}_{\bar{\E}}) = \bar{\E}$} and {
  $\R_{\E}(\tilde{u}_{\bar{\E}}) \rightarrow \sigma_1 \,
  \bar{\E}^{\gamma_1}$ } when {$\bar{\E} \rightarrow \infty$},
then this estimate is declared sharp (analogously for estimate
\eqref{eq:dEadt}). We note that, given the power-law structure of the
upper bounds in \eqref{eq:dEdt} and \eqref{eq:dEadt}, the question of
sharpness may apply independently to both the exponents $\gamma_1$ and
$\gamma_{\alpha}$ as well as the prefactors $\sigma_1$ and
$\sigma_{\alpha}$ (with the caveat that if the exponent is not sharp,
then the question about sharpness of the corresponding prefactor
becomes moot). A natural way to systematically search for fields
saturating an estimate is by solving suitable variational maximization
problems \cite{l06,ld08,ap11a,ap13a} and such problems corresponding
to estimates \eqref{eq:dEdt} and \eqref{eq:dEadt} are introduced
below.

\subsection{Variational Formulation}
\label{sec:variational}

For given {$\bar{\E} > 0$} and {$\bar{\E}_\alpha > 0$ on
  the RHS} in estimates \eqref{eq:dEdt} and \eqref{eq:dEadt}, we
define the respective maximizers as
\begin{equation}
\begin{aligned}
& {\widetilde{u}_{\bar{\E}}} = \argmax\limits_{u\in {\mathcal {S}_{\bar{\E}}}}  \mathcal {R}_{\mathcal {E}}(u)\,, \\
& {\mathcal {S}_{\bar{\E}}}=\left\{u\in H^{1+\alpha}(\mathcal{I}): \frac{1}{2}\int_0^1 {\left|u_x\right|}^2\,dx={\bar{\E}}\right\}\,,
\end{aligned}
\label{eq:maxR}
\end{equation}
and
\begin{equation}
\begin{aligned}
& { \widetilde{u}_{\bar{\E}_\alpha} } = \argmax\limits_{u\in { \mathcal {S}_{\bar{\E}_\alpha}}}  \mathcal {R}_{\mathcal {E}_\alpha}(u)\,, \\
& {\mathcal {S}_{\bar{\E}_\alpha}} =
\left\{
\begin{alignedat}{2}
& \left\{u\in H^{2\alpha}(\mathcal{I}): \frac{1}{2}\int_0^1 {\left| {(-\Delta)}^\frac{\alpha}{2}u \right|}^2\,dx={\bar{\E}_\alpha} \right\}& \qquad & \text{for} \ \alpha>\frac{1}{2},  \\
& \left\{u\in H^1(\mathcal{I}): \frac{1}{2}\int_0^1 {\left| {(-\Delta)}^\frac{\alpha}{2}u \right|}^2\,dx={\bar{\E}_\alpha} \right\} &&  \text{for} \ \alpha\leq \frac{1}{2},
\end{alignedat}
\right.
\end{aligned}
\label{eq:maxRa}
\end{equation}
where ``$\argmax$'' denotes the state realizing the maximum, whereas
${\mathcal {S}_{\bar{\E}}}$ and ${\mathcal
  {S}_{\bar{\E}_\alpha}}$ represent the constraint manifolds. The
choice of the Sobolev spaces in the definitions of these manifolds is
dictated by the minimum regularity of $u$ required in order to make
the expressions for $\R_{\E}(u)$ and $\R_{\E_{\alpha}}(u)$,
cf.~\eqref{eq:R} and \eqref{eq:Ra}, well defined.  {While
  establishing rigorously the solvability of optimization problems
  \eqref{eq:maxR} and \eqref{eq:maxRa}, especially for large values of
  {$\bar{\E}$} and {$\bar{\E}_\alpha$}, is a difficult task
  (which is outside the scope of the present study), the computational
  results reported in Section \ref{sec:results} indicate that the
  maxima defined by these problems are indeed attained.}  A numerical
approach to solution of maximization problems \eqref{eq:maxR} and
\eqref{eq:maxRa} is presented below.

\subsection{Gradient-Based Solution of Problems  \eqref{eq:maxR} and \eqref{eq:maxRa}}
\label{sec:grad}

To fix attention, we first focus on the solution of problem
\eqref{eq:maxR}. For a fixed {$\bar{\E}$}, its maximizers can be computed as
${\tuEbar} = \lim_{n\rightarrow \infty} u^{(n)}$, where the consecutive
approximations $u^{(n)}$ are defined via the following iterative
procedure (in practice, only a finite number of iterations is
performed)
\begin{subequations}
\label{eq:iter}
\begin{align}
u^{(n+1)} & = \mathbb{P}\mathcal {S}\left(u^{(n)} + \tau_n\nabla^{H^{1+\alpha}}\mathcal {R}_{\mathcal {E}}(u^{(n)}) \right),
\qquad n=1,2,\dots,  \label{eq:itera}\\
u^{(1)} & = u^0\,, \label{eq:iterb}
\end{align}
\end{subequations}
in which $u^0$ is the initial guess chosen such that $u^0 \in
{\mathcal {S}_{\bar{\E}}}$ {and $\int_0^1 u^0 \,dx = 0$},
$\nabla^{H^{1+\alpha}}\R_{\E}(u^{(n)})$ is the Sobolev gradient of
$\R_{\E}(u)$ evaluated at the $n$th iteration with respect to the
suitable Sobolev topology, $\tau_n$ is the corresponding length of the
step and $\PPS \; : \; H^{1+\alpha}(\I) \rightarrow {\mathcal
  {S}_{\bar{\E}}}$ is the projection operator ensuring that every
iterate $u^{(n)}$ is restricted to the constraint manifold
${\mathcal {S}_{\bar{\E}}}$. For simplicity of presentation, in
\eqref{eq:itera} we used the steepest-ascent approach, however, in
practice one can use a more advanced maximization approach
characterized by faster convergence, such as for example the conjugate
gradients method \cite{nw00}. For the maximization problem
\eqref{eq:maxRa} the maximizer
{$\widetilde{u}_{\bar{\E}_\alpha}$} can be computed with an
algorithm similar to \eqref{eq:iter}, but involving the constraint
manifold {$\mathcal {S}_{\bar{\E}_\alpha}$}, the gradient
$\nabla^{{H^{2\alpha}}}\R_{\Ea}$ (or, $\nabla^{{H^{1}}}\R_{\Ea}$) and
the projection operator $\PPS_{\alpha}$, all of which are defined
below.

As regards evaluation of the gradients $\nabla^{H^{1+\alpha}}\R_{\E}$
and $\nabla^{{H^{2\alpha}}}\R_{\Ea}$ (or,
$\nabla^{{H^{1}}}\R_{\Ea}$), the starting point are the G\^ateaux
differentials of the objective functions $\eqref{eq:R}$ and
\eqref{eq:Ra}, defined as
\begin{equation*}
\mathcal {R}'_{\mathcal {E}}(u; v) := \lim_{\epsilon\rightarrow 0} \frac{\mathcal {R}_{\mathcal {E}}(u+\epsilon v)- \mathcal {R}_{\mathcal {E}}(u)}{\epsilon} \quad \text{and} \quad
\mathcal {R}'_{\mathcal {E}_\alpha}(u; v) := \lim_{\epsilon\rightarrow 0} \frac{\mathcal {R}_{\mathcal {E}_\alpha}(u+\epsilon v)-\mathcal {R}_{\mathcal {E}_\alpha}(u)}{\epsilon}
\end{equation*}
which can be evaluated as follows
\begin{subequations}
\label{eq:dR}
\begin{align}
\mathcal {R}'_{\mathcal {E}}(u; v) & = \int_0^1 \left[3u_xu_{xx}+2\nu {(-\Delta)}^\alpha u_{xx}\right]v \, dx\,, \label{eq:dRa} \\
\mathcal {R}'_{\mathcal {E}_\alpha}(u; v) & = \int_0^1 \left[ {(-\Delta)}^\alpha u_x\cdot u - {(-\Delta)}^\alpha(u\cdot u_x)-2\nu {(-\Delta)}^{2\alpha}u \right]v \, dx\,, \label{eq:dRb}
\end{align}
\end{subequations}
where integration by parts has been used to factorize the
``direction'' $v$. Next, recognizing that, for a fixed $u$ and when
regarded as functions of the second argument $v$, the G\^ateaux
differentials \eqref{eq:dRa}--\eqref{eq:dRb} are bounded linear
functionals on the given Sobolev spaces, we can invoke the
Riesz theorem \cite{l69} to obtain
\begin{subequations}
\label{eq:Riesz}
\begin{alignat}{3}
 \mathcal {R}'_{\E}(u; v) 
&=  {\left\langle \nabla^{L^2}\mathcal {R}_{\mathcal {E}}(u), v \right\rangle}_{L^2(\mathcal{I})} &
&= {\left\langle \nabla^{H^{1+\alpha}}\mathcal {R}_{\mathcal {E}}(u), v \right\rangle}_{H^{1+\alpha}(\mathcal{I})}\,,& &  \label{eq:Riesza} \\
 \mathcal {R}'_{\Ea}(u; v) 
&=  {\left\langle \nabla^{L^2}\mathcal {R}_{\Ea}(u), v \right\rangle}_{L^2(\mathcal{I})} &
&= {\left\langle \nabla^{H^{2\alpha}}\mathcal {R}_{\Ea}(u), v \right\rangle}_{H^{2\alpha}(\mathcal{I})}, &
& \quad \text{for} \ \alpha>\frac{1}{2}\,, \label{eq:Rieszb} \\
&&  
&= {\left\langle \nabla^{H^{1}}\mathcal {R}_{\Ea}(u), v \right\rangle}_{H^{1}(\mathcal{I})}, &
& \quad \text{for} \ \alpha \le \frac{1}{2}\,,  \label{eq:Rieszc}
\end{alignat}
\end{subequations}
where the Sobolev inner products are defined as
\begin{subequations}
\label{eq:ip}
\begin{align}
 {\left\langle u, v \right\rangle}_{H^{1+\alpha}(\mathcal{I})} & := \int_0^1 u\,v + \ell_1^2 u_x\,v_x + \ell_2^{2+2\alpha} {(-\Delta)}^\frac{\alpha}{2} u_x\cdot {(-\Delta)}^\frac{\alpha}{2} v_x\,dx\,, \label{eq:ipa} \\
{\left\langle u, v \right\rangle}_{H^{2\alpha}(\mathcal{I})} & := \int_0^1 u\,v + \ell_1^2 u_x\,v_x + \ell_2^{4\alpha} {(-\Delta)}^{\alpha-\frac{1}{2}} u_x\cdot {(-\Delta)}^{\alpha-\frac{1}{2}} v_x\,dx\,, \label{eq:ipb} \\
{\left\langle u, v \right\rangle}_{H^1(\mathcal{I})} & := \int_0^1 u\,v + \ell_1^1 u_x v_x \, dx\, \label{eq:ipc}
\end{align}
\end{subequations}
in which $\ell_1$ and $\ell_2$ are constants with the dimension of a
length-scale (the significance of the choice of these constants will
be discussed below). In order to characterize the Sobolev gradients
defined in the spaces $H^{1+\alpha}(\I)$, $H^{2\alpha}(\I)$ and
$H^{1}(\I)$, cf.~\eqref{eq:maxR}--\eqref{eq:maxRa}, we first derive
expressions for the $L^2$ gradients from
  \eqref{eq:dRa}--\eqref{eq:dRb}
\begin{subequations}
\label{eq:gradL2}
\begin{align}
\nabla^{L^2}\mathcal {R}_{\mathcal {E}}(u) & = 3u_xu_{xx}+2\nu {(-\Delta)}^\alpha u_{xx}\,, \label{eq:gradL2a} \\
\nabla^{L^2}\mathcal {R}_{\mathcal {E}_\alpha}(u) & = {(-\Delta)}^\alpha u_x\cdot u - {(-\Delta)}^\alpha(u\cdot u_x)-2\nu {(-\Delta)}^{2\alpha}u \label{eq:gradL2b}
\end{align}
\end{subequations}
and then invoke Riesz relations \eqref{eq:Riesza}--\eqref{eq:Rieszc}
which, upon performing integration by parts and noting that $v$
  is arbitrary, yields
\begin{subequations}
\label{eq:gradH}
\begin{alignat}{2}
 \left[  {1} - l_1^2 \Delta - l_2^{2+2\alpha} {(-\Delta)}^\alpha \Delta \right] \nabla^{H^{1+\alpha}}\mathcal {R}_{\mathcal {E}}(u) & = \nabla^{L^2}\mathcal {R}_{\mathcal {E}}(u)\,, & \qquad & \text{on} \ \I, \label{eq:gradHa} \\
 \left[  {1} - l_1^2 \Delta - l_2^{4\alpha} {(-\Delta)}^{2\alpha-1} \Delta \right] \nabla^{H^{2\alpha}}\mathcal {R}_{\mathcal {E}_\alpha}(u) & = \nabla^{L^2}\mathcal {R}_{\mathcal {E}_\alpha}(u)\,,  & & \text{on} \ \I, \label{eq:gradHb} \\
 \left[  {1} - l_1^2 \Delta \right] \nabla^{H^{1}}\mathcal {R}_{\mathcal {E}_\alpha}(u) & = \nabla^{L^2}\mathcal {R}_{\mathcal {E}_\alpha}(u)\,,  & & \text{on} \ \I, \label{eq:gradHc} \\
& \hspace*{-3.0cm}  \text{Periodic Boundary Conditions.} && \nonumber
\end{alignat}
\end{subequations}
These boundary-value problems allow us to determine the required
Sobolev gradients in terms of the $L^2$ gradients
\eqref{eq:gradL2a}--\eqref{eq:gradL2b}. We now return to the question of
the choice of the length-scale parameters $\ell_1$ and $\ell_2$. As is
evident from the form of these expressions, for different values of
$\ell_1$ and $\ell_2$ inner products
\eqref{eq:ipa}--\eqref{eq:ipc} define {\em equivalent} norms as
long as $\ell_1, \ell_2 \in (0,\infty)$. On the other hand, as
demonstrated in \cite{pbh04}, computation of Sobolev gradients by
solving elliptic boundary-value problems
\eqref{eq:gradHa}--\eqref{eq:gradHc} can be interpreted as application
of spectral low-pass filters to the $L^2$ gradients with the
parameters $\ell_1$ and $\ell_2$ defining the cut-off length-scales.
Thus, while for sufficiently ``good'' initial guesses $u^0$ iterations
of the type \eqref{eq:iter} with different values of $\ell_1$ and
$\ell_2$ lead to the {\em same} maximizer {$\tuEbar$}, the actual rate of
convergence usually depends very strongly on the choice of these
parameters \cite{ap11a,ap13a}.  \medskip

Since the constraints defining the manifolds in \eqref{eq:maxR} and
\eqref{eq:maxRa} are quadratic, the corresponding projection operators
are naturally defined in terms of the following rescalings
(normalizations)
\begin{subequations}
\begin{align}
 \mathbb{P}\mathcal {S}(u) & := \sqrt{ \frac{{\bar{\E}}}{ \frac{1}{2}\int_0^1 {|u_x|}^2\,dx } }\ u \,, \label{eq:PN} \\
 \mathbb{P}\mathcal {S}_\alpha(u) & := \sqrt{ \frac{{\bar{\E}_\alpha}}{ \frac{1}{2}\int_0^1 {\big| {(-\Delta)}^\frac{\alpha}{2}u \big|}^2\,dx } }\ u\ \label{eq:PNa}
\end{align}
\end{subequations}
{which in the language of optimization on manifolds can be
  interpreted as ``retractions'' from the tangent subspace to the
  constraint manifold \cite{ams08}. The form of expressions
  \eqref{eq:PN} and \eqref{eq:PNa} is particularly simple, because the
  constraint manifolds ${\mathcal {S}_{\bar{\E}}}$ and
  ${\mathcal {S}_{\bar{\E}_\alpha}}$, cf.~\eqref{eq:maxR} and
  \eqref{eq:maxRa}, can be regarded as ``spheres'' in their respective
  functional spaces.} Finally, the optimal step length $\tau_n$ in
\eqref{eq:iter} is found by solving an arc-minimization problem
\begin{equation}
\tau_n :=  \argmax\limits_{\tau>0} \left\{ \mathcal {R}_{\mathcal {E}}
\left[ \mathbb{P}\mathcal {S}\left( u^{(n)}+\tau\nabla^{H^{1+\alpha}} \mathcal {R}_{\mathcal {S}}(u^{(n)}) \right) \right]\right\}\, 
\label{eq:taun}
\end{equation}
which is an adaptation of standard line minimization \cite{nw00} to
the case with quadratic constraints. This step is performed with a
straightforward generalization of Brent's method \cite{numRecipes} and
an analogous approach is also used to compute the step size when
solving problem \eqref{eq:maxRa}.

\subsection{Tracing Solutions Branches via Continuation}
\label{sec:continuation}

Families of maximizers {$\tuEbar$} corresponding to a range of enstrophy
values ${\bar{\E}} = \E^{(m)}$, $m=0,1,\dots$, are obtained using a
continuation approach where the solution $\widetilde{u}_{\E^{(m)}}$
determined by the iteration process \eqref{eq:iter} at some enstrophy
value $\E^{(m)}$ is used as the initial guess $u^0$ for iterations at
the next, slightly larger, value $\E^{(m+1)} = \E^{(m)} + \Delta\E$
for some $\Delta\E > 0$. By choosing sufficiently small steps
$\Delta\E$, one can ensure that the iterations at a given enstrophy
value are rapidly convergent. The same continuation approach is also
used to compute the families of the maximizers {$\widetilde{u}_{\bar{\E}_\alpha}$}.

\subsection{Rates of Growth of Enstrophy for $\alpha = 0$}
\label{sec:a0}

To close this section, we provide some comments about the rates of
growth of the classical and fractional enstrophy in the case when
$\alpha = 0$. As regards the first quantity, from \eqref{eq:R} we see
that
\begin{equation}
\R_{\E}(u) \, \le \, \frac{1}{2} \| u_x \|_{L^3(\I)}^3 - \nu  \| u_x \|_{L^2(\I)}^2 
\label{eq:R0}
\end{equation}
which, in view of the property $\| v \|_{L^p(\I)} \le \| v
\|_{L^q(\I)}$ true for $p \le q$ \cite{af05}, implies that for $u \in
\SE$, $\R_{\E}(u)$ may not be bounded and hence the maximization
problem \eqref{eq:maxR} does not have solutions when $\alpha = 0$.
Concerning the rate of growth of the fractional enstrophy, from
\eqref{eq:Ra} we obtain
\begin{equation}
\frac{d\K(u)}{dt}  = -\K(u),
\label{eq:Ra0}
\end{equation}
where $\K(u) := \E_0(u) = (1/2) \int_0^1 u^2 \, dx$ is the kinetic
energy and we used the property that $\int_0^1 uuu_x\, dx = 0$. Then,
the maximization problem \eqref{eq:maxRa} takes the form $\max_{u, \;
  \K(u) = {\bar{\K}}} \left[ - \K(u)\right]$ for some
${\bar{\K}} > 0$, and it is clear that any function $u \in
H^1(\I)$ such that $\K(u) = {\bar{\K}}$ is a solution. Thus, when
$\alpha = 0$, the maximization problem \eqref{eq:maxRa} has infinitely
many solutions.

\section{Computational Results}
\label{sec:results}

In this section we present and analyze the results obtained by solving
problems \eqref{eq:maxR} and \eqref{eq:maxRa} for a broad range of
{$\bar{\E}$} and {$\bar{\E}_\alpha$} and for different
values of the fractional dissipation exponent $\alpha \in [0,1]$. We
begin by describing the numerical techniques employed to discretize
the approach introduced in Section \ref{sec:grad} and then summarize
the values of the different numerical parameters used.  \medskip

For a given field $u$, the gradient expressions
\eqref{eq:gradL2a}--\eqref{eq:gradL2b} are evaluated using a
spectrally accurate Fourier-Galerkin approach in which the nonlinear
terms are computed in the physical space with dealiasing based on the
$2/3$ rule \cite{b01}. A similar Fourier-Galerkin approach is also
used to solve the boundary-value problems
\eqref{eq:gradHa}--\eqref{eq:gradHc} for the Sobolev gradients. As
will be discussed in more detail below, the maximizers
{$\tuEbar$} and {$\tuEabar$} corresponding to increasing
values of, respectively, {$\bar{\E}$} and
{$\bar{\E}_\alpha$} are characterized by shock-like steep fronts
of decreasing thickness.  Resolving these regions accurately requires
numerical resolutions (given in terms of the numbers $N$ of Fourier
modes) increasing with {$\bar{\E}$} and {$\bar{\E}_\alpha$}.
In our computations we used the resolutions $N = 512,\dots,8388608$
which were refined adaptively based on the criterion that the Fourier
coefficients corresponding to several largest resolved wavenumbers be
at the level of the round-off errors, i.e., $|\hat{u}_k| \sim
\O(10^{-14})$ for $k \lessapprox N$.  {By carefully performing
  such grid refinement it was possible to assert that in all cases the
  computed approximations converge to well-defined maximizers.}  In
all cases considered the viscosity was set to $\nu = 0.01$. The
length-scale parameters $\ell_1$ and $\ell_2$ in the definitions of
the Sobolev inner products \eqref{eq:ip} were chosen to maximize the
rate of convergence of iterations \eqref{eq:iter} for given values of
$\alpha$, {$\bar{\E}$} or {$\bar{\E}_\alpha$}. The best
results were obtained for $\ell_1 \in [1,10]$ and $\ell_2 \in
[10^{-6},1]$ with smaller values corresponding to larger
{$\bar{\E}$} and {$\bar{\E}_\alpha$}. Since the
gradient-based method from Section \ref{sec:grad} may find {\em local}
maximizers only, in addition to the continuation approach described in
Section \ref{sec:continuation}, we have also started iterations
\eqref{eq:iter} with several different initial guesses $u^0$ intended
to nudge the iterations towards other possible local maximizers
(typically, $u^0(x) = A \sin(2\pi k x)$ with $k=1,2,\dots$ and $A$
chosen so that $\E(u^0)$ or $\Ea(u^0)$ was equal to a prescribed
value). However, these attempts did not reveal any additional
maximizers other than the ones already found using the continuation
approach.

\subsection{Maximum Growth Rate of Classical Enstrophy}
\label{sec:resultsR}

We consider solutions to maximization problem \eqref{eq:maxR} for
$\alpha \in (1/4,1]$, which is the range of fractional dissipation
exponents for which estimate \eqref{eq:dEdt} is valid. The maximum
rate of growth $\R_{\E}({\tuEbar})$ is shown as a function of
{$\bar{\E}$} for different values of $\alpha$ in Figure
\ref{fig:R}.  In this figure we also indicate the upper bounds from
estimate \eqref{eq:dEdt}. The corresponding maximizers
{$\tuEbar$} are shown both in the physical and spectral space in
Figure \ref{fig:tuE} for $ {\bar{\E}}=5,50,500$ and $\alpha \in
[0.5,1]$. It is evident from this figure that the sharp fronts in
these maximizers become steeper with increasing {$\bar{\E}$} and
this effect is more pronounced for smaller values of $\alpha$. This
aspect is further illustrated in Figure \ref{fig:tuE2} where the
maximizers are shown for $\alpha=0.3,0.4$ and small enstrophy values.
Needless to say, accurate determination of maximizers {$\tuEbar$}
for such small values of the fractional dissipation exponent $\alpha$
requires a very refined numerical resolution, cf.~Figure
\ref{fig:tuE2}(d), making the optimization problem \eqref{eq:maxR}
harder and more costly to solve.  This also explains why for small
values of $\alpha$ the data for $\R_{\E}({\tuEbar})$ in Figure
\ref{fig:R} is available only for small {$\bar{\E}$}.  The
relation $\R_{\E}({\tuEbar})$ versus {$\bar{\E}$} is
characterized by certain generic properties evident for all values of
$\alpha$ --- while for small {$\bar{\E}$} the quantity
$\R_{\E}({\tuEbar})$ exhibits a steep growth with
{$\bar{\E}$}, for larger values of {$\bar{\E}$} it develops
a power-law dependence on {$\bar{\E}$}. This behavior can be
quantified by fitting the relation $\R_{\E}({\tuEbar})$ versus
{$\bar{\E}$} locally with the formula $\tsigma_1
{\bar{\E}}^{\tgamma_1}$ and determining the parameters
$\tsigma_1$ and $\tgamma_1$ as functions of {$\bar{\E}$} via a
least-squares procedure.  The dependence of thus determined prefactor
$\tsigma_1$ and exponent $\tgamma_1$ on {$\bar{\E}$} is shown for
different $\alpha$ in Figures \ref{fig:fit1}(a,c,e) and
\ref{fig:fit1}(b,d,f), respectively.  In these figures we also
indicate the values of $\sigma_1$ and $\gamma_1$ obtained in estimate
\eqref{eq:dEdt}. We observe, that as {$\bar{\E}$} increases, both
the prefactor and the exponent obtained via the least-squares fit
approach well-defined limiting values. These limiting values are then
compared against the relations $\sigma_1 = \sigma_1(\alpha)$ and
$\gamma_1 = \gamma_1(\alpha)$ from estimate \eqref{eq:dEdt} in Figures
\ref{fig:sigmagamma1}(a,b). It is clear from Figure
\ref{fig:sigmagamma1}(b) that there is a good quantitative agreement
between the exponent in estimate \eqref{eq:dEdt} and the computational
results. On the other hand, in Figure \ref{fig:sigmagamma1}(a) we see
that the numerically determined prefactors are smaller than the
prefactor derived in estimate \eqref{eq:dEdt}, although they do
exhibit similar trends with $\alpha$ (except for the discontinuity of
the latter at $\alpha = 1/2$). This demonstrates that exponent
$\gamma_1$ in estimate \eqref{eq:dEdt} is sharp, while prefactor
$\sigma_1$ might be improved. We add that we also attempted to solve
the maximization problem \eqref{eq:maxR} for $\alpha \in (0,1/4]$,
however, we were unable to obtain converged solutions. In agreement
with the discussion of the case $\alpha = 0$ in Section \ref{sec:a0},
this indicates that $d\E/dt$ may be unbounded for $\alpha \le 1/4$.

\begin{figure}
\begin{center}
\includegraphics[width=0.95\textwidth]{./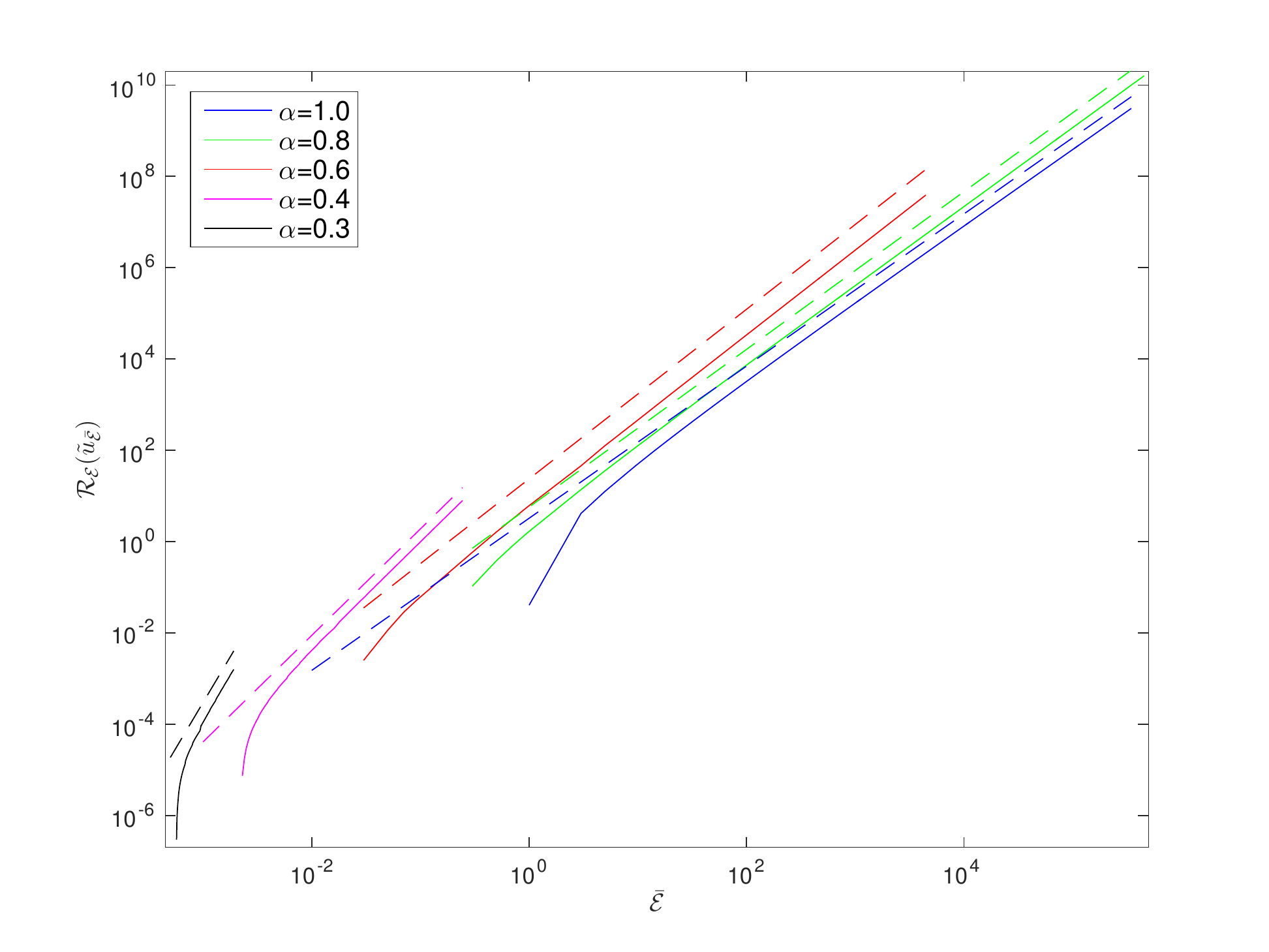}
\caption{Dependence of the maximum enstrophy rate of growth
  $\R_{\E}({\tuEbar})$, obtained by solving optimization problems
  \eqref{eq:maxR}, on {$\bar{\E}$} for different values of $\alpha$ (solid
  lines). The dashed lines represent the corresponding upper bounds
  from estimate \eqref{eq:dEdt}.}
\label{fig:R}
\end{center}
\end{figure}

\begin{figure}
\centering
\subfigure[][]{\includegraphics[width=3.35in]{./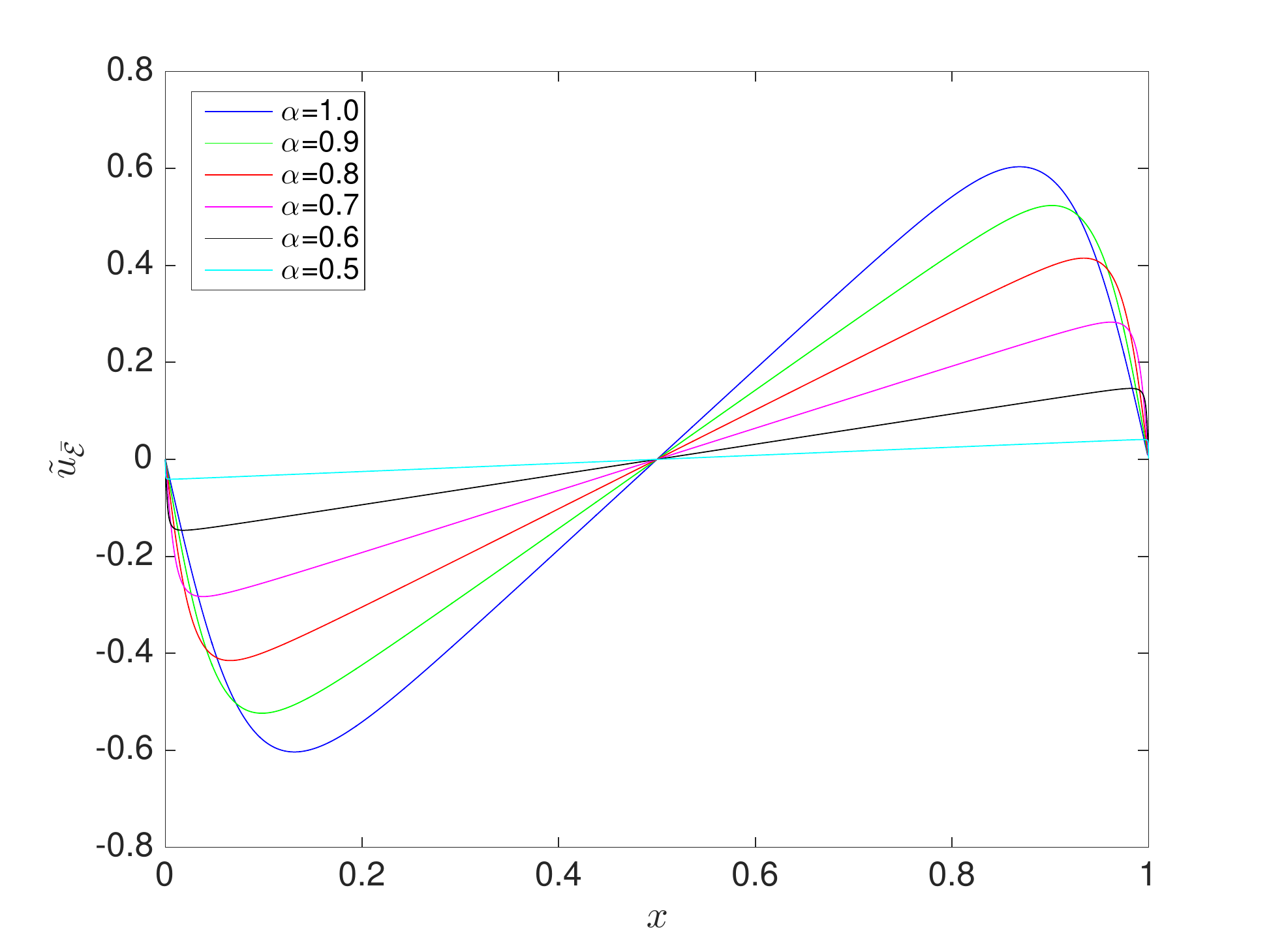}}
\hspace{-0.2in}
\subfigure[][]{\includegraphics[width=3.35in]{./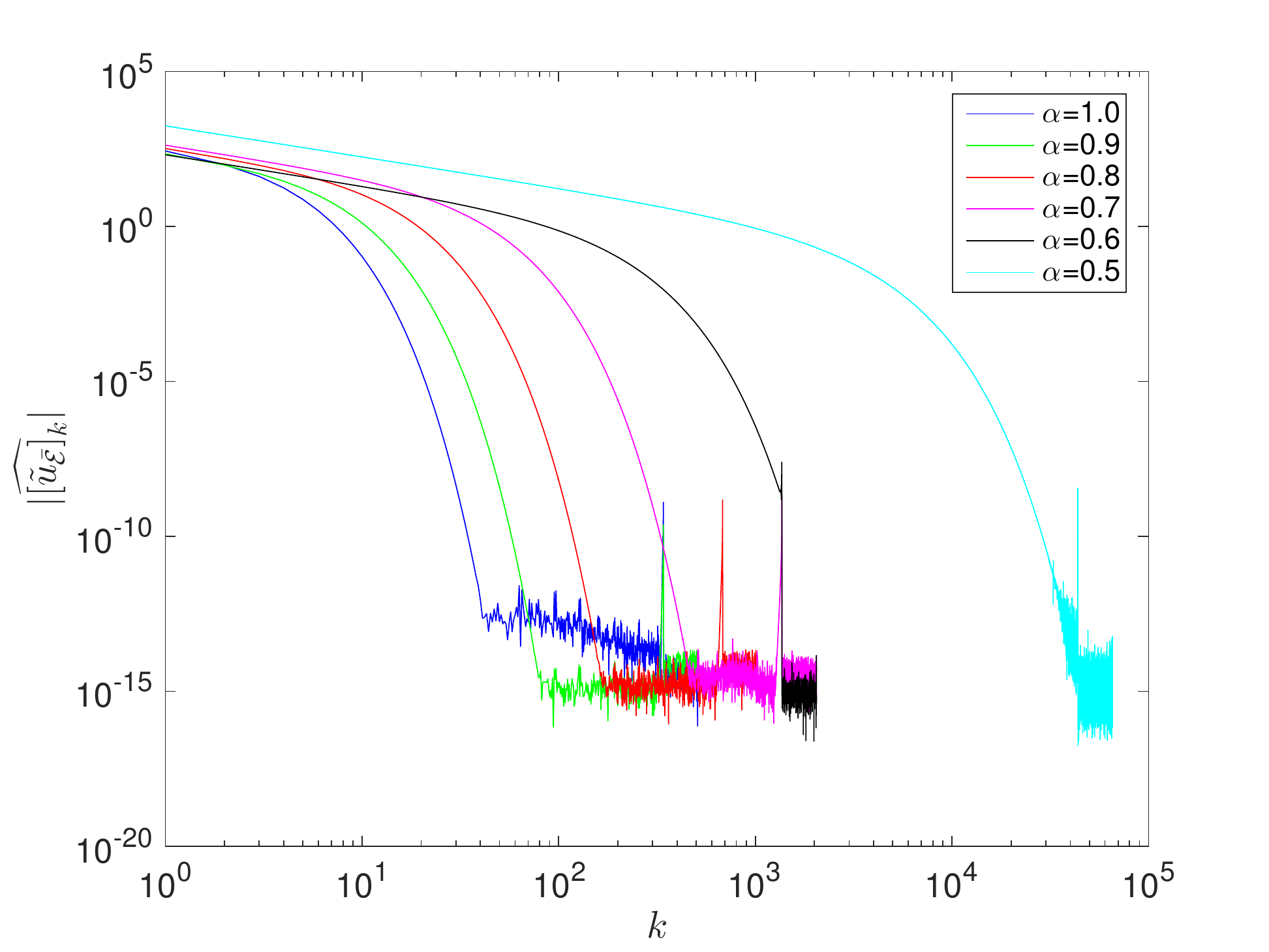}}
\subfigure[][]{\includegraphics[width=3.35in]{./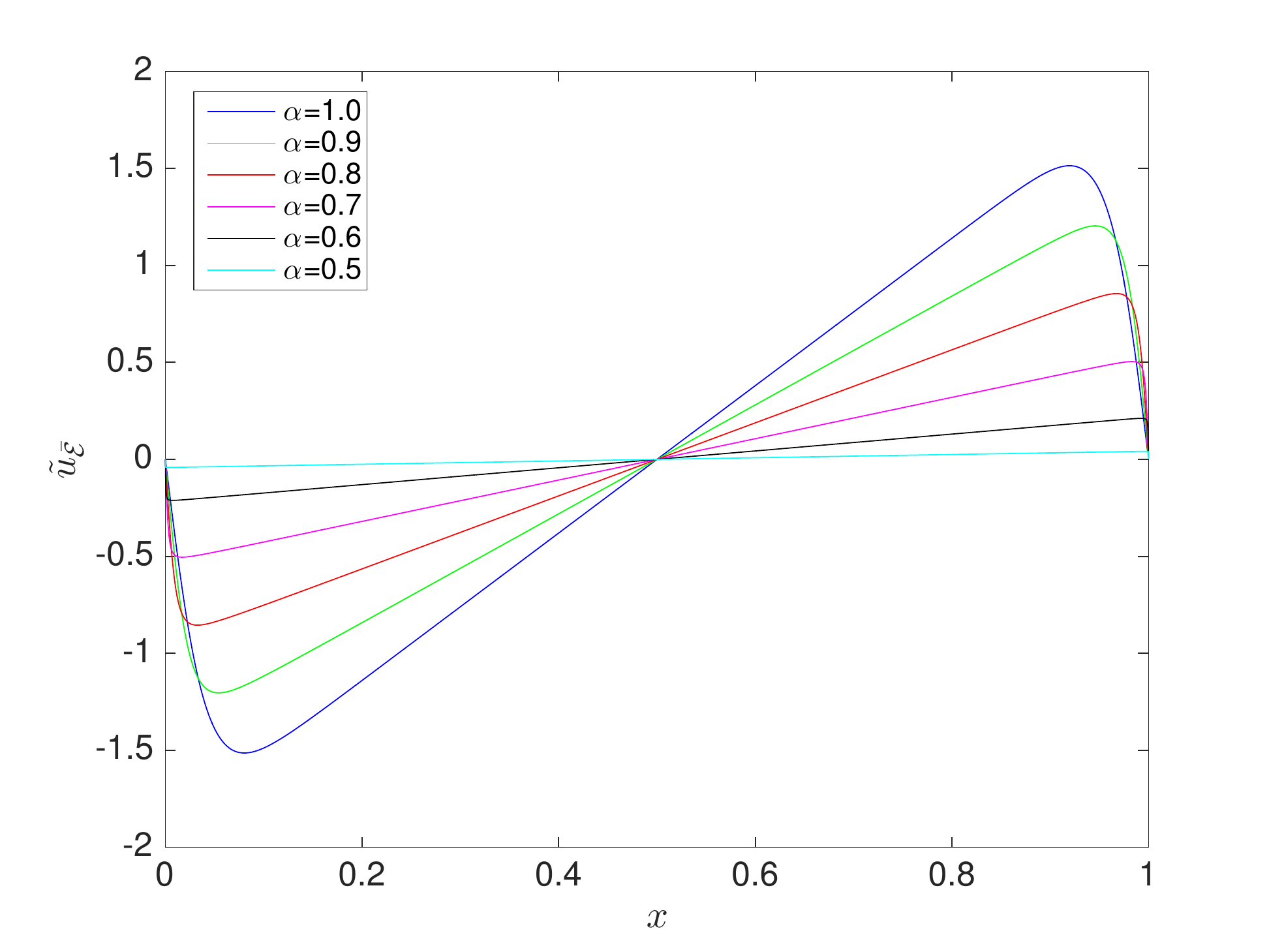}}
\hspace{-0.2in}
\subfigure[][]{\includegraphics[width=3.35in]{./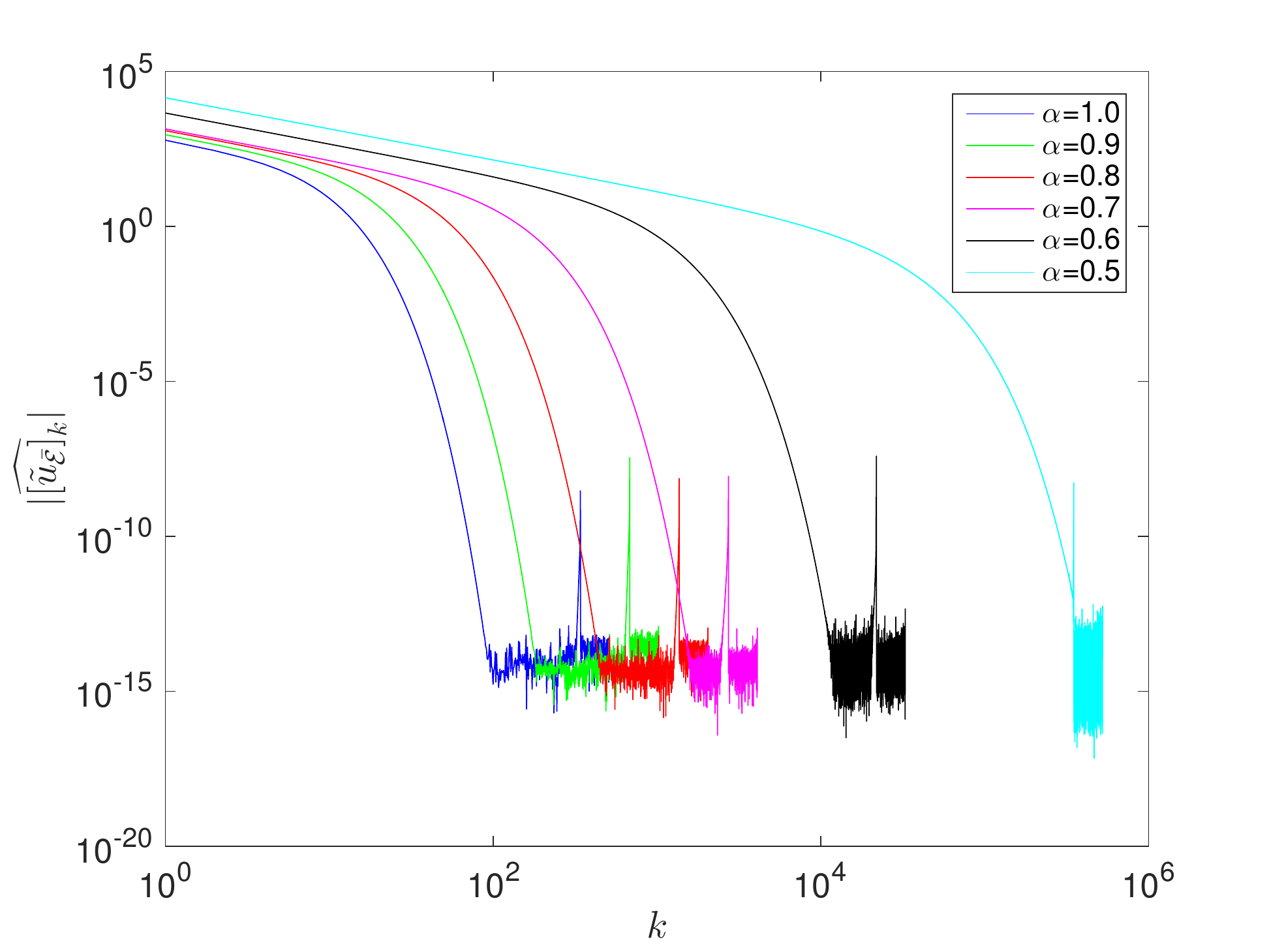}}
\subfigure[][]{\includegraphics[width=3.35in]{./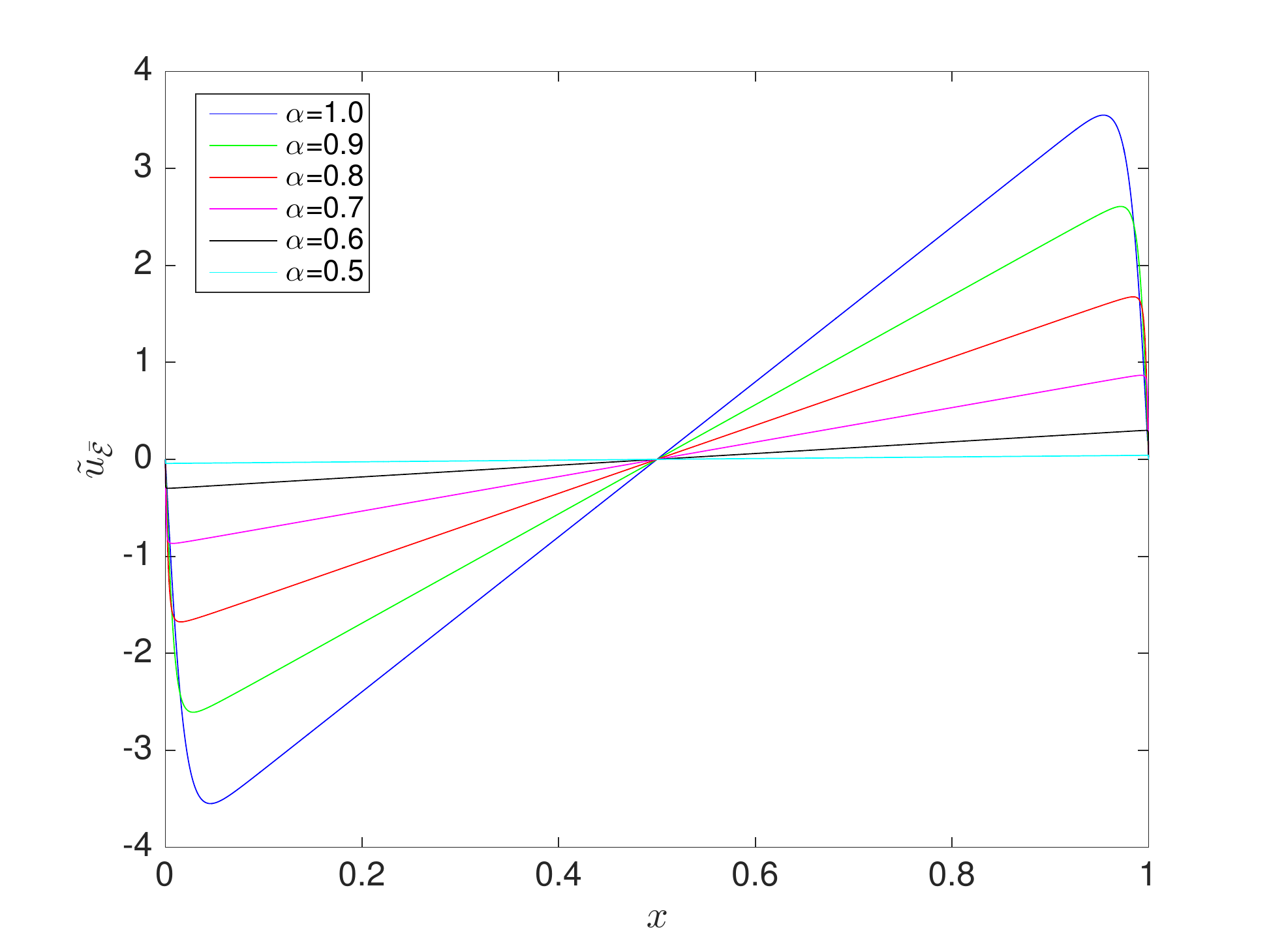}}
\hspace{-0.2in}
\subfigure[][]{\includegraphics[width=3.35in]{./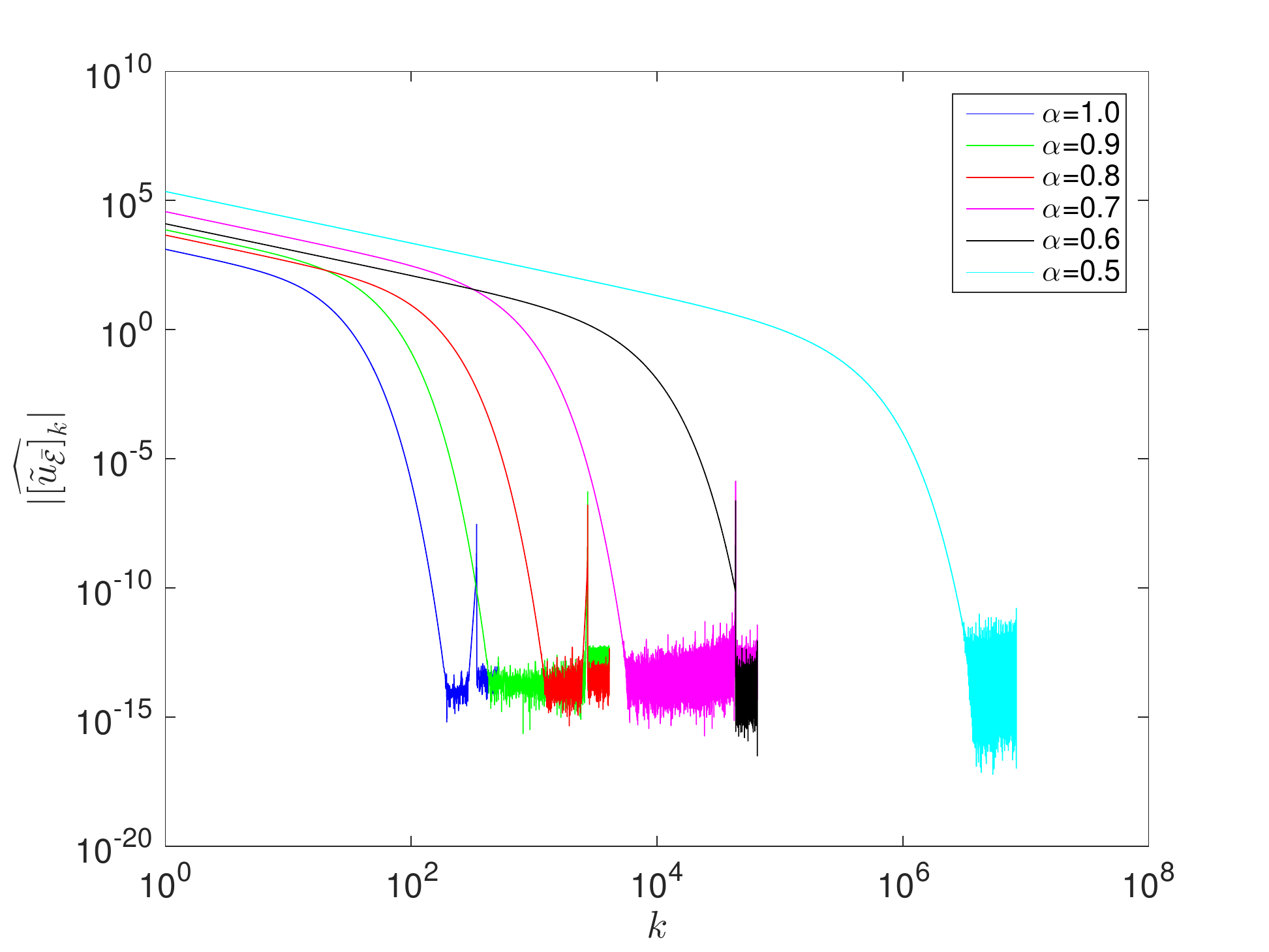}}
\caption{Maximizers {$\tuEbar$} obtained for ${\bar{\E}}=5$ (a,b), ${\bar{\E}}=50$ (c,d)
  and ${\bar{\E}}=500$ (e,f) and different values of $\alpha$. The fields are
  shown in the physical (a,c,e) and spectral (b,d,f) space.}
\label{fig:tuE}
\end{figure}

\begin{figure}
\centering
\subfigure[][]{\includegraphics[width=3.35in]{./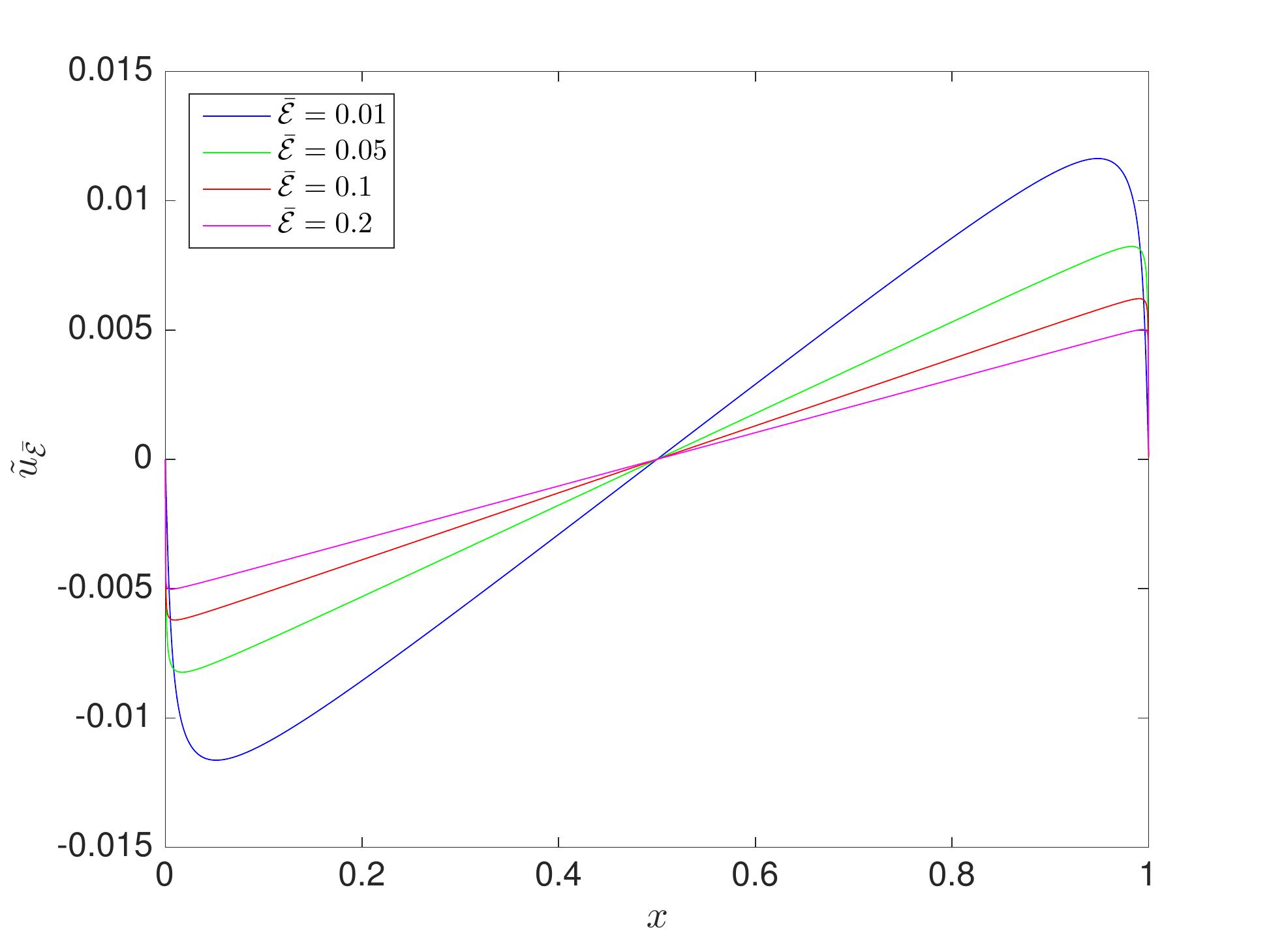}}
\hspace{-0.2in}
\subfigure[][]{\includegraphics[width=3.35in]{./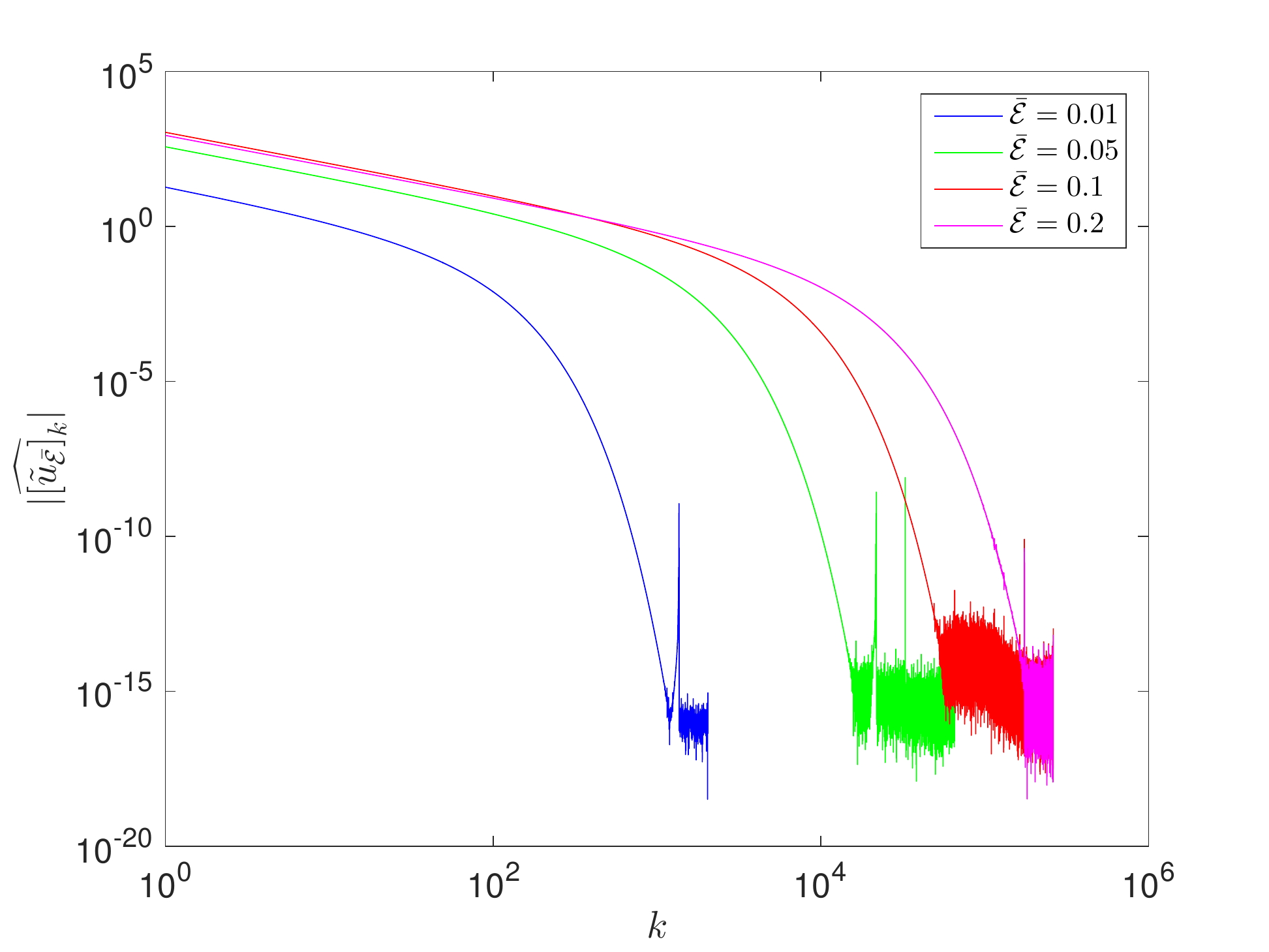}}
\subfigure[][]{\includegraphics[width=3.35in]{./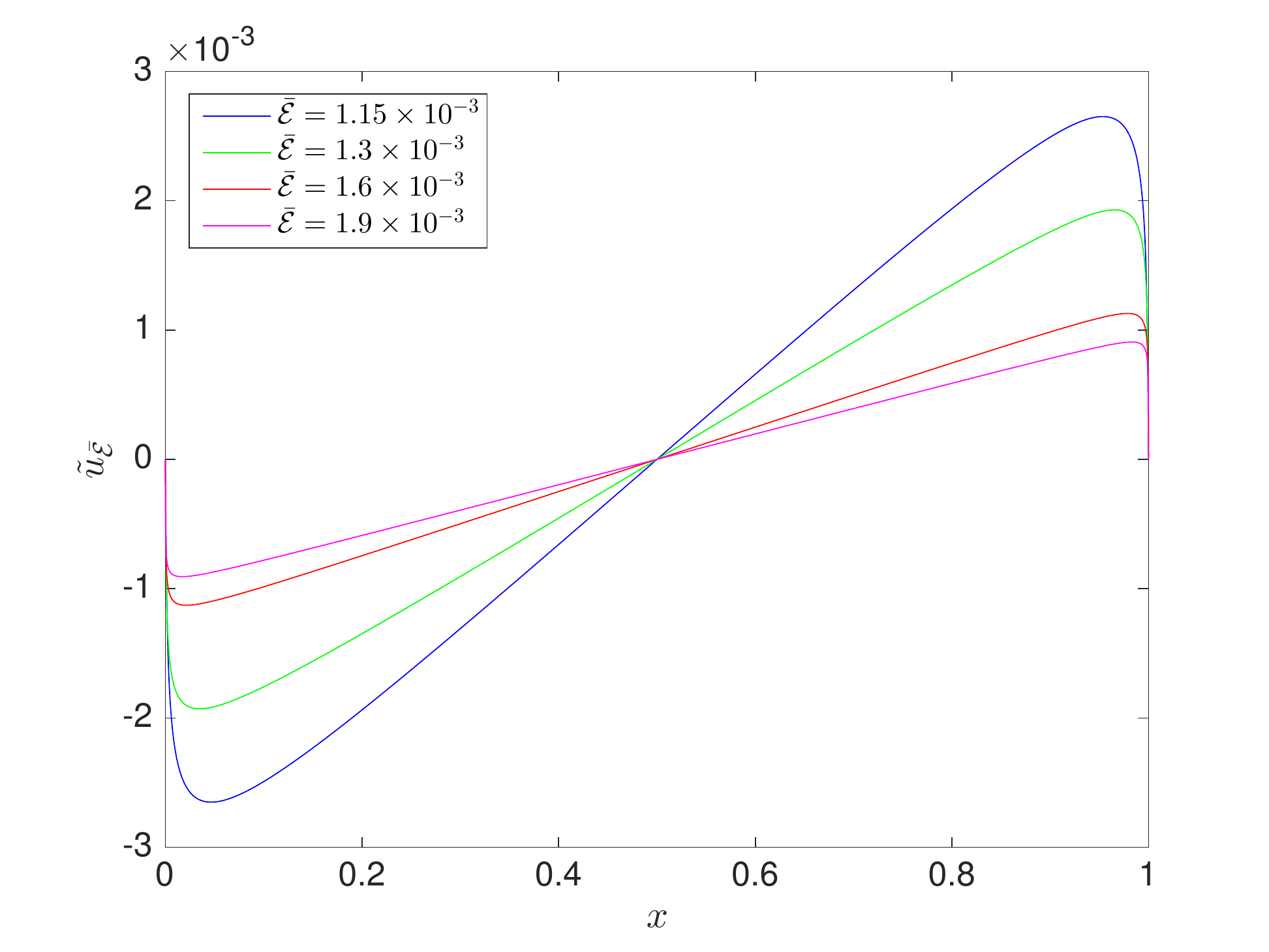}}
\hspace{-0.2in}
\subfigure[][]{\includegraphics[width=3.35in]{./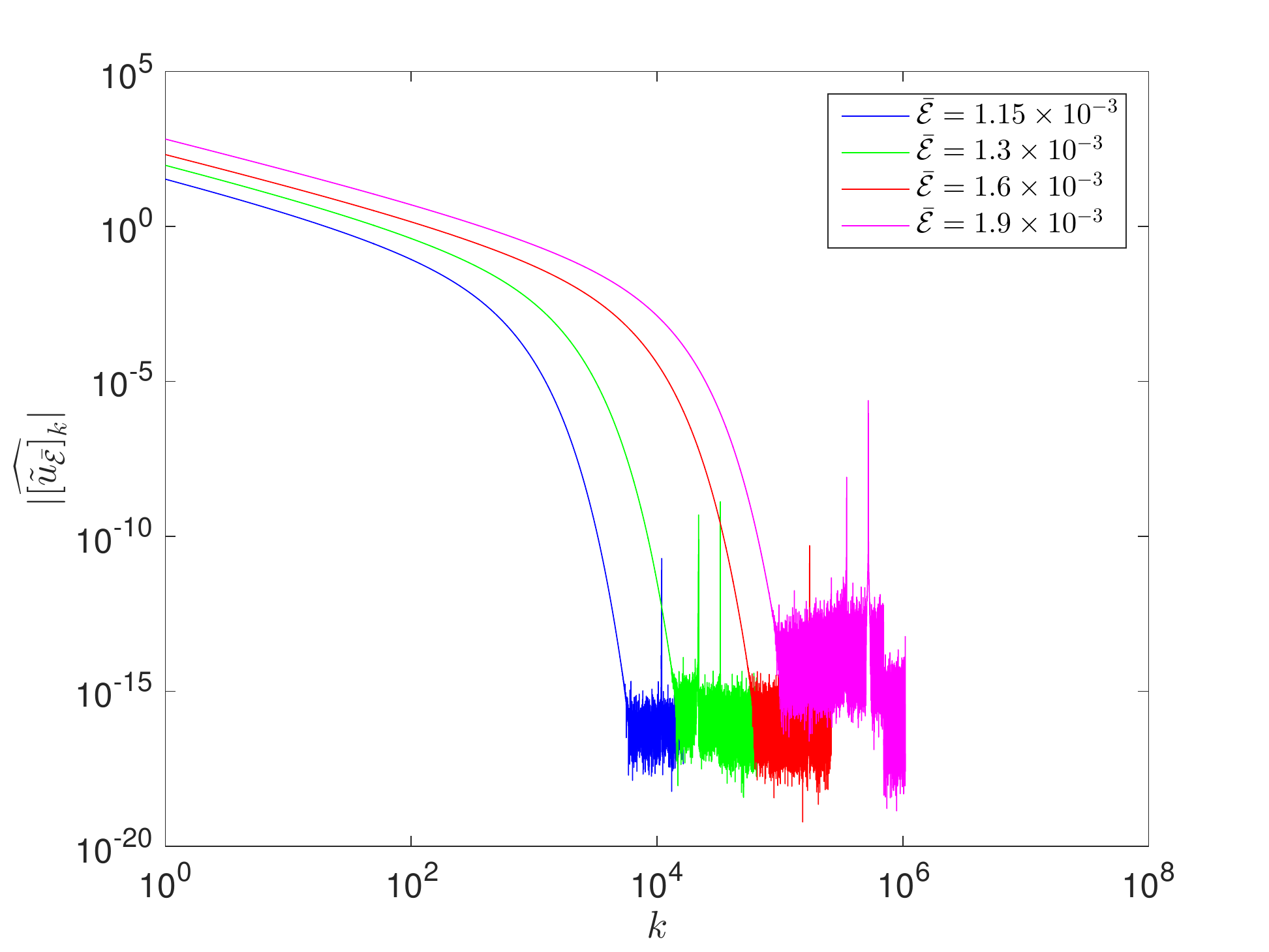}}
\caption{Maximizers {$\tuEbar$} obtained for $\alpha=0.4$ (a,b) and
  $\alpha=0.3$ (c,d) and different values of {$\bar{\E}$}. The fields are
  shown in the physical (a,c) and spectral (b,d) space.}
\label{fig:tuE2}
\end{figure}

\begin{figure}
\centering
\subfigure[][]{\includegraphics[width=3.35in]{./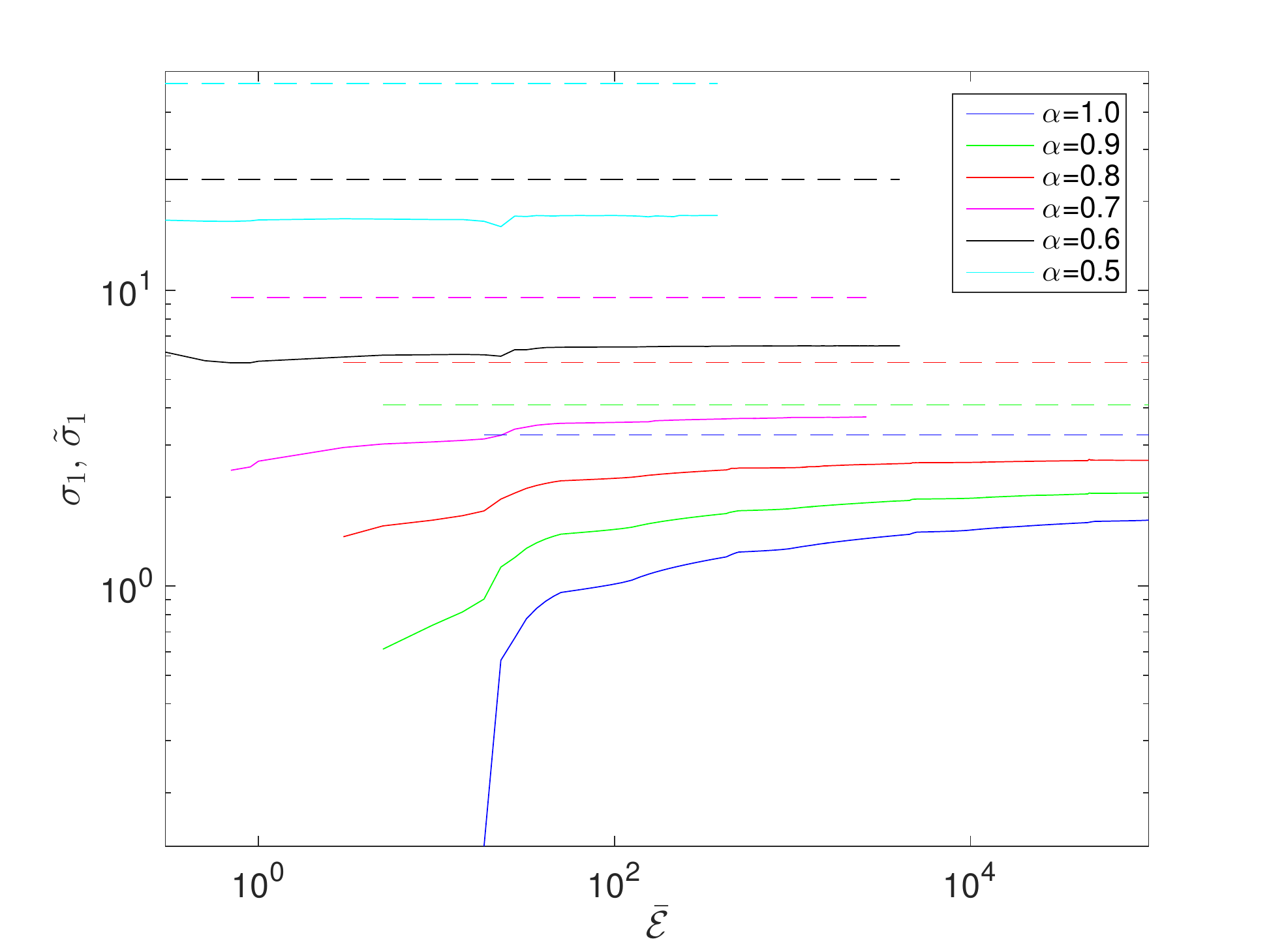}}
\hspace{-0.2in}
\subfigure[][]{\includegraphics[width=3.35in]{./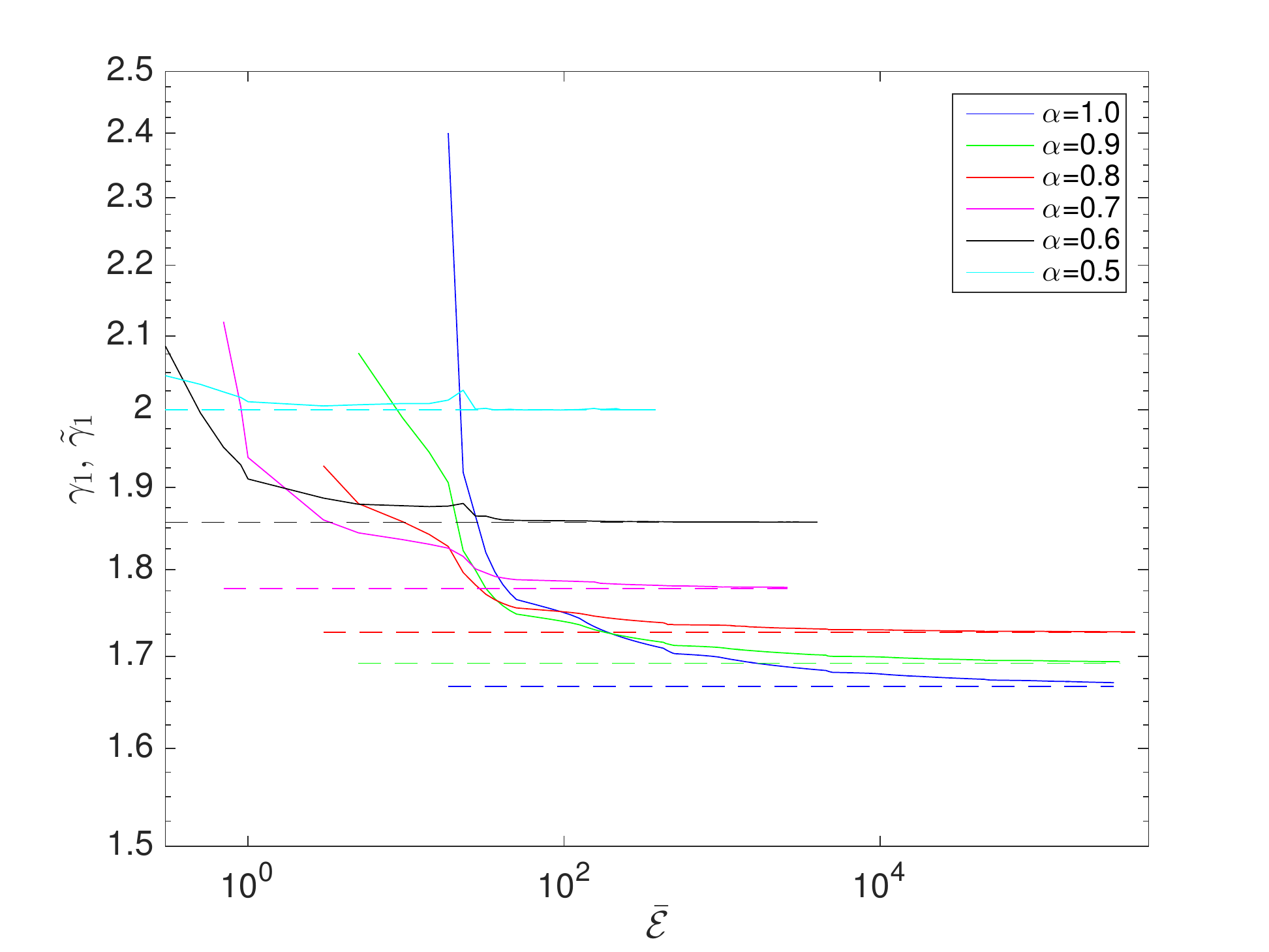}}
\subfigure[][]{\includegraphics[width=3.35in]{./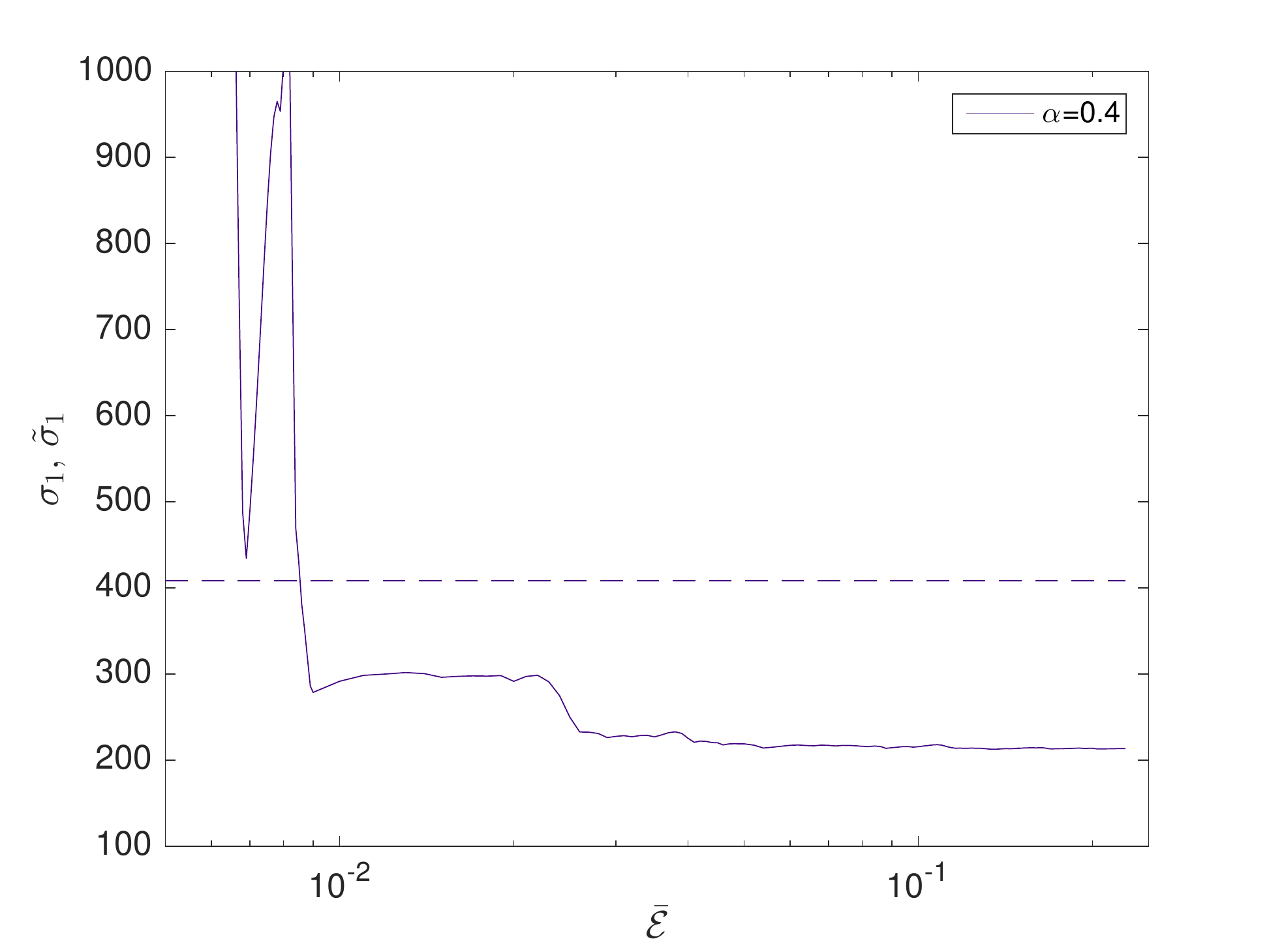}}
\hspace{-0.2in}
\subfigure[][]{\includegraphics[width=3.35in]{./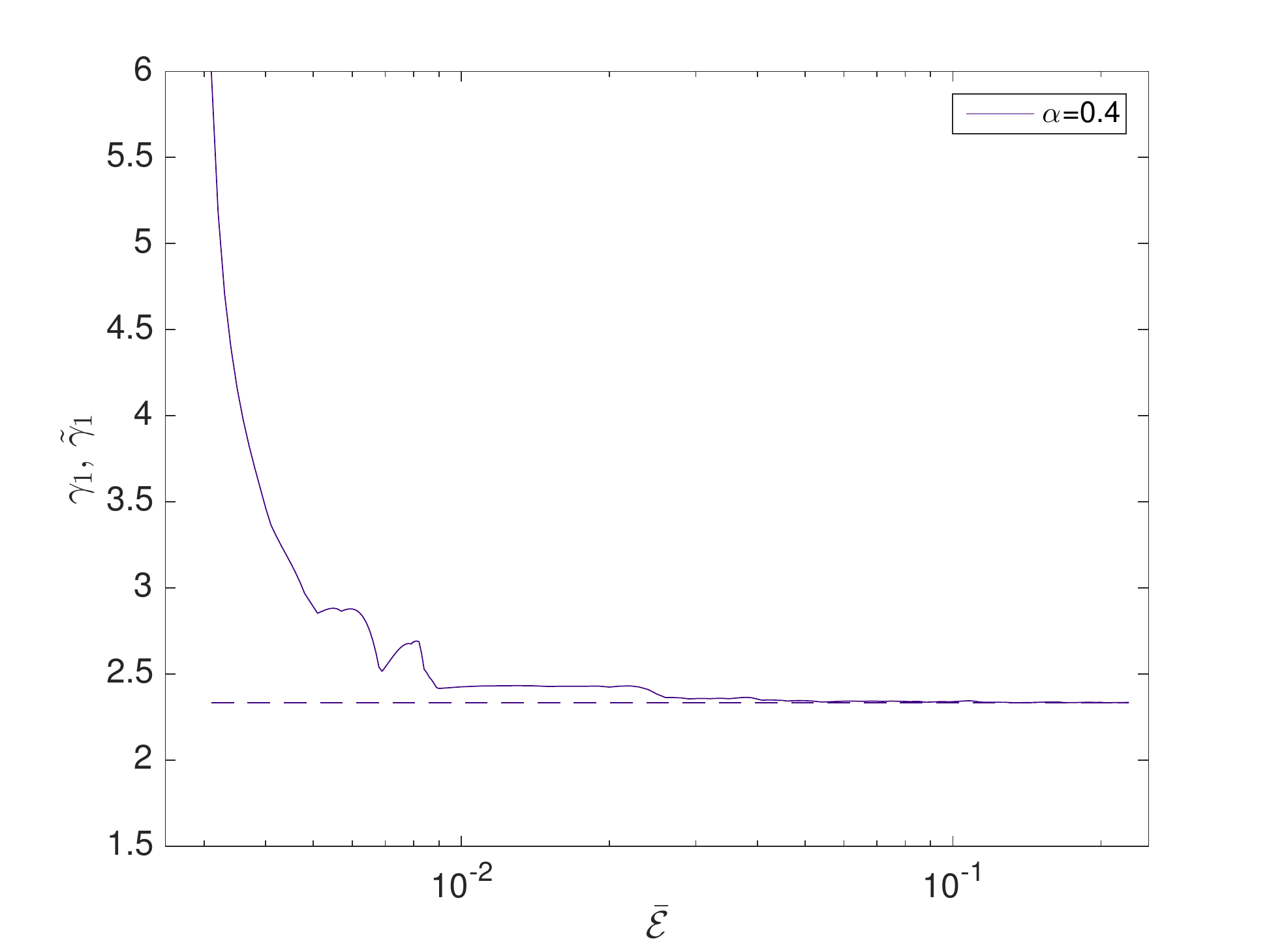}}
\subfigure[][]{\includegraphics[width=3.35in]{./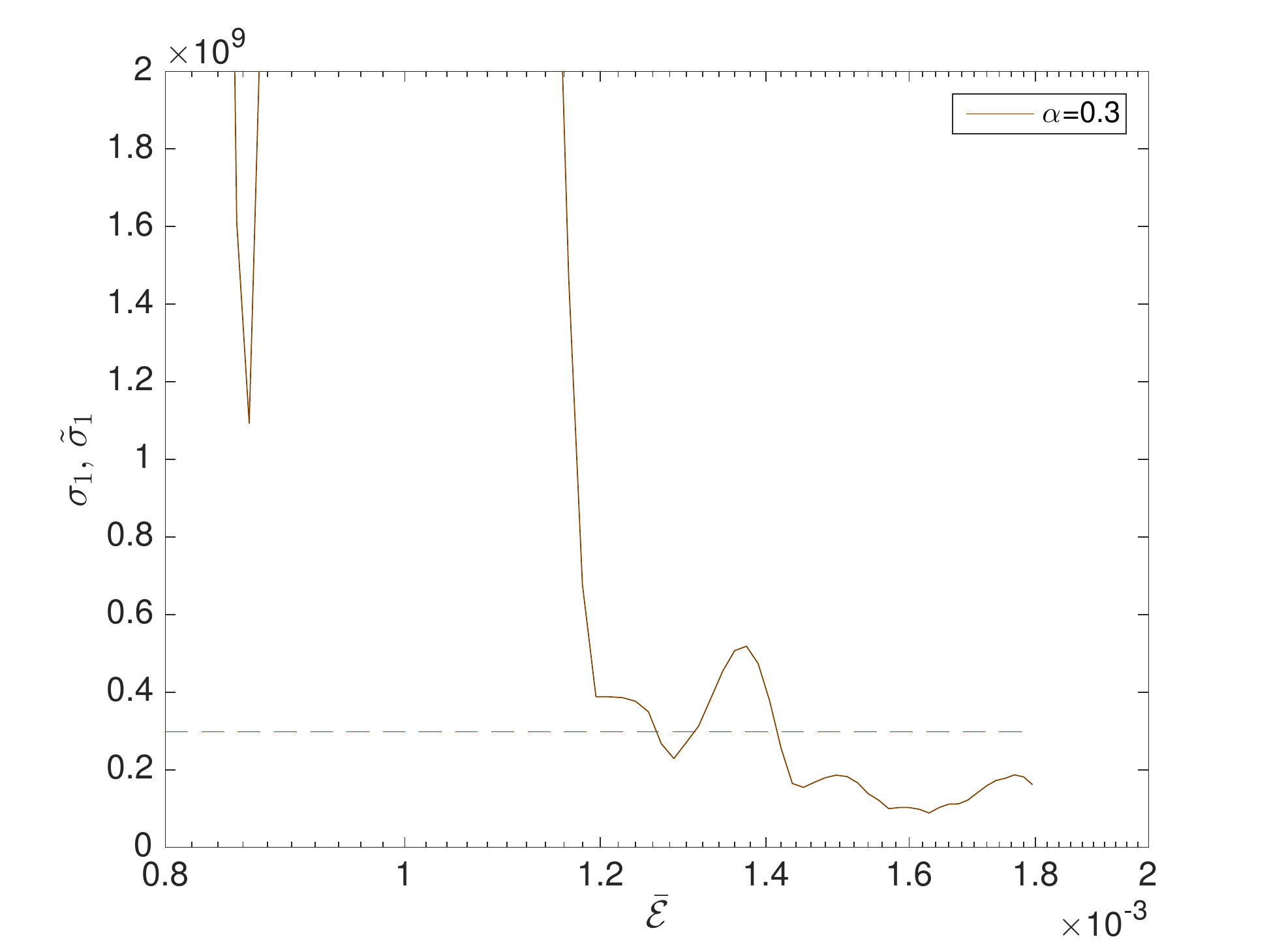}}
\hspace{-0.2in}
\subfigure[][]{\includegraphics[width=3.35in]{./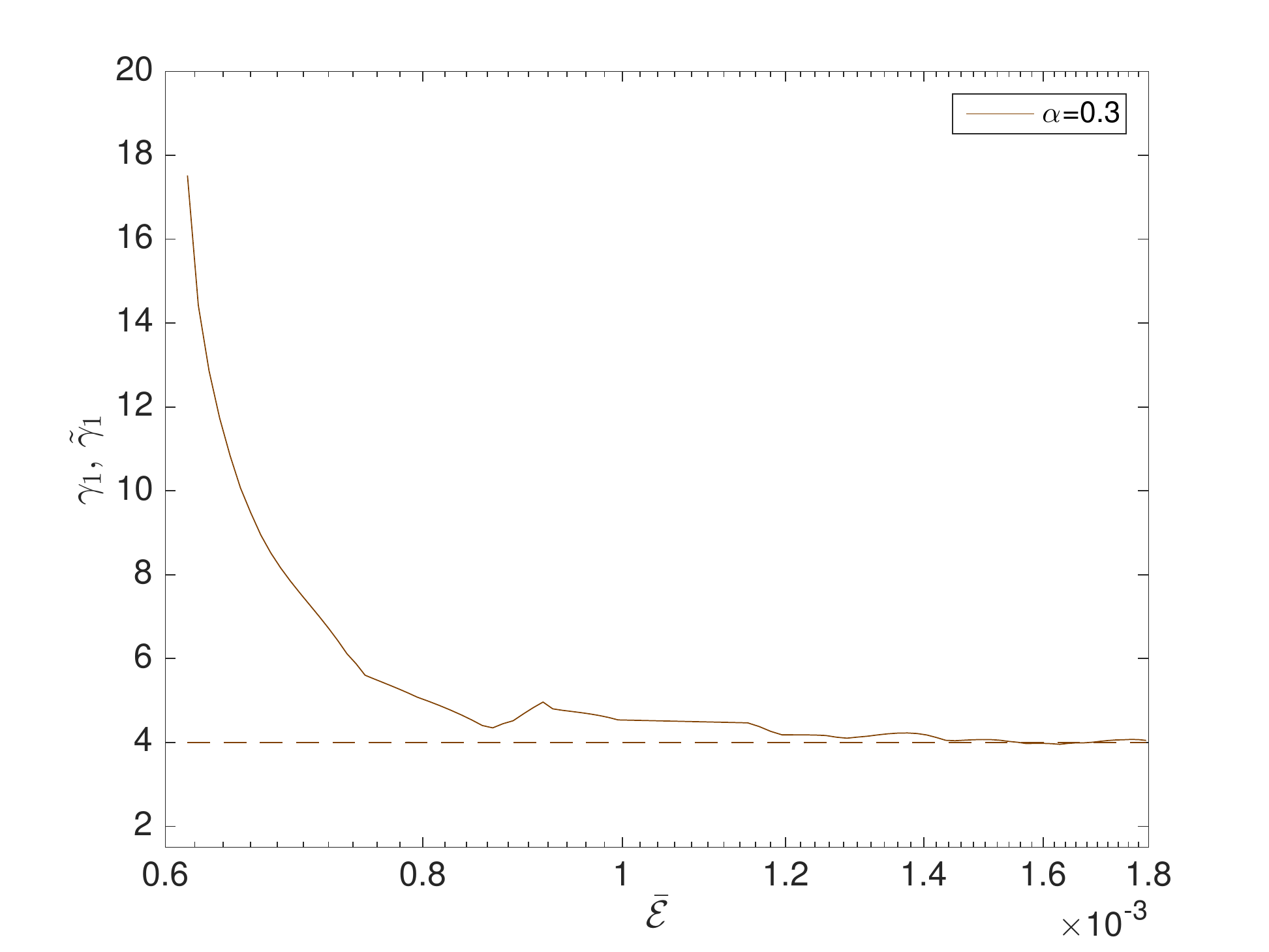}}
\caption{Prefactors $\tsigma_1$ (a,c,e) and exponents $\tgamma_1$
  (b,d,f) obtained as function of {$\bar{\E}$} via local least-squares fits to
  the relations $\R_{\E}({\tuEbar})$ versus {$\bar{\E}$} shown in Figure \ref{fig:R}
  (solid lines).  The dashed lines represent the corresponding
  prefactors $\sigma_1$ and exponents $\gamma_1$ from estimate
  \eqref{eq:dEdt}.}
\label{fig:fit1}
\end{figure}

\begin{figure}
\centering
\subfigure[][]{\includegraphics[width=3.35in]{./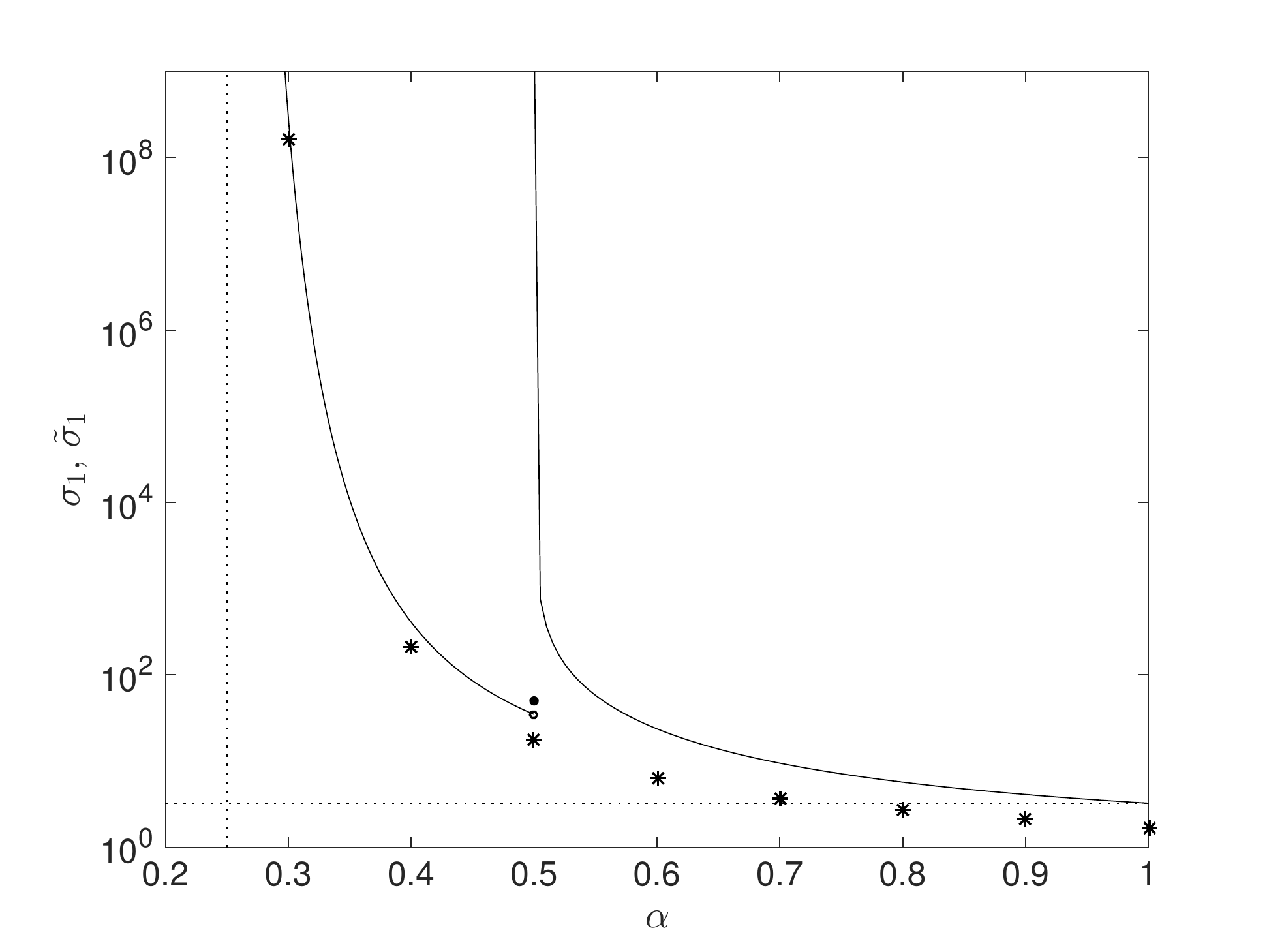}}
\hspace{-0.2in}
\subfigure[][]{\includegraphics[width=3.35in]{./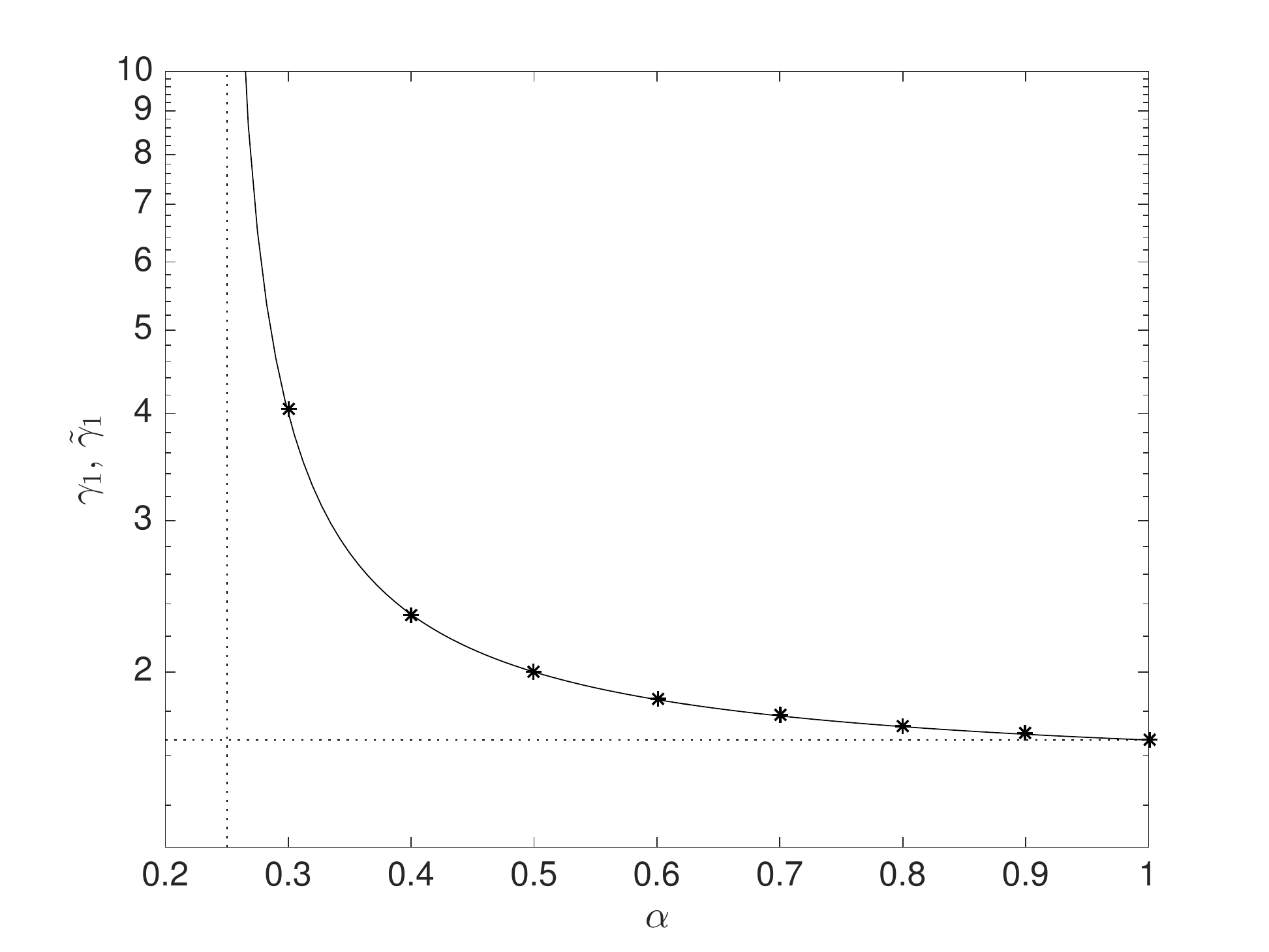}}
\caption{Prefactors (a) and exponents (b) in the power-law relation
 $\tsigma_1 {\bar{\E}}^{\tgamma_1}$ describing the dependence of
  $\R_{\E}({\tuEbar})$ on {$\bar{\E}$} shown as functions of $\alpha$:
  limiting (as ${\bar{\E}} \rightarrow \infty$, cf.~Figure
    \ref{fig:fit1}) values obtained in the least-squares fits
  (symbols) and predictions of estimate \eqref{eq:dEdt} (solid lines).}
\label{fig:sigmagamma1}
\end{figure}

\subsection{Maximum Growth Rate of Fractional Enstrophy}
\label{sec:resultsRa}

While estimate \eqref{eq:dEadt} was established for $\alpha \in
(3/4,1]$, in order to obtain insights about the maximum growth rate of
the fractional enstrophy for a broad range of fractional dissipation
exponents, in this section we solve the maximization problems
\eqref{eq:maxRa} for $\alpha \in [0,1]$. The obtained maximum growth
rate $\R_{\Ea}({\tuEabar})$ is shown as a function of
{$\bar{\E}_\alpha$} in Figures \ref{fig:Ra}(a) and
\ref{fig:Ra}(b) for $\alpha \in (3/4,1]$ and $\alpha \in (1/10,3/4]$,
respectively. In the first figure we also indicate the predictions of
estimate \eqref{eq:dEadt}. For small values of $\alpha$ and
{$\bar{\E}_\alpha$} we also observed the presence of another
branch of maximizers characterized by negative values of
$\R_{\Ea}({\tuEabar})$ --- this data is shown in Figure
\ref{fig:Ra2}(a) using a restricted range of {$\bar{\E}_\alpha$}
for clarity, whereas the positive branches obtained for the same
(small) values of $\alpha$ are presented in Figure \ref{fig:Ra2}(b).
We thus see that for small $\alpha$ and {$\bar{\E}_\alpha$} the
maximization problem \eqref{eq:maxRa} admits two distinct families of
local maximizers. The maximizers {$\tuEabar$} corresponding to
the data in Figures \ref{fig:Ra}(a,b) are shown both in the physical
and spectral space in Figure \ref{fig:tuEa} for
${\bar{\E}_\alpha} = 5,50,500$ and $\alpha = 0.1,0.2,\dots,0.9$.
We observe that, interestingly, as $\alpha$ decreases the sharp fronts
in the maximizers disappear and are replaced with oscillations
(Figures \ref{fig:tuEa}(a,c,e)). This behavior is also reflected in
the spectra of the maximizers which become less developed as $\alpha
\rightarrow 0$ (Figures \ref{fig:tuEa}(b,d,f)). The maximizers
{$\tuEabar$} obtained at the same values of $\alpha$ and
{$\bar{\E}_\alpha$,} and corresponding to the positive and
negative branches of $\R_{\Ea}({\tuEabar})$, cf.~Figures
\ref{fig:Ra2}(a) and \ref{fig:Ra2}(b), are shown in Figure
\ref{fig:tuEa2}. We see that the maximizers for which
$\R_{\Ea}({\tuEabar}) < 0$ have a simpler structure and for all
considered values of $\alpha $ and {$\bar{\E}_\alpha$} are
essentially indistinguishable from $A \sin(2\pi x)$ for some $A>0$.
The relation $\R_{\Ea}({\tuEabar})$ versus
{$\bar{\E}_\alpha$} (the upper branch shown in Figure
\ref{fig:Ra}(a,b)) reveals similar properties as observed in the case
of the classical enstrophy discussed in Section \ref{sec:resultsR},
namely, an initially steep growth followed by saturation with a
power-law behavior when ${\bar{\E}_\alpha} \rightarrow \infty$.
Performing local least-squares fits to these relations with the
formula $\tsigma_{\alpha} {\bar{\E}_\alpha}^{\tgamma_{\alpha}}$,
as described in Section \ref{sec:resultsR}, we can calculate how the
actual prefactors $\tsigma_{\alpha}$ and exponents $\tgamma_{\alpha}$
depend on {$\bar{\E}_\alpha$} and these results are shown in
Figure \ref{fig:fita}(a,b), where we have also indicated, for $\alpha
\in (3/4,1]$, the predictions of estimate \eqref{eq:dEadt}. We see
that, as ${\bar{\E}_\alpha} \rightarrow \infty$, both
$\tsigma_{\alpha}$ and $\tgamma_{\alpha}$ approach well-defined
values, which are in turn shown in Figure \ref{fig:sigmagammaa}(a,b)
as functions of $\alpha$ together with the prefactors and the
exponents obtained in estimate \eqref{eq:dEadt}. We see in Figure
\ref{fig:sigmagammaa}(b) that for $0.9 \lessapprox \alpha \lessapprox
1$ the numerically obtained exponents $\tgamma_{\alpha}$ match the
exponents $\gamma_{\alpha}$ from estimate \eqref{eq:dEadt} and a
difference appears for $0.8 \lessapprox \alpha \lessapprox 0.9$ which
grows as $\alpha$ decreases. For $0.7 \lessapprox \alpha \lessapprox
0.8$ the numerically determined exponents $\tgamma_{\alpha}$ are a
decreasing function of $\alpha$ which saturates at a constant value of
approximately $3/2$ when $\alpha \lessapprox 0.7$. At the same time,
the prefactors $\tsigma_{\alpha}$ obtained numerically are by a few
orders of magnitude smaller than the prefactors predicted by estimate
\eqref{eq:dEadt} over the entire range of $\alpha$, although they do
exhibit qualitatively similar trends with $\alpha$. For $\alpha \in
[0,3/4]$, which is outside the range of validity of estimate
\eqref{eq:dEadt}, the numerically obtained exponents
$\tgamma_{\alpha}$ are constant, indicating that, somewhat
surprisingly, in this range $\R_{\Ea}({\tuEabar})$ does not
depend on the fractional dissipation exponent $\alpha$. The
corresponding numerically obtained prefactors $\tsigma_{\alpha}$
reveal a decreasing trend with $\alpha$. We add that these trends are
accompanied by the maximizers {$\tuEabar$} becoming more regular
as $\alpha$ decreases (cf.~Figure \ref{fig:tuEa}).  We thus conclude
that the exponent $\gamma_{\alpha}$ in estimate \eqref{eq:dEadt} is
sharp over a part of the range of validity of this estimate and
appears to overestimate the actual rate of growth of fractional
enstrophy for smaller values of $\alpha$. Over the range of $\alpha$
where the exponent $\gamma_{\alpha}$ is sharp, the prefactor
$\sigma_{\alpha}$ may be improved.

\begin{figure}
\centering
\subfigure[][]{\includegraphics[width=3.35in]{./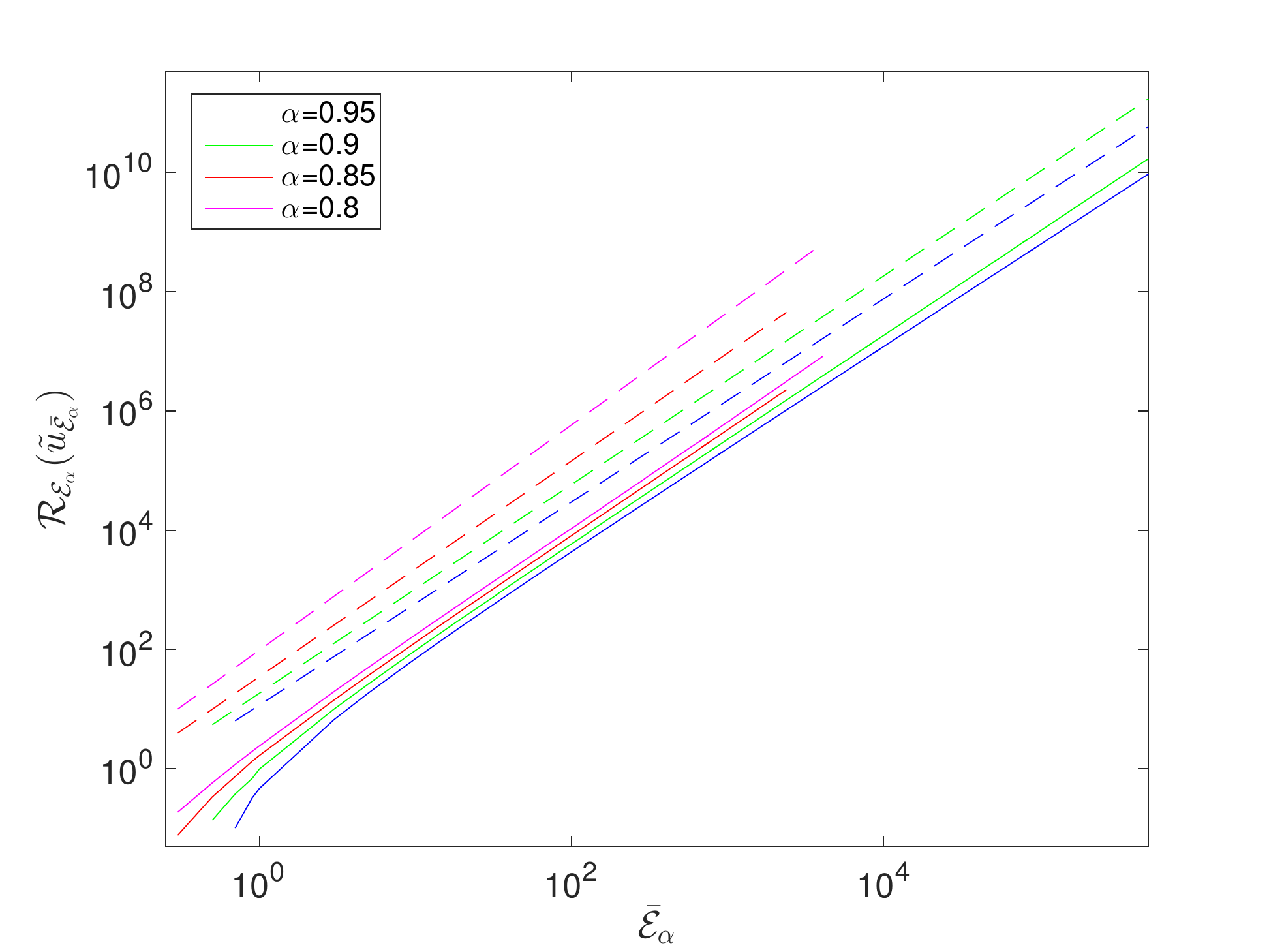}}
\hspace{-0.2in}
\subfigure[][]{\includegraphics[width=3.35in]{./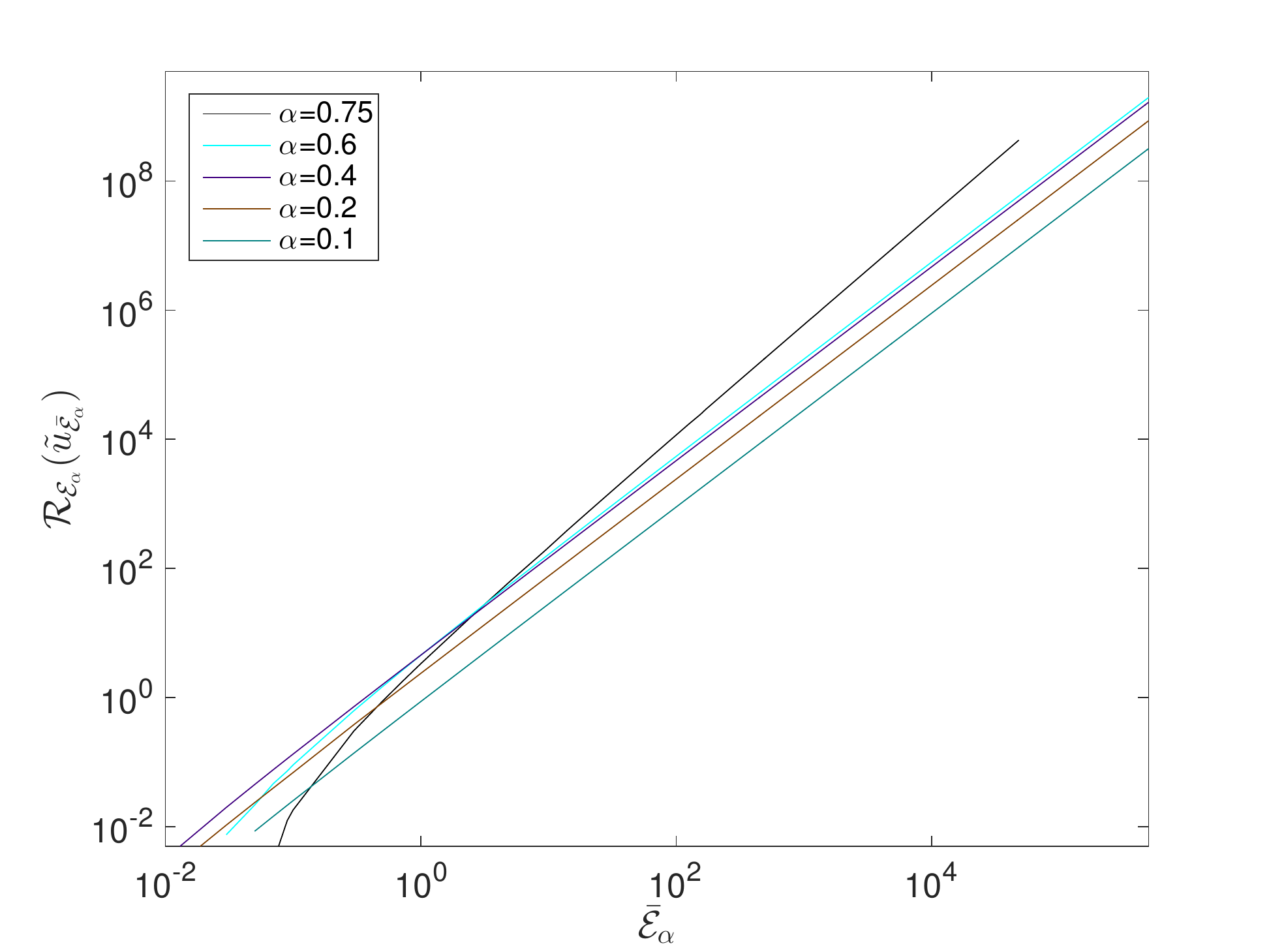}}
\caption{Dependence of the maximum fractional enstrophy rate of growth
  $\R_{\Ea}({\tuEabar})$, obtained by solving optimization problems
  \eqref{eq:maxRa}, on {$\bar{\E}_\alpha$} (solid lines) for $\alpha \in (3/4,1]$ (a)
  and $\alpha \in (1/10,3/4]$ (b). The dashed lines in panel (a)
  represent the corresponding upper bounds from estimate
  \eqref{eq:dEadt}.}
\label{fig:Ra}
\setcounter{subfigure}{0}
\centering
\subfigure[][]{\includegraphics[width=3.35in]{./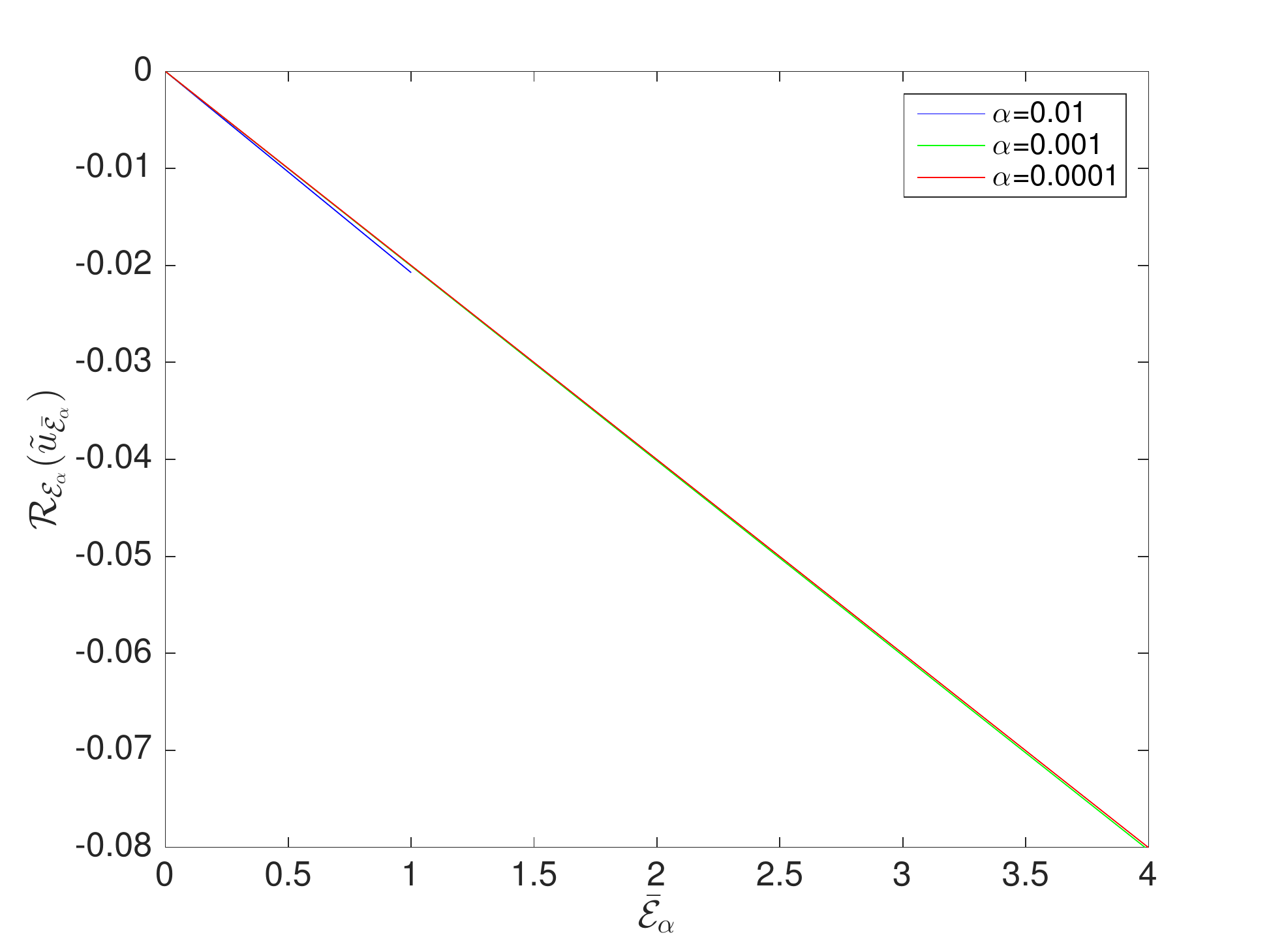}}
\hspace{-0.2in}
\subfigure[][]{\includegraphics[width=3.35in]{./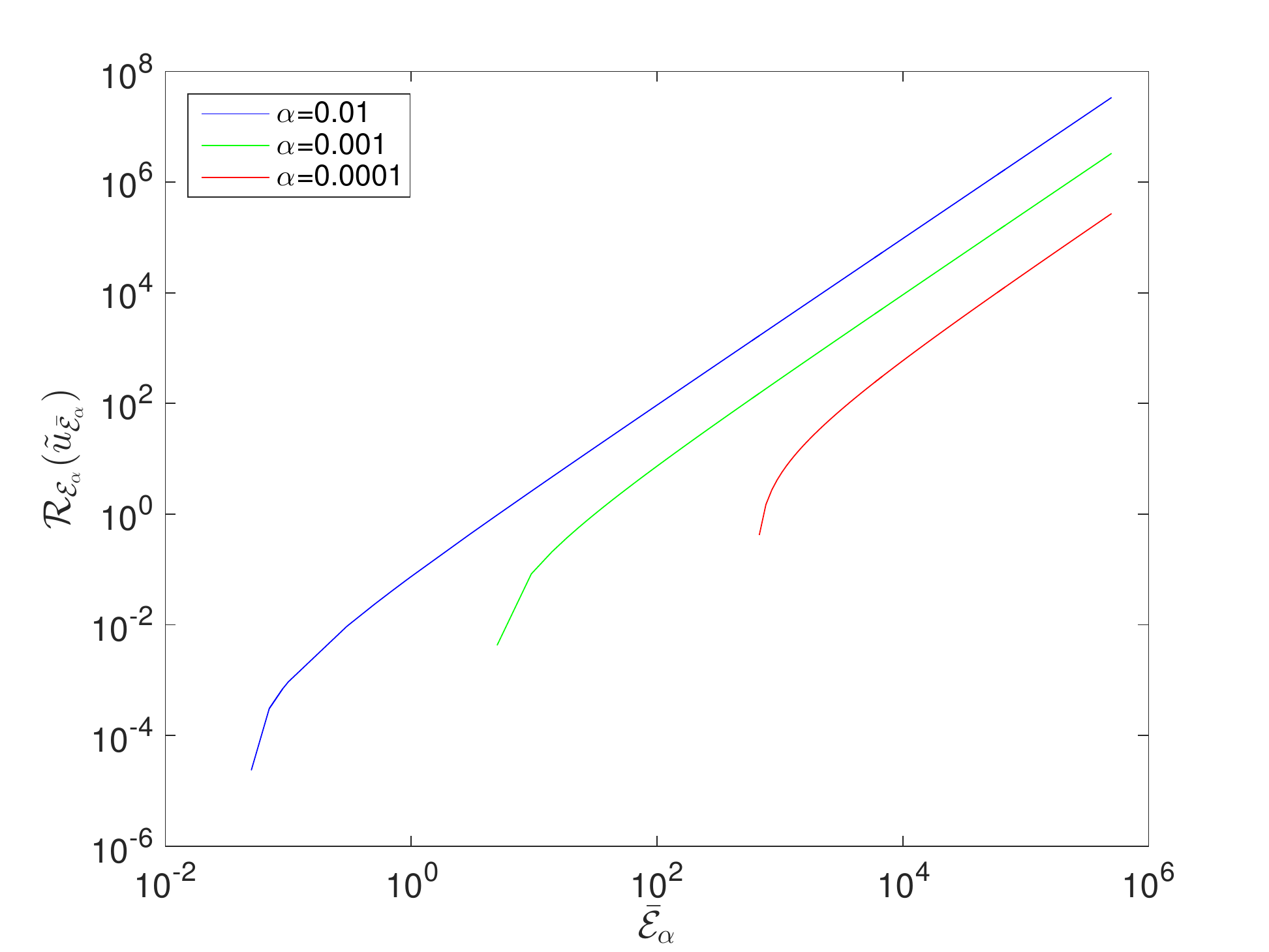}}
\caption{Dependence of the maximum fractional enstrophy rate of growth
  $\R_{\Ea}({\tuEabar})$, obtained by solving optimization problems
  \eqref{eq:maxRa}, on {$\bar{\E}_\alpha$} for small values of $\alpha$: negative
  branch (a) and positive branch (b).}
\label{fig:Ra2}
\end{figure}

\begin{figure}
\centering
\subfigure[][]{\includegraphics[width=3.35in]{./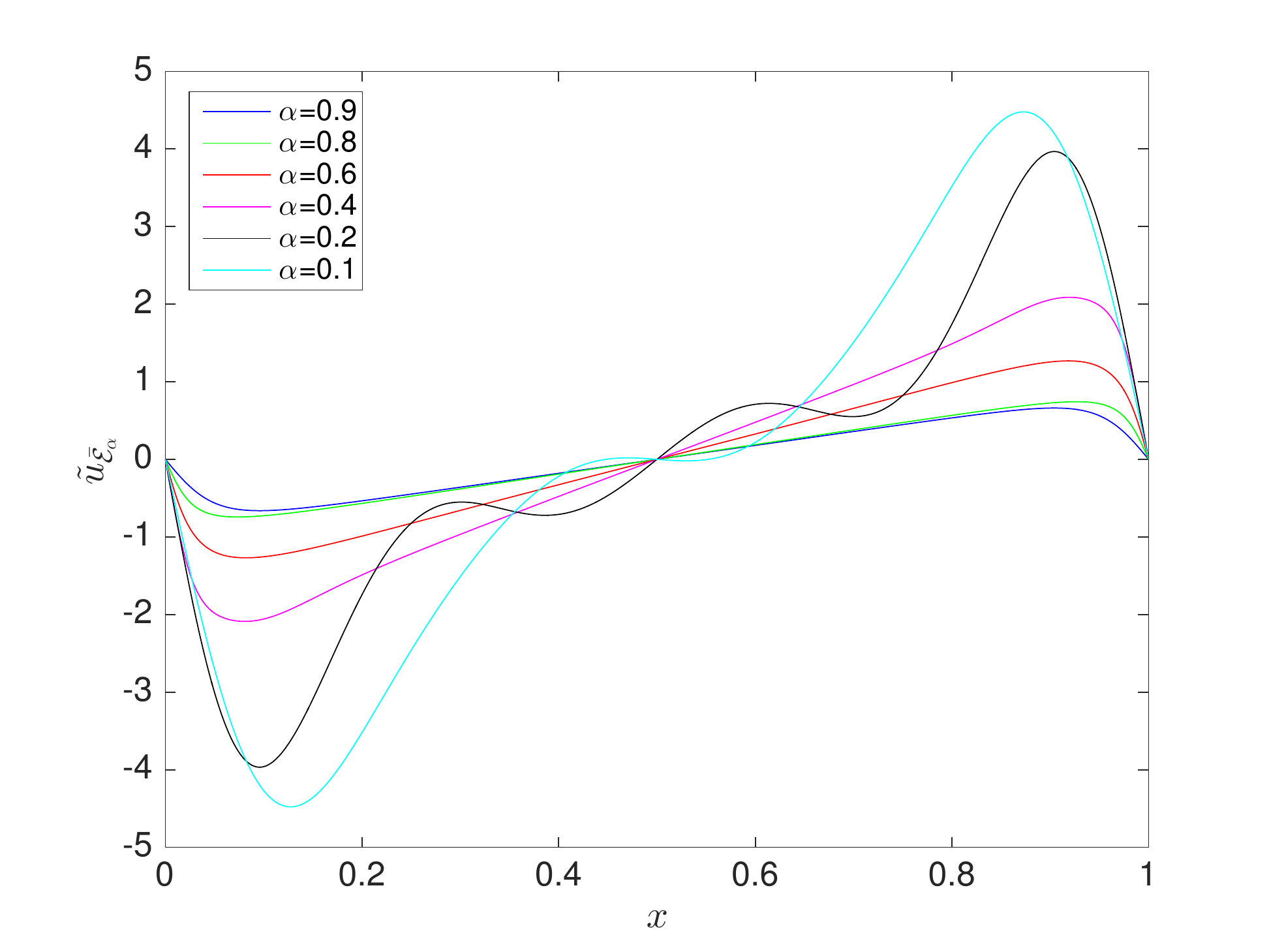}}
\hspace{-0.2in}
\subfigure[][]{\includegraphics[width=3.35in]{./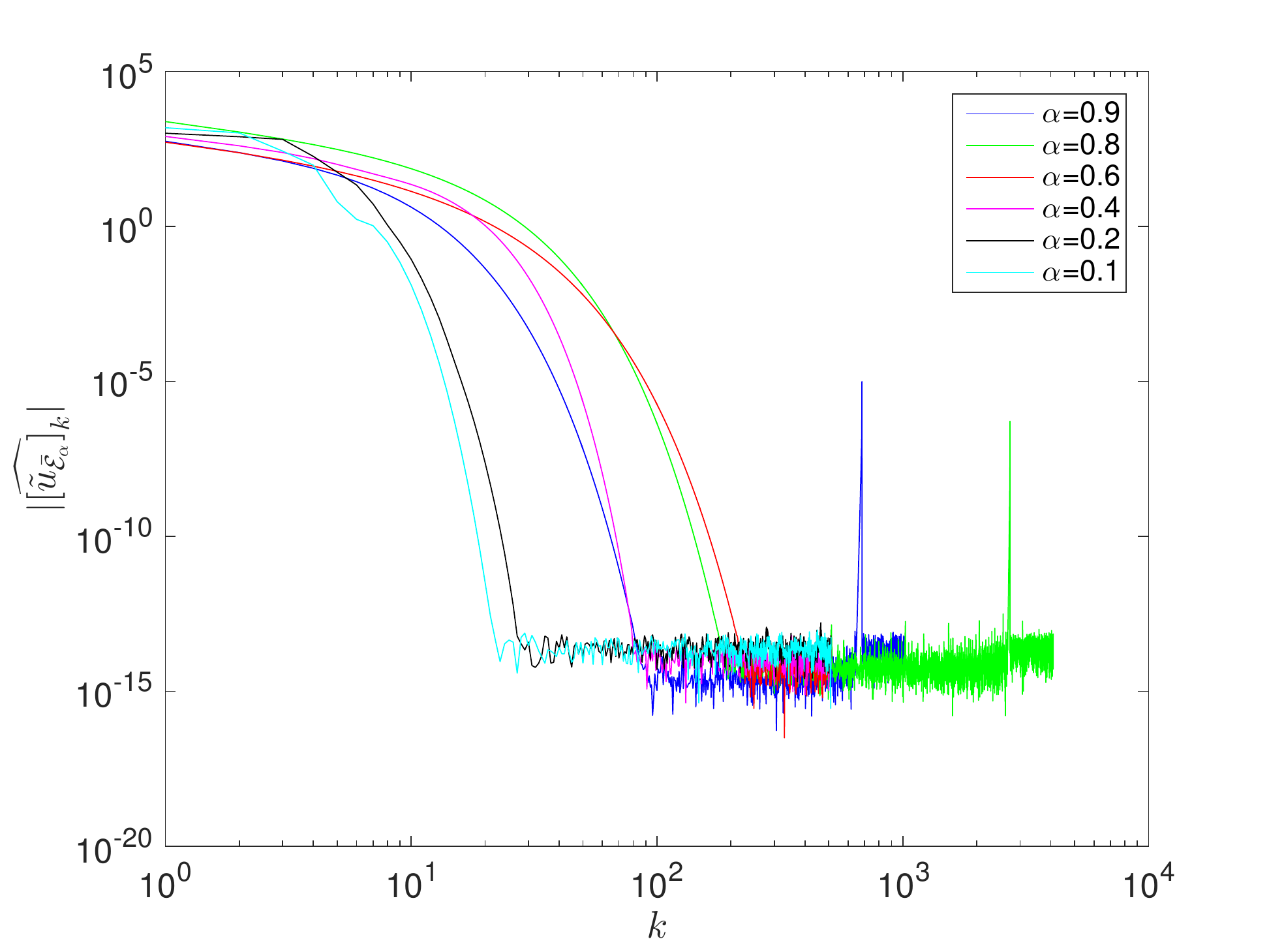}}
\subfigure[][]{\includegraphics[width=3.35in]{./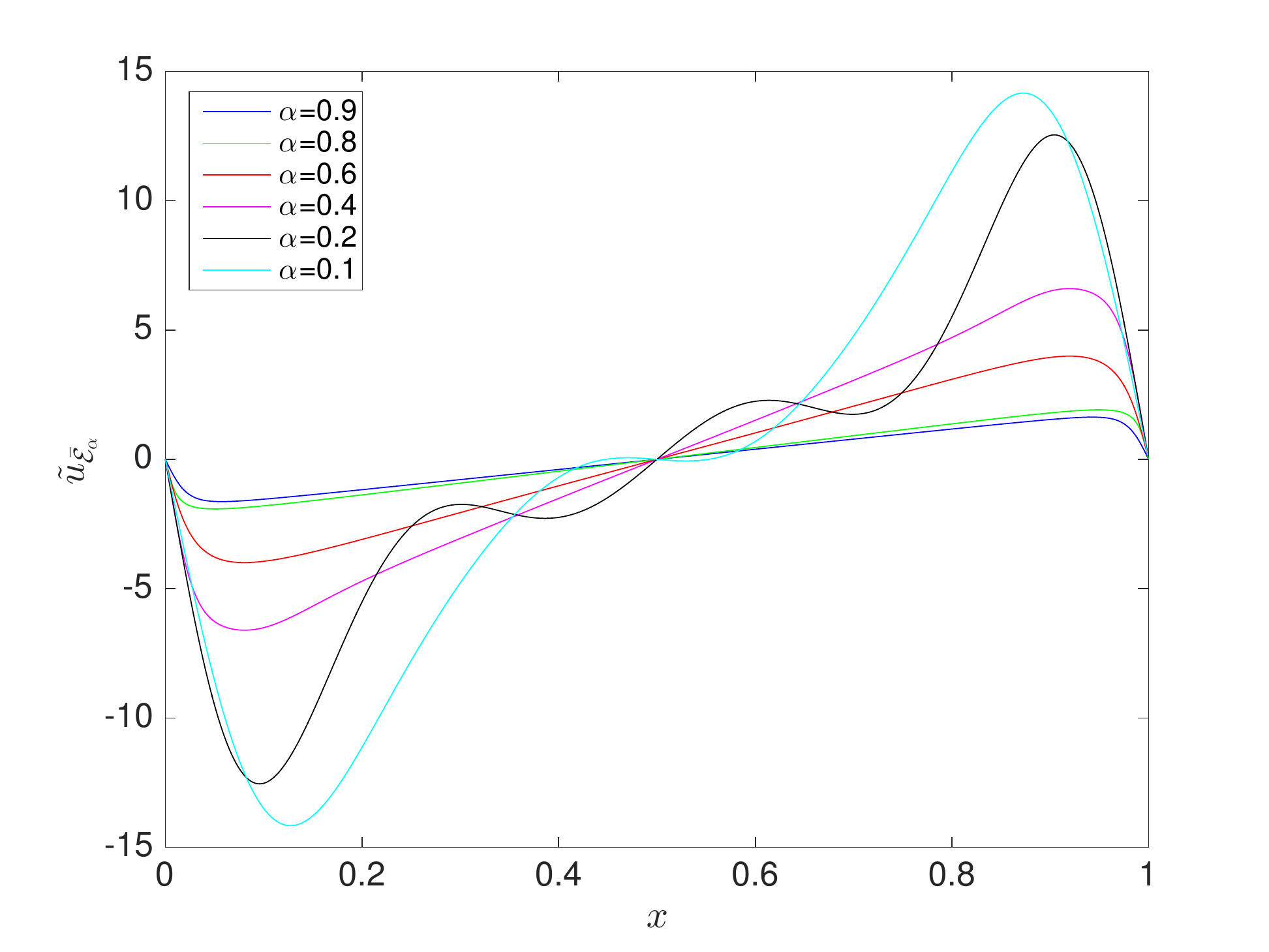}}
\hspace{-0.2in}
\subfigure[][]{\includegraphics[width=3.35in]{./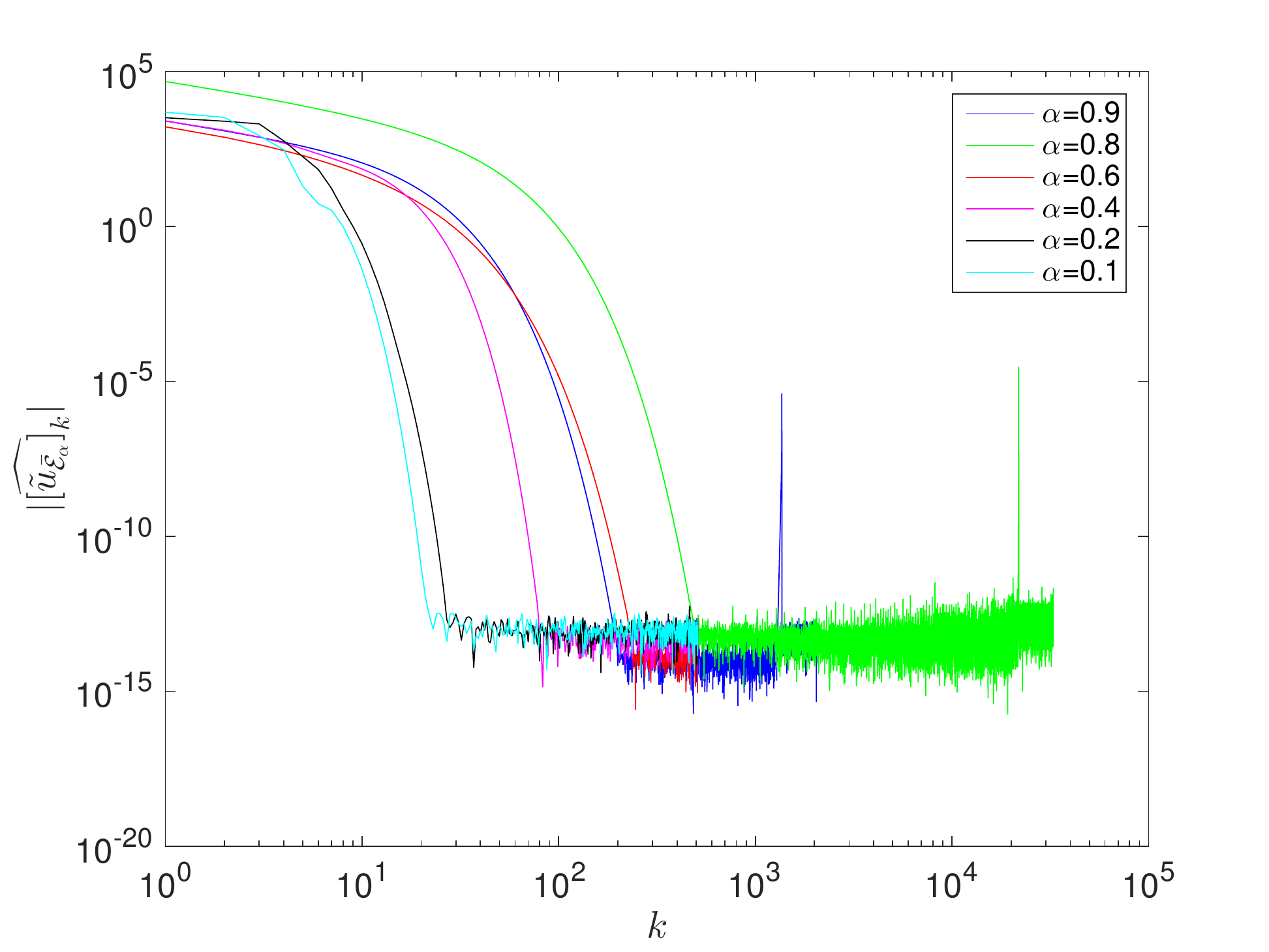}}
\subfigure[][]{\includegraphics[width=3.35in]{./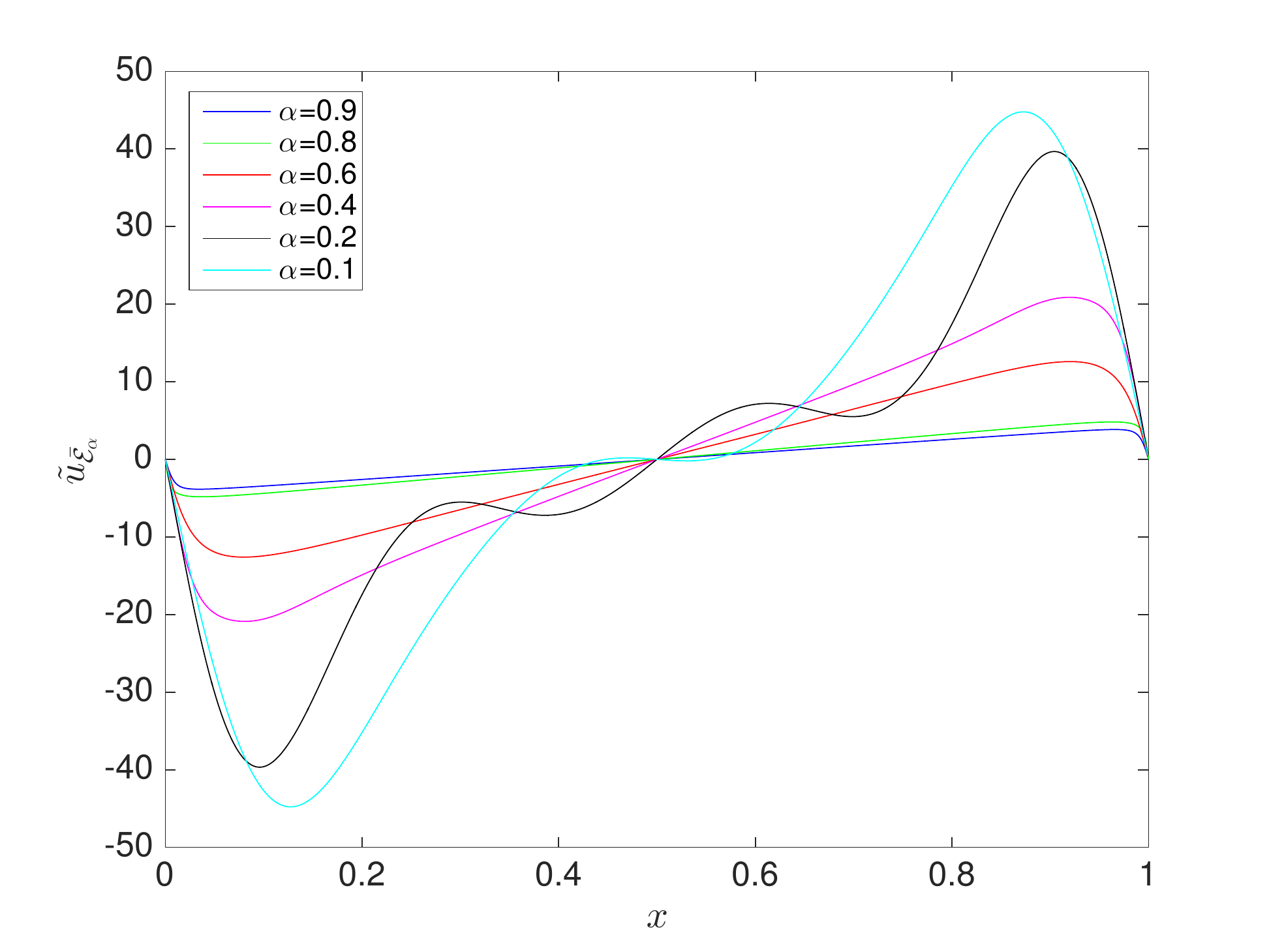}}
\hspace{-0.2in}
\subfigure[][]{\includegraphics[width=3.35in]{./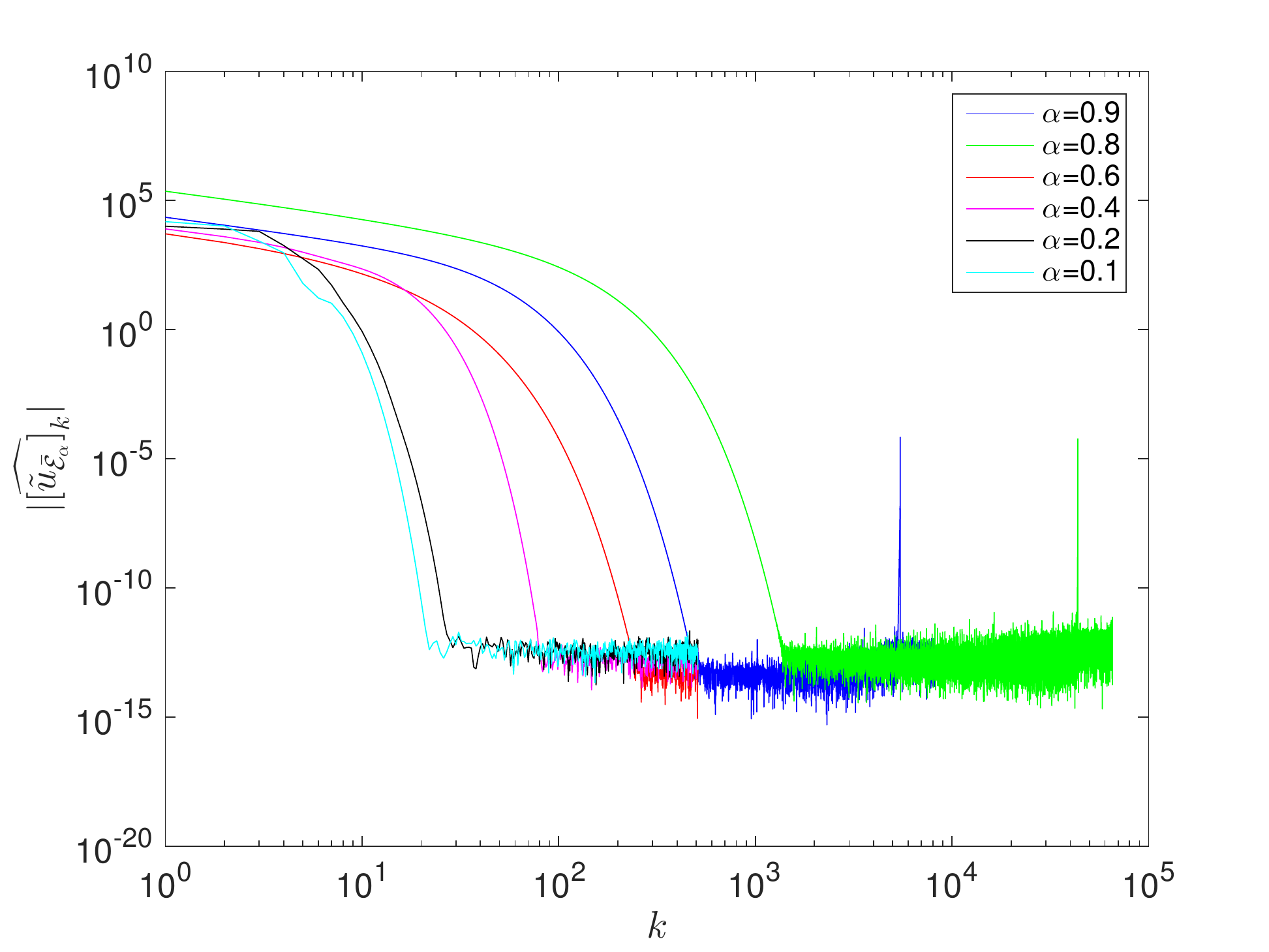}}
\caption{Maximizers {$\tuEabar$} obtained for ${\bar{\E}_\alpha}=5$ (a,b), ${\bar{\E}_\alpha}=50$ (c,d)
  and ${\bar{\E}_\alpha}=500$ (e,f) and different values of $\alpha$. The fields are
  shown in the physical (a,c,e) and spectral (b,d,f) space.}
\label{fig:tuEa}
\end{figure}

\begin{figure}
\centering
\subfigure[][]{\includegraphics[width=2.3in]{./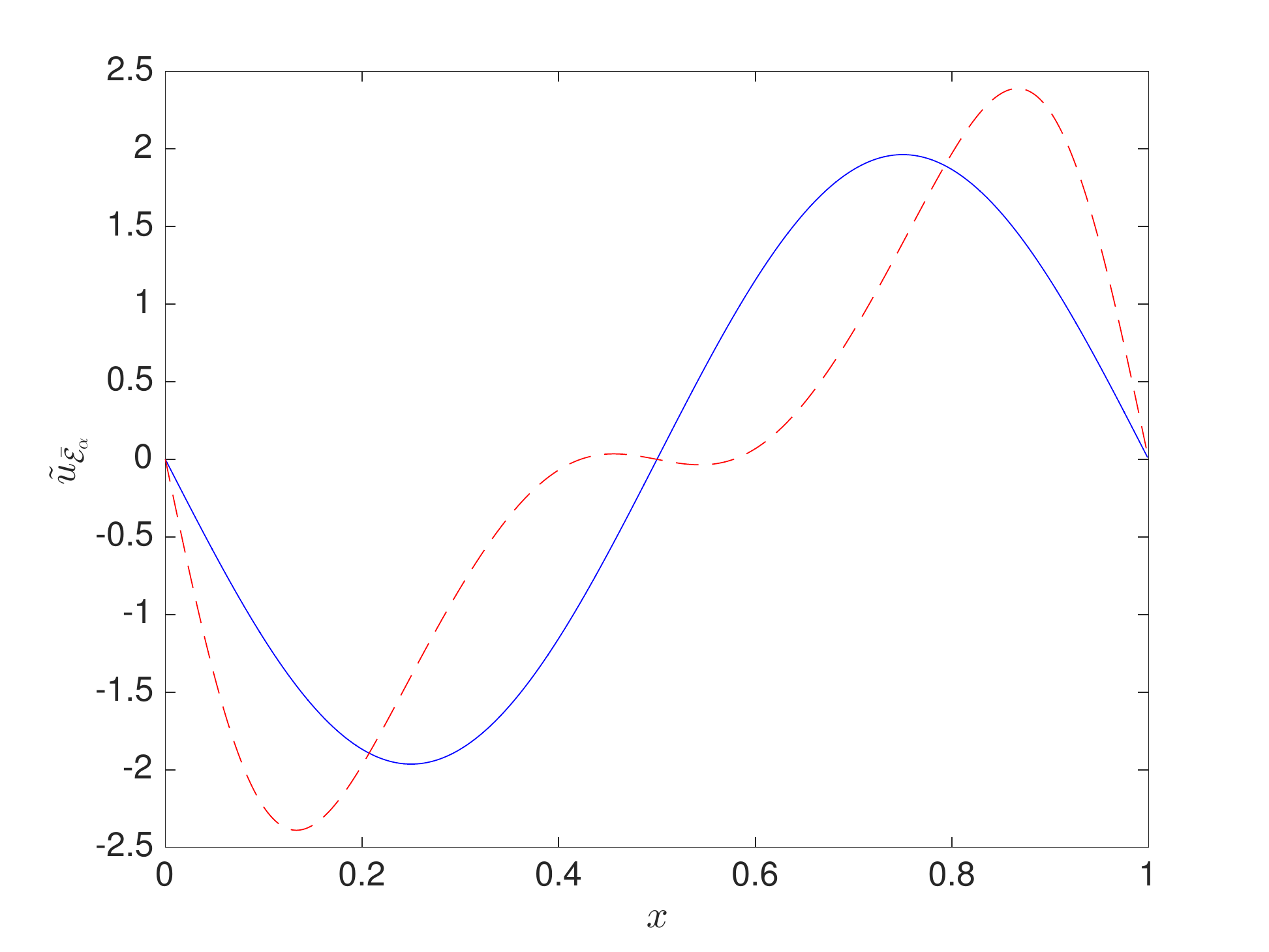}}
\hspace{-0.2in}
\subfigure[][]{\includegraphics[width=2.3in]{./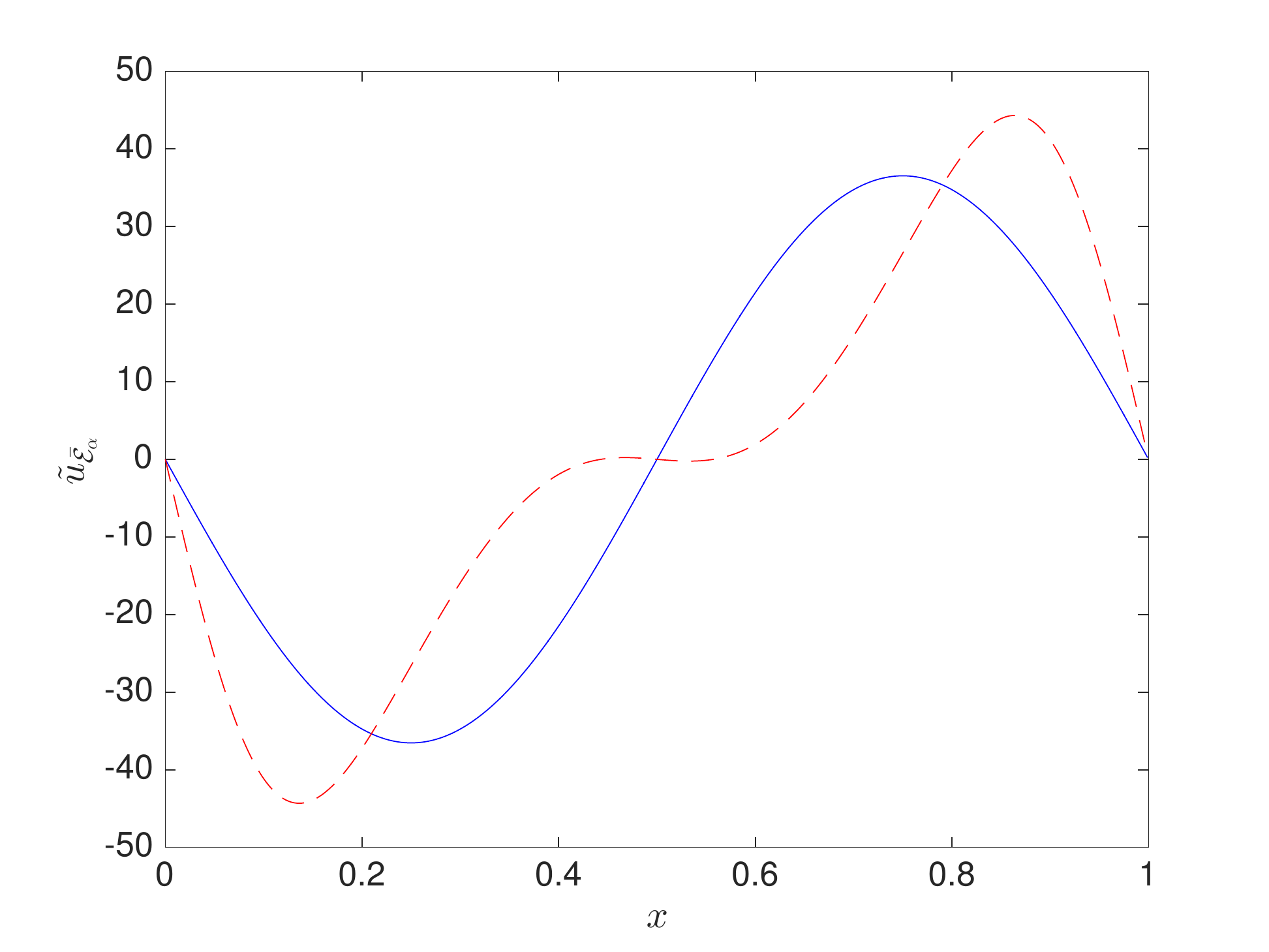}}
\hspace{-0.2in}
\subfigure[][]{\includegraphics[width=2.3in]{./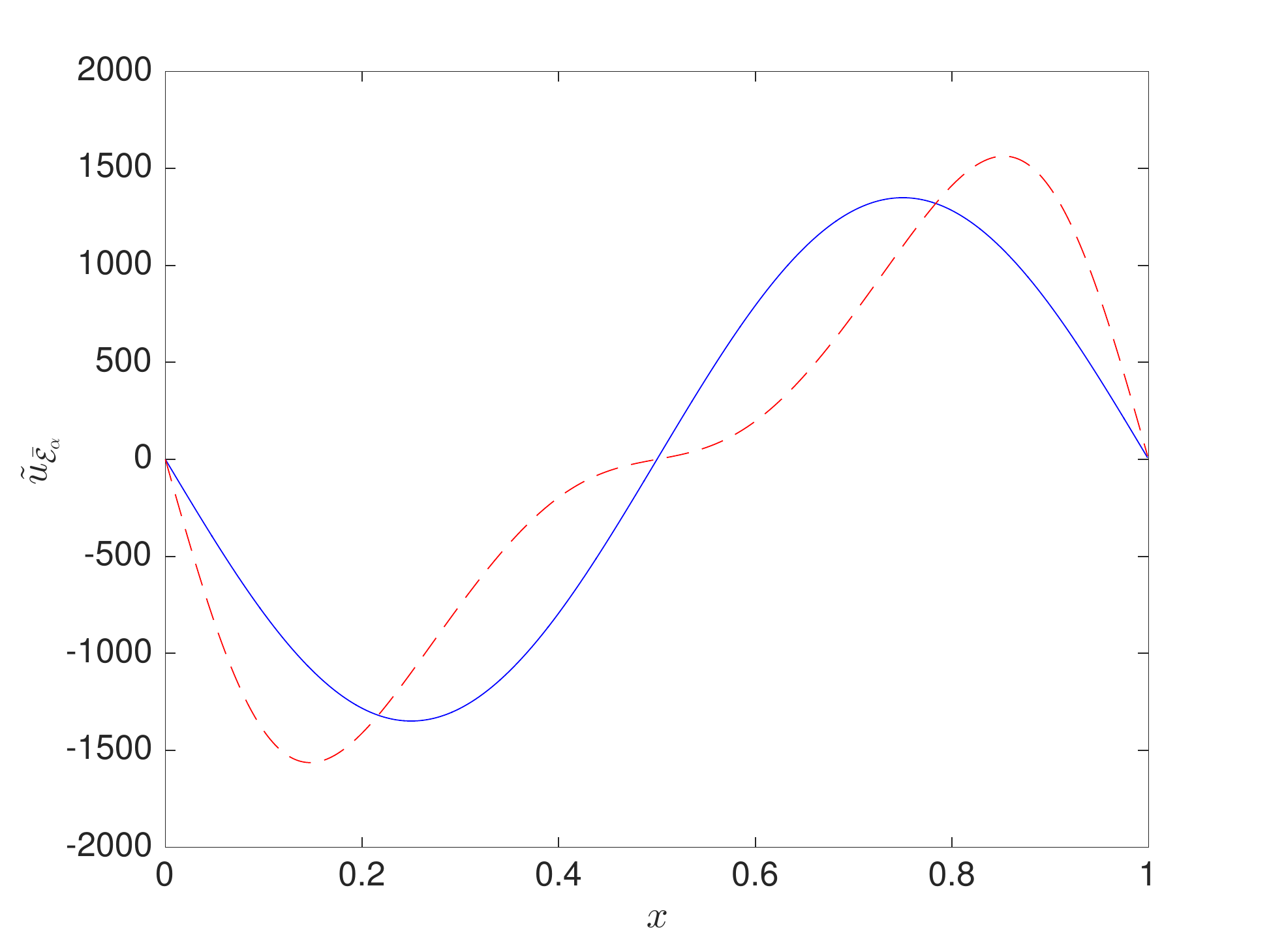}}
\caption{The maximizers {$\tuEabar$} characterized by positive
  $\R_{\Ea}({\tuEabar})$ (red dashed lines) and negative $\R_{\Ea}({\tuEabar})$
  (blue solid lines) obtained for (a) ${\bar{\E}_\alpha}=1$, $\alpha=0.01$, (b)
  ${\bar{\E}_\alpha}=335$, $\alpha=0.001$, and (c) ${\bar{\E}_\alpha}=455,000$, $\alpha=0.0001$.}
\label{fig:tuEa2}
\end{figure}

\begin{figure}
\centering
\subfigure[][]{\includegraphics[width=3.35in]{./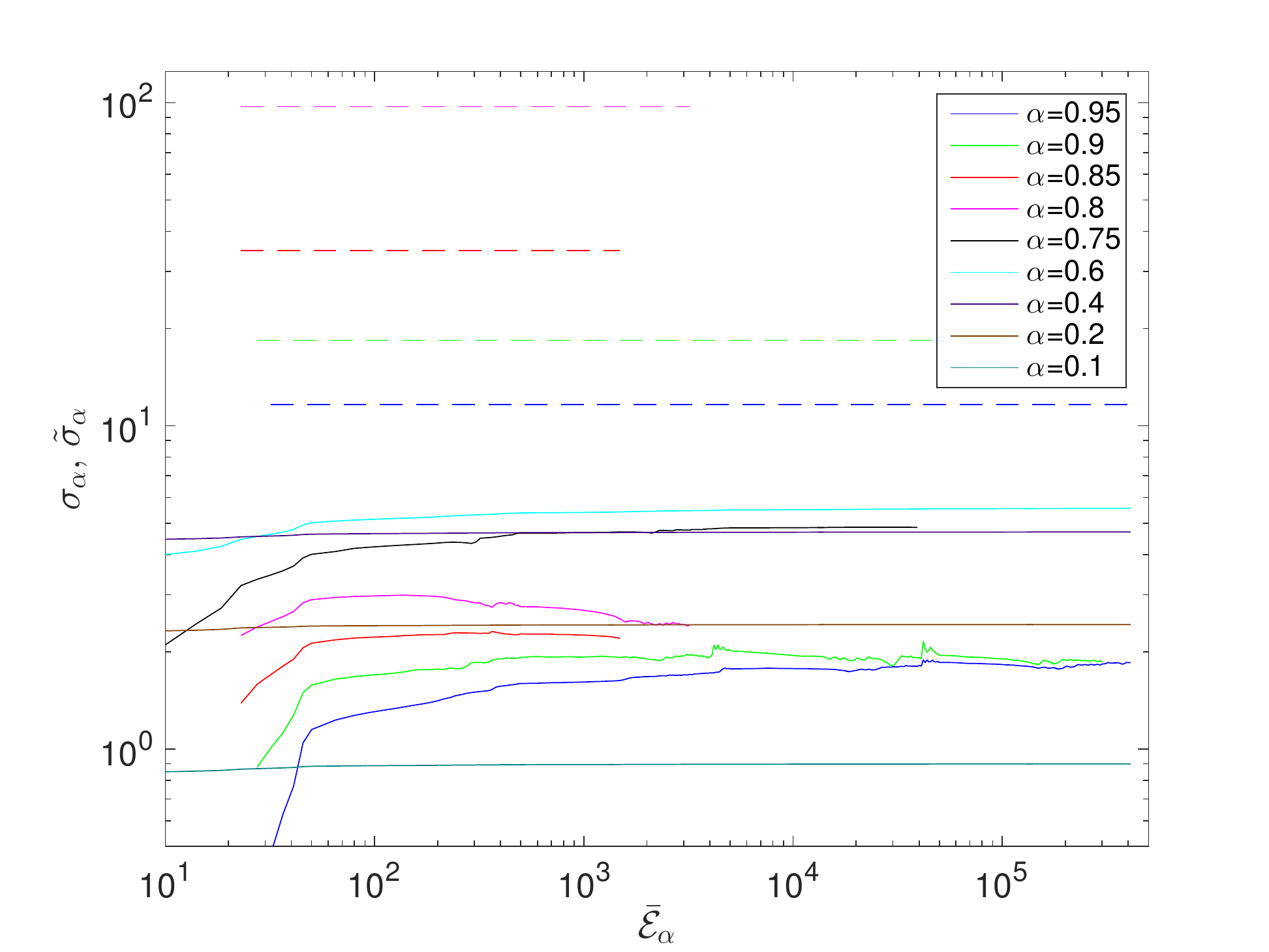}}
\hspace{-0.2in}
\subfigure[][]{\includegraphics[width=3.35in]{./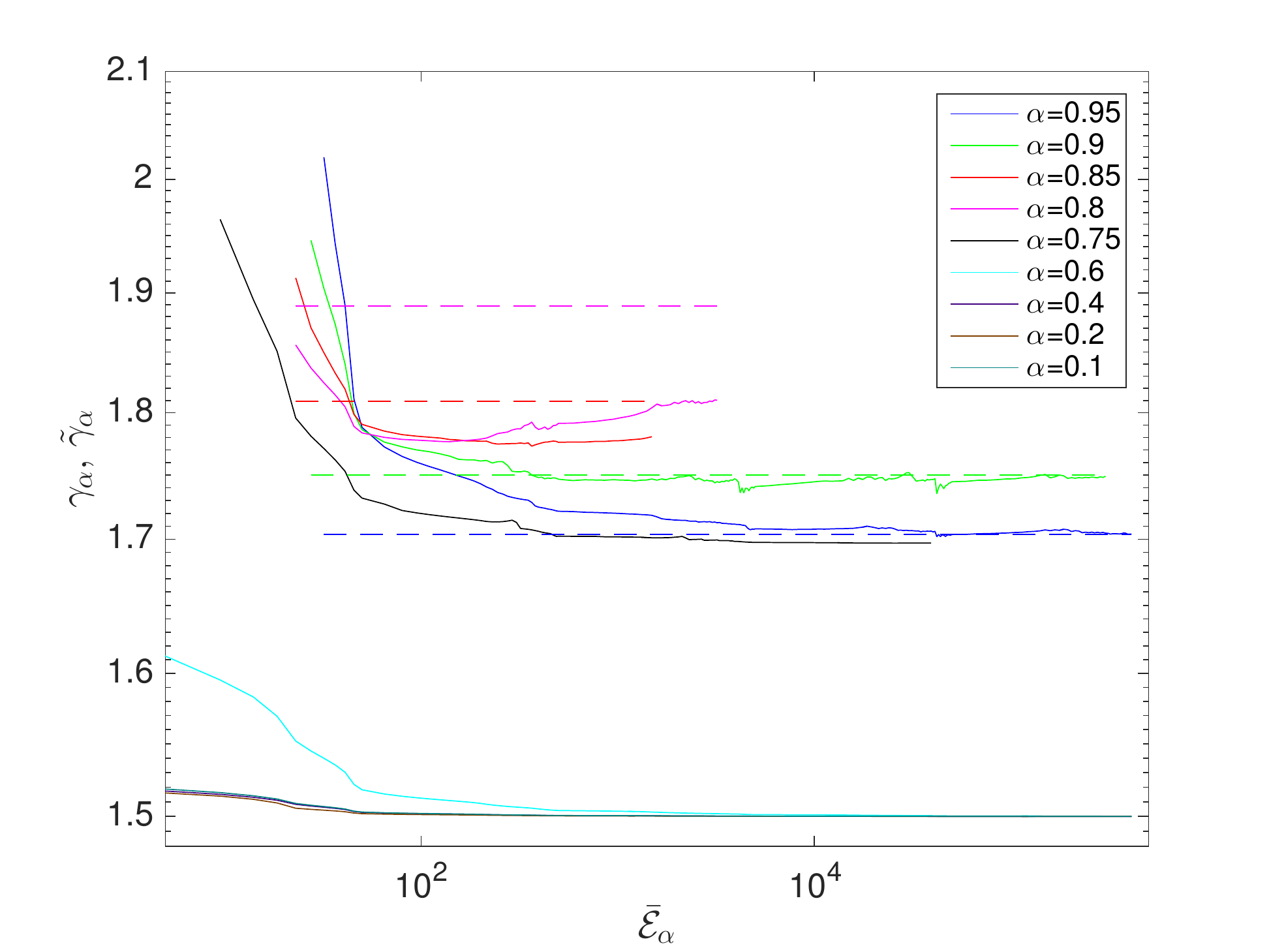}}
\caption{Prefactors $\tsigma_{\alpha}$ (a) and exponents
  $\tgamma_{\alpha}$ (b) obtained as function of {$\bar{\E}_\alpha$} via local
  least-squares fits to the relation $\R_{\Ea}({\tuEabar})$ versus {$\bar{\E}_\alpha$}
  shown Figure \ref{fig:Ra} (solid lines).  The dashed lines represent
  the corresponding prefactors $\sigma_{\alpha}$ and exponents
  $\gamma_{\alpha}$ from estimate \eqref{eq:dEadt}.}
\label{fig:fita}
\end{figure}

\begin{figure}
\centering
\subfigure[][]{\includegraphics[width=3.35in]{./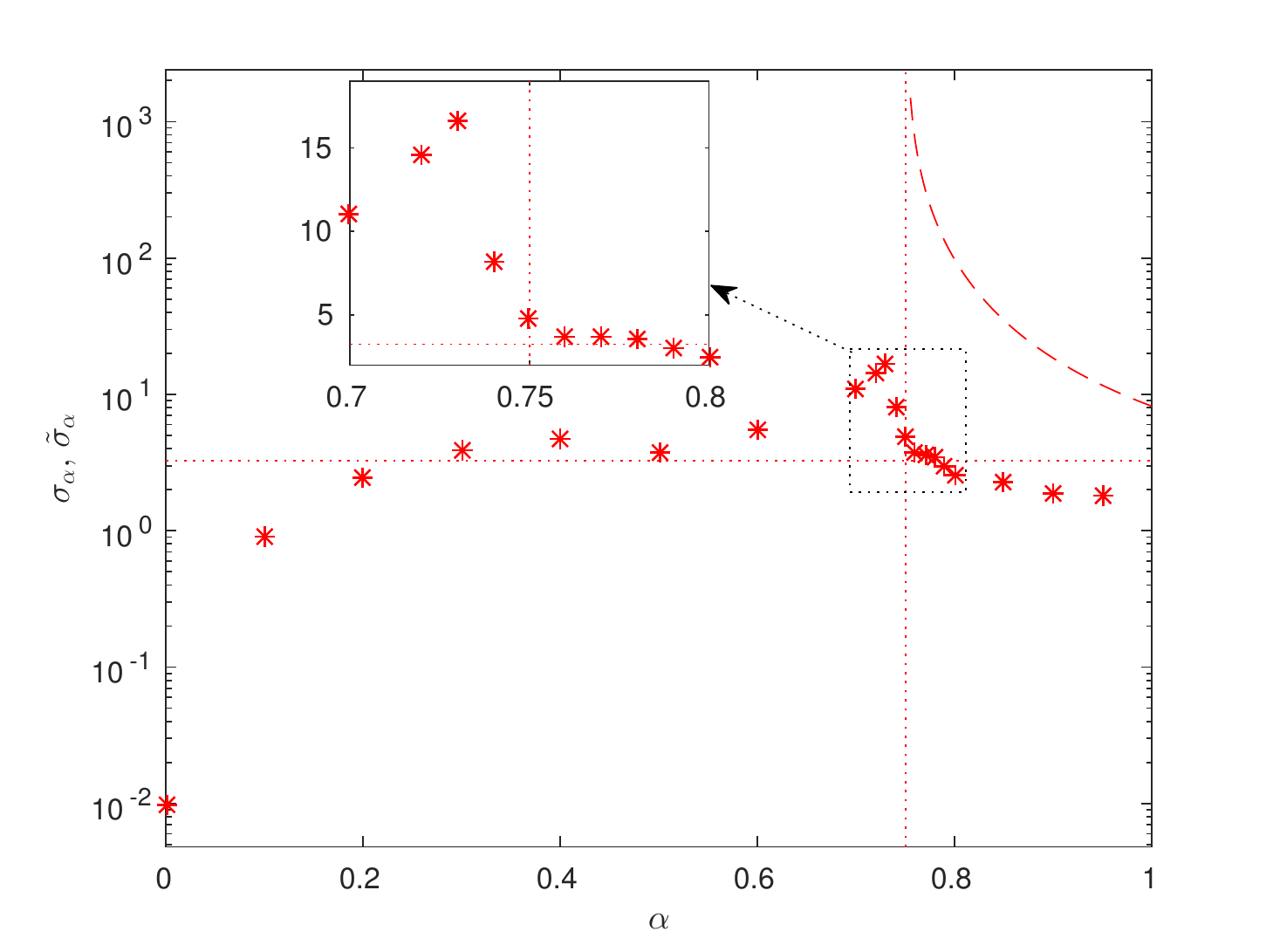}}
\hspace{-0.2in}
\subfigure[][]{\includegraphics[width=3.35in]{./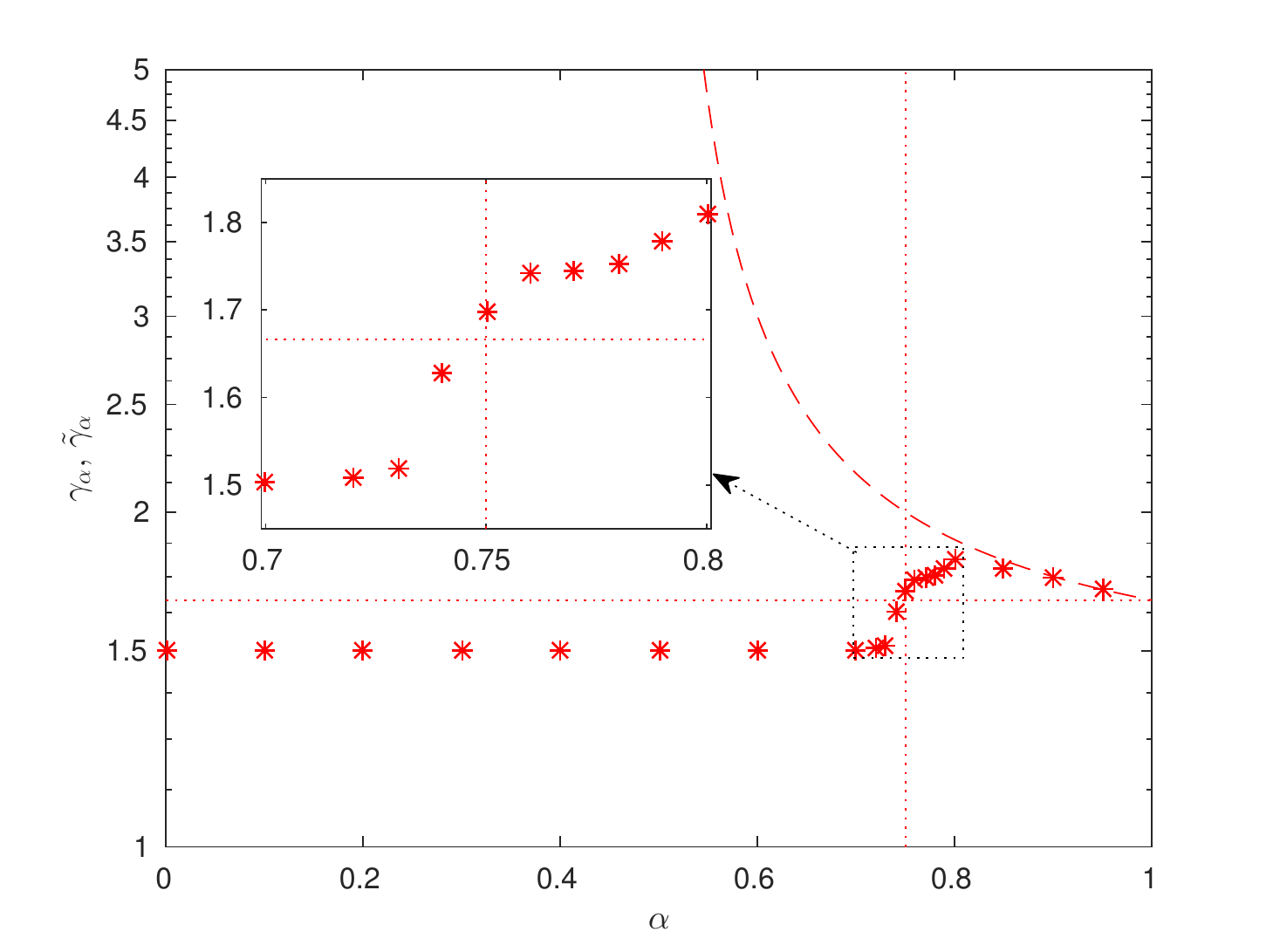}}
\caption{Prefactors (a) and exponents (b) in the power-law relation
 $\tsigma_{\alpha} {\bar{\E}_\alpha}^{\tgamma_{\alpha}}$ describing the
  dependence of $\R_{\Ea}({\tuEabar})$ on {$\bar{\E}_\alpha$} shown as functions
    of $\alpha$: limiting (as ${\bar{\E}_\alpha} \rightarrow \infty$,
  cf.~Figure \ref{fig:fita}) values obtained in the
  least-squares fits (symbols) and predictions of estimate
  \eqref{eq:dEadt} (dashed lines). The insets represent
    magnifications of the neighborhood of $\alpha = 3/4$ where estimate
    \eqref{eq:dEadt} loses its validity.  }
\label{fig:sigmagammaa}
\end{figure}

\section{Summary and Discussion}
\label{sec:final}

While the estimates on the rate of growth of the classical and
fractional enstrophy obtained in Theorems \ref{dEdt} and \ref{dEadt}
are not much different from similar results already available in the
literature \cite{kns08, ddl09}, the key finding of the present study
is that these estimates are in fact sharp, in the sense that for
different $\alpha$ the exponents $\gamma_1$ and $\gamma_{\alpha}$ in
\eqref{eq:dEdt} and \eqref{eq:dEadt} capture the correct power-law
dependence of the maximum growth rates $\R_{\E}({\tuEbar})$ and
$\R_{\Ea}({\tuEabar})$ on {$\bar{\E}$} and
{$\bar{\E}_\alpha$}, respectively (the second estimate was found
to be sharp only over a part of the range of $\alpha$ for which it is
defined). This was demonstrated by computationally solving suitably
defined constrained optimization problems and then showing that the
maximizers obtained under constraints on {$\E$} and
{$\E_\alpha$} saturate the upper bounds in the estimates for
different values of $\alpha$. Therefore, the conclusion is that the
mathematical analysis on which Theorems \ref{dEdt} and \ref{dEadt} are
based may not be fundamentally improved, other than a refinement of
exponent $\gamma_{\alpha}$ in \eqref{eq:dEadt} for $3/4 < \alpha
\lessapprox 0.9$ and an improvement of the prefactors in
\eqref{eq:dEdt} and \eqref{eq:dEadt}.  \medskip

In regard to the maximum rate of growth of the classical enstrophy, it
was found that for $\alpha \rightarrow (1/4)^+$ the exponent
$\gamma_1$ in estimate \eqref{eq:dEdt} becomes unbounded, cf.~Figure
\ref{fig:sigmagamma1}(b), which together with the computational evidence
obtained for $\alpha \in [0,1/4]$ suggests that $d\E/dt$ may be
unbounded for $\alpha$ in this range. This would indicate that for
$\alpha \in [0,1/4]$ system \eqref{eq:fburgers} is not even locally
well posed in $H^1(\I)$.
\medskip

Concerning the maximum rate of growth of the fractional enstrophy, a
surprising result was obtained for $\alpha \in [0,3/4)$, where the
exponent in the upper bound on $d\Ea/dt$ was found to be independent
of $\alpha$ (cf.~Figure \ref{fig:sigmagammaa}(b)). This indicates
that, unlike in the case of the classical enstrophy, in this range of
$\alpha$ the problem does not become more singular with the decrease
of $\alpha$ and this is in fact also reflected in the maximizers
{$\tuEabar$} becoming more regular as $\alpha \rightarrow 0$.
{In addition, this also suggests that it should be possible to
  obtain rigorous bounds on $d\Ea / dt$ valid for $\alpha \le 3 / 4$,
  although they would likely need to be derived using techniques other
  than those employed in the proof of Theorem \ref{dEadt}.}  \medskip

{It should be emphasized that although most of the individual
  inequalities used in the proofs of Theorems \ref{dEdt} and
  \ref{dEadt} are known to be sharp, the fact that the upper bounds in
  \eqref{eq:dEdt} and \eqref{eq:dEadt} were found to be sharp as well
  is not trivial. This is because, in general, these individual
  inequalities may be saturated by {\em different} fields which may
  belong to different function spaces and hence it is not obvious
  whether sharpness is preserved when these inequalities are
  ``chained'' together to form estimates \eqref{eq:dEdt} and
  \eqref{eq:dEadt}.  \medskip }

On the methodological side, it ought to be emphasized that
gradient-based iterations \eqref{eq:iter} may only identify {\em
  local} maximizers and in general it is not possible to ascertain
whether these maximizers are also global. However, our careful search
based on the continuation approach (cf.~Section
\ref{sec:continuation}) and, independently, using several different
initial guesses $u^0$ did not reveal any additional maximizers (other
than the maximizers obtained via a trivial rescaling of the solutions
as discussed in detail in \cite{ap11a}). An exception to this was the
solution of the maximization problem \eqref{eq:maxRa} for small
$\alpha$ and {$\bar{\E}_\alpha$} where a branch of maximizers
such that $\R_{\Ea}({\tuEabar}) < 0$ was also found. The presence
of this additional branch appears related to the degenerate nature of
the maximization problem \eqref{eq:maxRa} which for $\alpha = 0$ has
an uncountable infinity of trivial solutions (cf.~Section
\ref{sec:a0}).  \medskip

As regards the research program discussed in Introduction, the key
finding of the present study is that exponents $\gamma_1$ in the upper
bound on $d\E/dt$ have the same dependence on $\alpha$ and remain
sharp in the subcritical, critical and parts of the supercritical
regime. Thus, the loss of global well-posedness as $\alpha$ is reduced
to values below $1/2$ cannot be detected based on the instantaneous
rate of growth of enstrophy $d\E/dt$. {The most important open
  problem related to the present study concerns obtaining the
  corresponding estimates for the finite-time growth of $\E(u(t))$ and
  $\Ea(u(t))$ and verifying their sharpness. This question can be
  addressed using the approach developed in \cite{ap11a} and will be
  investigated in future research.}

\section*{Acknowledgments}

The authors wish to express sincere thanks to Dr.~Diego Ayala for his
help with software implementation of the approach described in Section
\ref{sec:method} and to Professor Koji Ohkitani for helpful
discussions. DY was partially supported through a Fields-Ontario
Post-Doctoral Fellowship and BP acknowledges the support through an
NSERC (Canada) Discovery Grant.



\bibliographystyle{unsrt}

\end{document}